\newcommand{\At}{\mathsf{At}}
\newcommand{\myrightarrow}[1]{\mathrel{\raisebox{-3pt}{$\xrightarrow{#1}$}}}
\newcommand{\transition}[2]{\myrightarrow{#1|#2}}
\newcommand{\GKAT}{\mathsf{GKAT}}
\newcommand{\KAT}{\mathsf{KAT}}
\newcommand{\BA}{\mathsf{BA}}
\newcommand{\sem}[1]{\llbracket#1\rrbracket}
\newcommand{\G}{\mathcal{G}}
\newcommand{\apply}{\textup{apply}\xspace}
\newcommand{\compose}{\textsf{compose}\xspace}
\newcommand{\ssigma}{\mathfrak{s}}
\newcommand{\ttau}{\mathfrak{t}}
\newcommand{\Ifthenelse}[3]{\textnormal{\textsf{if $#1$ then $#2$ else $#3$}}}
\newcommand{\while}[2]{\textnormal{\textsf{while $#1$ do $#2$}}}
    \theoremstyle{acmplain}
    \theoremstyle{acmdefinition}
    \newtheorem{remark}[theorem]{Remark}
\begin{document}

\title{Algebras for Deterministic Computation Are Inherently Incomplete}

\author{Balder ten Cate}
\orcid{0000-0002-2538-5846} 
\email{b.d.tencate@uva.nl}
\affiliation{%
  \institution{ILLC, University of Amsterdam}
  \city{Amsterdam}
  \country{The Netherlands}
}

\author{Tobias Kappé}
\orcid{0000-0002-6068-880X} 
\email{t.w.j.kappe@liacs.leidenuniv.nl}
\affiliation{%
  \institution{LIACS, Leiden University}
  \city{Leiden}
  \country{The Netherlands}
}
\authornote{Work performed at Open University of the Netherlands and ILLC, University of Amsterdam}

\begin{abstract}
Kleene Algebra with Tests (KAT) provides an elegant algebraic framework for describing non-deterministic finite-state computations.
Using a small finite set of non-deterministic programming constructs (sequencing, non-deterministic choice, and iteration) it is able to express all \emph{non-deterministic} finite state control flow over a finite set of primitives.
It is natural to ask whether there exists a similar finite set of constructs that can capture all \emph{deterministic} computation.
We show that this is not the case.
More precisely, the deterministic fragment of KAT is not generated by \emph{any} finite set of regular control flow operations.
This generalizes earlier results about the expressivity of the traditional control flow operations, i.e., sequential composition, if-then-else and while.
\end{abstract}

\begin{CCSXML}
<ccs2012>
   <concept>
       <concept_id>10003752.10003766.10003776</concept_id>
       <concept_desc>Theory of computation~Regular languages</concept_desc>
       <concept_significance>500</concept_significance>
       </concept>
   <concept>
       <concept_id>10003752.10010124.10010125.10010126</concept_id>
       <concept_desc>Theory of computation~Control primitives</concept_desc>
       <concept_significance>500</concept_significance>
       </concept>
   <concept>
       <concept_id>10003752.10010124.10010125.10010129</concept_id>
       <concept_desc>Theory of computation~Program schemes</concept_desc>
       <concept_significance>500</concept_significance>
       </concept>
 </ccs2012>
\end{CCSXML}

\ccsdesc[500]{Theory of computation~Regular languages}
\ccsdesc[500]{Theory of computation~Control primitives}
\ccsdesc[500]{Theory of computation~Program schemes}

\keywords{control flow; program algebra; program composition; expressivity}

\maketitle

\section{Introduction}
A classic theorem by \citet{BohmJacopini1966} states that the control flow operations of \emph{sequential composition}, \emph{if-then-else} and \emph{while}, in combination with variable assignments of the form $x := c$ and variable equality checks of the form $x = c$ (for $c$ a constant), suffice to specify all types of control flow, including those built using non-local control flow constructs such as \emph{goto} and \emph{break}.

\citet{KozenTseng} showed that the same three control flow operations are \emph{not} sufficient to capture all deterministic regular computations without the use of variables.
More formally, their result can be interpreted to mean that the expressivity of \emph{Guarded Kleene Algebra with Tests} (GKAT)~\cite{GKAT} is strictly contained in the deterministic fragment of \emph{Kleene Algebra with Tests} (KAT)~\cite{KAT}.
We will show that this remains true when GKAT is extended with finitely many additional regular control flow operations.
Our main contributions are as follows:
\begin{enumerate}
    \item
    We introduce the notion of a (deterministic) \emph{regular control flow operation}, which generalizes familiar program compositions to include deterministic control flow operations not expressible in GKAT, such as \emph{repeat-while-changes}, while excluding non-local control flow.
    \item
    We consider extensions of GKAT with finitely many regular control flow operations.
    Our main result is that, for every finite set $\mathcal{O}$ of such operations, there exists a regular control flow operation that cannot be expressed by compositions of operations in $\mathcal{O}$.
    \item
    We identify an infinite but nontrivial generating set of regular control flow operations.
    \item
    We also show that deciding whether a control flow operation is deterministic is coNP-complete.
    For bounded test alphabets, the problem is solvable in polynomial time.
\end{enumerate}

We believe that the results above offer insights into the nature of deterministic control flow, as well as a new style of analysis of interest to programming language researchers.
Specifically:
\begin{itemize}
\item Prior work on the expressive power of control flow operations~\cite{peterson-kasami-tokura-1973,Kosaraju} focuses on known primitives (\emph{if-then}, \emph{while-do}, \emph{break}, \emph{goto}).
By comparison, our results provide a broader perspective: not only is while-based control flow incomplete, so are all languages consisting of regular control flow operations.
This negatively resolves a question of \citet{bogaerts2024preservation}, namely whether the deterministic fragment of KAT is generated by finitely many term-definable operations.

\item Our results demonstrate that KAT and GKAT, with their relatively simple theory, provide a useful vehicle to formalize and answer questions about (deterministic) control flow.
We believe our work also provides a good starting point for posing and studying different kinds of (non-local) flow control, such as the \emph{continue} statement, exceptions, and algebraic effects.

\item
KAT and GKAT are useful tools for program analysis and optimization through decidable equational reasoning~\cite{kozen-patron-2000,anderson-foster-guha-etal-2014}.
However, our understanding of GKAT is impeded by the absence of a characterization of its automata~\cite{GKAT,GKATBisim}.
Our main result provides a fresh angle on this question.
\end{itemize}

\paragraph{Outline}
\Cref{sec:overview} provides context and motivation.
\Cref{sec:kat} reviews Kleene Algebra with Tests.
In \Cref{sec:determinism} we study \emph{determinism} properties for KAT expressions.
\Cref{sec:regular-operations} introduces the notion of a \emph{regular control flow operation}.
In \Cref{sec:main} we prove our main result, i.e., item (2) above.
In \Cref{sec:further} we discuss some variations of our setting, and related and future work in \Cref{sec:discussion}.

\section{Overview}%
\label{sec:overview}

In this section, we attempt to give an overview of our main thesis, and build intuition for its interpretation along the way.
We start by looking at (the skeleton of) a very basic programming language, and work towards a semi-formal statement of our results from there.

\subsection{An Abstract Programming Language}

\emph{Guarded Kleene Algebra with Tests}, or \emph{GKAT} for short~\cite{KozenTseng,GKAT}, is an abstraction of a \textsc{while}-like programming language, where primitive actions are given by a set $\Sigma$, and primitive tests by a set $T$.
Concretely, the set of \emph{GKAT programs} $\GKAT$ is generated by:
\begin{align*}
    \BA \ni b, c &::= \mathsf{false} \mid \mathsf{true} \mid t \in T \mid b\ \mathsf{or}\ c \mid b\ \mathsf{and}\ c \mid \mathsf{not}\ b \tag{tests}\\
    \GKAT \ni e, f &::= \mathsf{assert}\ b \mid p \in \Sigma \mid e; f \mid \Ifthenelse{b}{e}{f} \mid \while{b}{e} \tag{programs}
\end{align*}
If we want to give a denotational semantics to such programs, we need to say how the semantics of a composed program arises from its components.
Let us start at the level of tests.
If we model the possible states of our machine with a set $S$, then the natural semantics of a test would be the set of all states in which it is true.
Given a function $\tau: T \to 2^S$ that does this for each primitive test, we can compute such a semantics in a straightforward manner, as follows.
\begin{align*}
    \sem{\mathsf{false}}_\tau &= \emptyset
        & \sem{t}_\tau &= \tau(t)
        & \sem{b\ \mathsf{or}\ c}_\tau &= \sem{b}_\tau \cup \sem{c}_\tau \\
    \sem{\mathsf{true}}_\tau &= S
        & \sem{\mathsf{not}\ b}_\tau &= S \setminus \sem{b}_\tau
        & \sem{b\ \mathsf{and}\ c}_\tau &= \sem{b}_\tau \cap \sem{c}_\tau
\end{align*}

Defining the semantics of GKAT programs proper is somewhat less straightforward, but still doable.
A natural way to go about it is to model program semantics in terms of a function $f$ on machine states, which represents the effected computation: if the machine is in state $s$ when the program starts, it terminates in state $f(s)$ when the program halts.
This model is usually extended to partial functions, where $f(s)$ being undefined is interpreted to mean that the program did not terminate successfully --- either by getting stuck in a loop, or by running into an error.

Suppose that for each primitive action $p \in \Sigma$ we have a partial function $\sigma(p): S \rightharpoonup S$ that tells us whether (and how) $p$ affects the machine's state, or fails.
We can then give a semantics of programs in terms of partial functions parameterized by $\tau$ and $\sigma$.
The semantics of an assertion just checks whether the current test satisfies the state; the semantics of a primitive action refers to $\sigma$.
\begin{mathpar}
    \sem{\mathsf{assert}\ b}_\tau^\sigma(s)
        = s \quad\text{(if $s \in \sem{b}_\tau$)}
    \and
    \sem{p}_\tau^\sigma(s)
        = \sigma(p)(s)
\end{mathpar}
The semantics of the $\Ifthenelse{b}{e}{f}$ construct simply checks which subprogram needs to be executed based on the test.
Sequential composition is achieved by composition of partial functions.
\begin{mathpar}
    \sem{\Ifthenelse{b}{e}{f}}_\tau^\sigma(s)
        = \begin{cases}
          \sem{e}_\tau^\sigma(s) & s \in \sem{b}_\tau \\
          \sem{f}_\tau^\sigma(s) & s \not\in \sem{b}_\tau
          \end{cases}
    \and
    \sem{e; f}_\tau^\sigma(s)
        = \sem{f}_\tau^\sigma(\sem{e}_\tau^\sigma(s))
\end{mathpar}
Finally, the semantics of the $\while{b}{e}$ construct checks whether $b$ is satisfied by the current state; if not, then the output state is unchanged; if it is, then $e$ is applied to the input state, and the loop starts again.
Given this intuition, we can then define the semantics as the smallest partial function that satisfies the following recursive equation:
\begin{mathpar}
    \sem{\while{b}{e}}_\tau^\sigma(s)
        = \begin{cases}
          \sem{\while{b}{e}}_\tau^\sigma(\sem{e}_\tau^\sigma(s)) & s \in \sem{b}_\tau \\
          s & s \not\in \sem{b}_\tau
          \end{cases}
\end{mathpar}

\begin{remark}%
\label{remark:ill-founded}
Formally,  $\sem{\while{b}{e}}_\tau^\sigma$ is the least fixed point of the function that sends a partial function $f: S \rightharpoonup S$ to the partial function $f': S \rightharpoonup S$ such that $f'(s) = f(\sem{e}_\tau^\sigma(s))$ when $s \in \sem{b}_\tau$, and $f'(s) = s$ otherwise.
This is a Scott-continuous function on the directed-complete partial order on partial functions on $S$ where $f \leq g$ when for all $s \in S$ we have that $f(s) = g(s)$ whenever $f(s)$ is defined, and therefore the least fixed point exists uniquely by Kleene's fixpoint theorem.
\end{remark}

\subsection{The Road Less Traveled}
So far, GKAT and its denotational semantics have been fairly straightforward.
The only element that is somewhat non-standard is that we have not fixed a set of primitive tests and actions.
Instead, we have parametrized the semantics in terms of the interpretation of these primitives on some abstract set of machine states $S$.
Our language is thus purely about how the truth or falsehood of the tests determine which actions are executed, and in which order --- i.e., control flow.

At this point, we could choose to fix a set of machine states, introduce common programming language concepts like variables and I/O, and start studying the properties of the resulting language.
But instead, let us take advantage of this abstraction, and use GKAT as a vehicle for studying control flow.
If we adopt this point of view, then the semantics outlined above is best described as a dependently typed function, whose signature could be written as follows:
\[
    \sem{-}: \GKAT \to \forall S : \mathsf{Set}, (\underbrace{T \to 2^S}_{\text{test interp. $\tau$}}) \to  (\underbrace{\Sigma \to (S \rightharpoonup S)}_{\text{action interp. $\sigma$}}) \to (S \rightharpoonup S)
\]
In other words, $\sem{-}$ is a function that takes an abstract program $e$ from $\GKAT$, and returns for all sets $S$ a partial function $\sem{e}_\tau^\sigma$ on $S$, determined by an interpretation of the primitive tests $\tau: T \to 2^S$ and an interpretation of the primitive actions $\sigma: \Sigma \to (S \rightharpoonup S)$.
Intuitively, $\sem{e}$ is a description of the control flow implemented by $e$, and can determine the effect that control flow has as a function of the primitives.
The syntax of $\GKAT$ gives us a structured way to compose control flow.

\smallskip
One of the benefits of focusing on control flow is that it allows us to reason about equivalence of control flow expressions.
For instance, it should be clear that for all $e, f \in \GKAT$ and $b \in \BA$, we have that $\sem{\Ifthenelse{b}{e}{f}} = \sem{\Ifthenelse{\mathsf{not}\ b}{f}{e}}$,\footnote{%
    We assume dependent functional extensionality, so this asserts that $g = \Ifthenelse{b}{e}{f}$ and $h = \Ifthenelse{\mathsf{not}\ b}{f}{e}$ denote the same dependently typed function: for all sets $S$, $\tau: T \to 2^S$ and $\sigma: \Sigma \to (S \rightharpoonup S)$, it holds that $\sem{g}_\tau^\sigma = \sem{h}_\tau^\sigma$.
}
which is a property of conditional branches that programmers understand intuitively.
These equalities also go beyond what is easily provable based on the semantics; for instance, one can show that for all $e \in \GKAT$ and $b, c \in \BA$:
\[
    \sem{\while{b\ \mathsf{or}\ c}{e}}
        = \sem{\while{c}{e};\ \while{b}{(e;\ \while{c}{e})}}
\]
While proving these equalities by hand is possible, it can be cumbersome.
Fortunately, equivalence according to $\sem{-}$ is decidable --- as a matter of fact, it is even fixed-parameter tractable.
\begin{theorem}{\textnormal{\cite{GKAT}}}%
\label{theorem:gkat-checking}
Given $e, f \in \GKAT$, it is decidable whether $\sem{e} = \sem{f}$.
If we fix the number of primitive tests, it can be done in nearly linear time as a function of the size of $e$ and $f$.
\end{theorem}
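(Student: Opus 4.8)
\section*{Proof proposal}

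The plan is to reduce the semantic question to a purely combinatorial one about finite automata, in three steps: (i) characterize $\sem{e} = \sem{f}$ as equality of \emph{guarded-string languages}; (ii) compute, from each expression, a small \emph{deterministic} automaton accepting this language; and (iii) decide language equality of deterministic automata by a coinductive bisimulation check whose cost is near-linear once the test alphabet is fixed.

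First I would fix the set of \emph{atoms} $\At = 2^T$ (complete descriptions of the truth values of the primitive tests) and define a compositional \emph{language semantics} $\G(e)$ assigning to each $e \in \GKAT$ a set of \emph{guarded strings} $\alpha_0 p_1 \alpha_1 \cdots p_n \alpha_n$ with $\alpha_i \in \At$ and $p_i \in \Sigma$. Sequencing becomes a ``fusion'' product that glues strings agreeing on the shared atom; $\Ifthenelse{b}{e}{f}$ becomes a guarded union selecting strings whose first atom satisfies (resp.\ refutes) $b$; and $\while{b}{e}$ becomes the countable union of its finite unrollings, mirroring the least fixed point of \Cref{remark:ill-founded} at the level of languages. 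The crux of this step is the claim that $\sem{e} = \sem{f}$ (equality of the dependently typed functions, over all $S$, $\tau$, $\sigma$) holds \emph{iff} $\G(e) = \G(f)$. Soundness ($\G(e)=\G(f) \Rightarrow \sem{e} = \sem{f}$) follows by induction, reading each guarded string as a trace and checking that $\sem{e}_\tau^\sigma(s)$ is determined by which trace of $\G(e)$ the state $s$ follows. For the converse I would exhibit a single \emph{free} interpretation that separates distinct languages: take $S$ to be the set of guarded strings, let $\sigma(p)$ append $p$ together with a fresh atom, and let $\tau$ decode the trailing atom of a string; then the input--output behaviour on this $S$ records enough of $\G(e)$ to recover it, so $\sem{e} = \sem{f}$ forces $\G(e) = \G(f)$.

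Next I would give a Thompson-style translation of each $e$ into a finite automaton $A_e$ of size $O(|e|)$ whose transitions are labeled by atoms and that is \emph{deterministic and total}: in each state, every atom $\alpha$ either leads to acceptance, to a fixed ``reject'' sink, or to exactly one successor state via one action $p$. Sequencing redirects the accepting exits of $A_e$ into the start of $A_f$; $\Ifthenelse{b}{e}{f}$ splits on the first atom; and the while loop is realized by routing the accepting exits of the body back to the loop test --- here one must verify that this finite automaton accepts precisely the unrolled union $\G(\while{b}{e})$, i.e.\ that its coalgebraic behaviour implements the least fixed point rather than some larger solution. Because these automata are deterministic, $\G(e) = \G(f)$ is equivalent to bisimilarity of the initial states of $A_e$ and $A_f$, which I would decide with a Hopcroft--Karp union-find procedure that merges pairs of states reached along matching atoms and reports failure on a mismatch of acceptance or rejection. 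Fixing the number of primitive tests bounds $|\At| = 2^{|T|}$ by a constant, so each state has constantly many outgoing transitions and the whole procedure runs in $O(|e| + |f|)$ up to the inverse-Ackermann factor of union-find, yielding the near-linear bound.

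The main obstacle is the equivalence $\sem{e} = \sem{f} \iff \G(e) = \G(f)$, and specifically its right-to-left (completeness) direction: one has to pin down a language semantics for \emph{while} that exactly matches the least-fixed-point partial-function semantics, so that neither premature termination (a ``stuck'' assertion) nor divergence (an infinite loop) --- both of which collapse to ``undefined'' on partial functions --- is confused with normal termination, and then check that the free interpretation above genuinely realizes this language. Establishing that the finite while-gadget in the automaton computes the same least fixed point (rather than an unintended greatest fixed point that would also accept divergent behaviour) is the corresponding subtlety on the automaton side; since acceptance requires reaching an accepting exit in finitely many steps, divergent runs are excluded automatically, but this must be verified. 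The bisimulation and complexity arguments are then routine.
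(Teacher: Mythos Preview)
The paper does not give its own proof of this theorem: it is stated with a citation to~\cite{GKAT} and no argument follows. Your three-step plan (reduce $\sem{e}=\sem{f}$ to equality of guarded-string languages; build a linear-size deterministic automaton by a Thompson-style construction; decide equality by Hopcroft--Karp bisimulation with $|\At|$ treated as a constant) is exactly the strategy of the cited reference, so at the level of structure your proposal matches what a full proof would do.

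One detail in your completeness step is not quite right and would need fixing. The GKAT semantics $\sem{-}$ quantifies only over interpretations where each $\sigma(p)$ is a \emph{partial function}, so your free interpretation must respect this. Taking $S$ to be all guarded strings and letting $\sigma(p)$ ``append $p$ together with a fresh atom'' does not define a partial function (which atom?), and if you pick atoms nondeterministically you have left the class of admissible interpretations. The standard repair, visible in the proof of \Cref{prop:language-relational-equiv} later in this paper, is to work one guarded string at a time: given a witness $w$ to $\G(e)\neq\G(f)$, let $S$ be the set of guarded-string prefixes of $w$, set $\tau(t)$ to be the prefixes whose last atom contains $t$, and let $\sigma(p)$ extend a prefix by $p$ and the next atom of $w$ (undefined if that is not a prefix). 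Each $\sigma(p)$ is then functional, and one checks that $\sem{e}_\tau^\sigma$ and $\sem{f}_\tau^\sigma$ disagree on the first atom of $w$. With this correction the rest of your outline goes through.
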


The upshot, from a programmer's point of view, is that it is possible (and indeed feasible) to verify the correctness of certain program transformations, namely the ones that ``rearrange'' control flow constructs.
This does not contradict the fact that general program equivalence is undecidable, precisely because we abstract the primitives: GKAT has no data structures that could be used to simulate, e.g., a two-counter machine or a push-down automaton.

\subsection{Expressivity}
A natural question to ask, then, is what kind of control flow can be encoded in this language.
As a first and --- in our opinion reasonable --- guess, one might propose that this language should be able to express all control flow that is ``programmable''.
After all, we have the usual programming constructs at our disposal, so if we know how the primitive tests should affect the actions executed, we should be able to write a program in $\GKAT$ that implements this.
Surprisingly, this is not the case: there are types of control flow that could be built with specific primitives, but which cannot be the semantics of an expression in $\GKAT$.
The following demonstrates this more formally.

\begin{definition}%
\label{definition:repeat-while-changes-semantics}
Let $t \in T$ and $p \in \Sigma$.
We write $F$ for the program that iterates $p$ until the truth value of $t$ does not change.
Given an interpretation consisting of a set $S$, $\tau: T \to 2^S$, and $\sigma: \Sigma\to (S \rightharpoonup S)$, $F_\tau^\sigma$ can be specified using mutually recursive functions $u$ and $v$ on $S$ defined as follows:
\begin{mathpar}
    u(s) =
        \begin{cases}
        s & s \not\in \tau(t) \\
        v(\sigma(p)(s)) & s \in \tau(t)
        \end{cases}
    \and
    v(s) =
        \begin{cases}
        s & s \in \tau(t) \\
        u(\sigma(p)(s)) & s \not\in \tau(t)
        \end{cases}
\end{mathpar}
$F_\tau^\sigma$ is then given by $F_\tau^\sigma(s) = v(\sigma(p)(s))$ if $s \in \tau(t)$, and $F_\tau^\sigma(s) = u(\sigma(p)(s))$ otherwise. Formally, here, $v$ and $u$ are
 the least fixed point of the above system of equations (in the directed complete partial order of \emph{pairs} of partial functions on $S$), analogous to what was done in \Cref{remark:ill-founded}.
\end{definition}

It seems reasonable to expect at first that the program $F$ should be expressible in GKAT --- after all, it can be expressed as a partial function, which seems to be definable parameterized by an interpretation of tests and actions.
In that light, the following might be surprising.

\begin{theorem}%
\label{thm:impossible}
There is no GKAT expression $e$ such that $\sem{e} = F$.
\end{theorem}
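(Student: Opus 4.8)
The plan is to translate the semantic equation $\sem{e} = F$ into a statement about the finite deterministic transition system of derivatives underlying $e$, and then to derive a contradiction from the structure that $\GKAT$ imposes on such systems. First I would record what $F$ does as a reactive process. Reading off the mutually recursive definition, $F$ applies $p$ repeatedly and, after each application, compares the truth value of $t$ with its value just before that application: it continues exactly while this value keeps changing, and halts at the first moment that two consecutive observations agree (always performing at least one application). With a single test there are two atoms, $\alpha \models t$ and $\beta \not\models t$, so the atoms observed before halting must strictly alternate $\alpha\beta\alpha\beta\cdots$; in particular, on the purely alternating input $F$ diverges.

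Next I would use the interpretation-independent guarded-string semantics that underlies \Cref{theorem:gkat-checking}: the equation $\sem{e} = F$ holds over all interpretations $(S,\tau,\sigma)$ if and only if the finitely many Brzozowski-style derivatives of $e$ form a deterministic system that, on each atom read, either halts, fails, or emits $p$ and moves to another derivative, in a manner matching the process above. Two observations then drive the argument. First, since $F$ diverges on $\alpha\beta\alpha\beta\cdots$, the run of $e$ on this stream must emit forever and hence enter a cycle $C$ of derivatives. Second, because after emitting $p$ the next atom is entirely unconstrained (some interpretation sends $\sigma(p)$ to a state of either atom), the derivative reached immediately after reading an atom $a$ must halt on input $a$ and emit on input $\bar a$; its halt-set is thus the singleton $\{a\}$, recording the previous atom. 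Along $C$ the atoms alternate, so $C$ necessarily contains one derivative whose only successful exit is on $\alpha$ and another whose only successful exit is on $\beta$: a single cycle carrying two distinct, differently-guarded halting exits.

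The heart of the proof is then a structural lemma asserting that the derivative system generated by any $\GKAT$ expression can never exhibit such a configuration. I would prove this by induction on $e$, maintaining an invariant to the effect that every cycle of derivatives contains at most one state at which the program can terminate successfully, and that this termination is governed by a single guard. Intuitively, each $\while{b}{\cdot}$ contributes exactly one successful exit, tested at a unique loop head, while $\mathsf{assert}$, primitive actions, $\Ifthenelse{}{}{}$, sequencing, and loop nesting compose these exits so that no two successfully-terminating states ever come to lie on a common cycle. Matching this invariant against the two-exit cycle forced by $F$ in the previous paragraph yields the contradiction.

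I expect this structural step to be the main obstacle. As noted in the introduction, $\GKAT$ has no known clean characterization of which automata arise from expressions, so the invariant cannot be imported off the shelf: it must be formulated directly on the syntax and shown to be preserved by each construct. The delicate cases are sequential composition and loop nesting, where the derivative systems of the components are glued and spurious cycles are easiest to create; here one must check that a back-edge of an inner loop always returns control to a single dominating head rather than stitching two differently-guarded exits into one strongly connected component. This mirrors, in the propositional setting, the classical failure of the Böhm–Jacopini theorem for \textsc{while}-programs without auxiliary variables~\cite{KozenTseng}.
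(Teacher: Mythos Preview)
Your plan is sound in outline, but it is worth noting that it takes a different route from the paper.  The paper's proof of \Cref{thm:impossible} is essentially a reduction: using \Cref{prop:language-relational-equiv} it moves from the relational identity $\sem{e}=F$ to the language identity $L(e)=L(f)$ for the explicit KAT expression $f$ encoding $F$, then takes a Brzozowski derivative (using that both GKAT and KAT are closed under derivatives) to peel off the left disjunct $t p (\overline{t} p t p)^* (t + \overline{t} p \overline{t})$, and finally invokes the cited result~\cite[Lemma~D.2]{GKATBisim} that no GKAT expression has this language.  Your proposal instead attempts a self-contained proof: extract the derivative automaton, locate a cycle forced by the divergent alternating run, observe two differently-guarded accepting states on that cycle, and then prove a structural invariant on GKAT derivative systems that forbids this.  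In effect you are proposing to re-derive the content of the cited lemma from scratch.  Each approach has its merits: the paper's is short and modular, yours would be more illuminating but is only a plan.

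The genuine gap is the structural lemma, and your invariant as stated (``every cycle of derivatives contains at most one state at which the program can terminate successfully, and that this termination is governed by a single guard'') is probably not the right induction hypothesis.  For $\while{b}{e}$ the derivative automaton glues $e$'s accepting states back to the loop head, which can merge several of $e$'s SCCs into one larger SCC, and the inductive information you have about $e$'s cycles does not obviously transfer.  The literature on GKAT automata (cf.~\cite{GKAT,GKATBisim}) formulates the needed property in terms of a \emph{nesting} or \emph{reducibility} condition on the underlying flow graph --- roughly, that every SCC has a dominating entry state through which all back-edges pass --- which is strictly stronger than ``one accepting state per cycle'' but is what actually survives the inductive cases (especially sequencing with a loop, and nested loops).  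Your weaker conclusion does follow from it, so the argument can be completed, but you should expect to strengthen the invariant to something of this shape before the induction closes; the sequential-composition and while cases you flag as delicate are indeed where the naive invariant breaks.
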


Formally, \Cref{thm:impossible} follows from a result in~\cite[Appendix D]{GKATBisim}; we will explain more in Section~\ref{sec:determinism}.
A different program inexpressible in GKAT forms the main result of~\cite{KozenTseng}; part of the reason why this program is not expressible can already be found in~\cite{ashcroft.manna:translation,knuth-floyd-1971}, although this is not formalized in KAT\@.

\subsection{Possible Workarounds}

The intuition to~\Cref{thm:impossible} is that loops in GKAT do not allow us to express a loop halting condition like ``the truth value of $t$ did not change when we ran $p$''.
To achieve this, we would have to remember whether $t$ was true before executing $p$ again, for instance by setting a variable:
\[
    \begin{array}{l}
    \Ifthenelse{t}{x := 1}{x := 0}; \\
    p;\ \textsf{while}\ {(x = 0\ \mathsf{and}\ t)\ \mathsf{or}\ (x = 1\ \mathsf{and}\ \mathsf{not}\ t)}\ \textsf{do}
        \ \{\ {p;\ \Ifthenelse{t}{x := 1}{x := 0}}\ \}
    \end{array}
\]
The problem with this approach is that, in addition to assuming the existence of tests in $T$ like $x = 0$ and actions in $\Sigma$ like $x := 1$, this program can only be correct if we know that $t$ is independent of this variable, and that $p$ does not affect it.
By extension, equivalence of the above to any other program is really only meaningful if the other program is interpreted under the same assumptions.
This means that equivalence checking in the style of \Cref{theorem:gkat-checking} is not directly applicable.

A different way to achieve the same effect is to use non-local control flow.
For instance, if GKAT had a way to terminate a \textsf{while} loop without falsifying the loop condition, such as the \textsf{break} statement found in many imperative languages, we could implement our program as follows:
\[
    \begin{array}{l}
    \textsf{if $t$ then}\ \{\ \while{t}{\{\ p;\ \Ifthenelse{t}{\textsf{break}}{p}\ \}}\ \} \\
    \textsf{else}\ \{\ \while{\mathsf{not}\ t}{\{\ p;\ \Ifthenelse{t}{p}{\textsf{break}}\ \}}\ \}
    \end{array}
\]
This approach is also problematic, because the semantics of a \textsf{break}-statement depends on the loop where it occurs, and is therefore not compositional (in a strict sense, following either the language-based or the relational semantics).
Although extending the semantics of GKAT to account for \textsf{break} is doable~\cite{zhang-etal-2024}, such a semantics also violates useful equivalences like
\[
    \forall e \in \GKAT, b \in \BA,
    \quad
    \sem{\while{b}{e}}
        = \sem{\Ifthenelse{b}{\{\ e;\ \while{b}{e}\ \}}{\textsf{assert true}}}
\]
simply because $e$ may contain a \textsf{break} statement, which does not have a meaning outside the loop.

If we want to augment the applicability of \Cref{theorem:gkat-checking} to include programs such as $F$, we are thus left at an impasse: introducing state means making additional assumptions about our programs, while an extension with non-local control flow is not necessarily conservative.

One way to break this impasse could be to add a new control flow construct that mirrors the behavior modeled by $F$, say ``\textsf{repeat $e$ while-changes $b$}''.
The semantics of this operator could be established compositionally using $\sem{e}$ and $\sem{b}$, analogous to how $F$ is defined using $\sigma(p)$ and $\tau(t)$ in \Cref{definition:repeat-while-changes-semantics}.
Such an ``extended GKAT'' would be more expressive, but also begs the question: is the resulting language is expressively complete?
If not, is it possible that \emph{any} extension has some kind of ``blind spot'' in terms of control flow, just like how $F$ is out of reach for GKAT proper?
Our results will imply that
the answers to these questions are \emph{no} and \emph{yes}, respectively.

\section{Kleene Algebra with Tests (KAT)}%
\label{sec:kat}

Up to this point, we have been informal about when control flow is ``programmable''.
Without a more precise definition, we cannot properly express our main result.
We thus set out to formalize this notion, and to that end we must discuss Kleene Algebra with Tests (KAT)~\cite{KAT}.
Unlike GKAT, KAT can express \emph{non-deterministic} control flow, where the next program action is not uniquely determined by the tests.
This generalization allows us to express forms of \emph{deterministic} control flow not covered by GKAT, including the program $F$.
KAT expressions that happen to be deterministic (or \emph{function-preserving}) are therefore a reasonable target in terms of expressivity.

\subsection{Syntax and Intuition}
The syntax of KAT is built on two levels using primitive tests from $T$ and actions from $\Sigma$, just like in GKAT\@.
The lower level is given by $\BA$, as before.
The upper level consists of assertions $b \in \BA$, primitive actions $p \in \Sigma$, and several operators that resemble those of regular expressions:
\[
    \KAT \ni e, f ::= b \in \BA \mid p \in \Sigma \mid e + f \mid e \cdot f \mid e^*
\]
Intuitively, the program $e + f$ should be read as ``choose non-deterministically between running $e$ or $f$'', while $e^*$ is a program that runs $e$ some non-deterministic (possibly zero) number of times.
The program $e \cdot f$ (we usually drop $\cdot$, writing $ef$) denotes the sequential composition of $e$ and $f$.

Based on these intuitions one can make the case that KAT captures deterministic program compositions~\cite{KAT}.
For instance, $\Ifthenelse{b}{e}{f}$ can be modeled by $b e + \overline{b} f$, which non-deterministically chooses between asserting $b$ and executing $e$, or asserting the opposite and running $f$; since only one of these branches ``survives'', the program is deterministic.
Similarly, $\while{b}{e}$ can be modeled by ${(b e)}^* \overline{b}$, which repeats $e$ some number of times, but only keeps the runs where (1)~$b$ is true before every instance of $e$, and (2)~$b$ is false when the repetition halts.

\subsection{Relational Semantics}%
\label{sec:relational-semantics-kat}

We can develop a denotational semantics to KAT programs, just like we did for GKAT\@.
The only important difference is that, this time around, we need to account for nondeterminism by using \emph{relations} on the set of states $S$ instead of partial functions.
To further reflect this, we can also generalize the interpretation $\sigma$ to return a relation on $S$, instead of a partial function.

\begin{definition}[\cite{kozen-smith-1996}]
Suppose we have a valuation of tests $\tau: T \to 2^S$, and for each $p \in \Sigma$ a relation $\sigma(p) \subseteq S \times S$ that represents the (possibly non-deterministic) behavior of $p$.
The semantics of $e \in \KAT$ under this interpretation is a relation $\mathcal{R}\sem{e}_\tau^\sigma$, defined inductively:
\begin{align*}
    \mathcal{R}\sem{b}_\tau^\sigma &= \{ \langle s, s \rangle : s \in \sem{b}_\tau \}
        & \mathcal{R}\sem{e + f}_\tau^\sigma &= \mathcal{R}\sem{e}_\tau^\sigma \cup \mathcal{R}\sem{f}_\tau^\sigma
        & \mathcal{R}\sem{e^*}_\tau^\sigma &= (\mathcal{R}\sem{e}_\tau^\sigma)^* \\
    \mathcal{R}\sem{p}_\tau^\sigma &= \sigma(p)
        & \mathcal{R}\sem{e f}_\tau^\sigma &= \mathcal{R}\sem{e}_\tau^\sigma \circ \mathcal{R}\sem{f}_\tau^\sigma
\end{align*}
In the above, we write $R^*$ to denote the reflexive-transitive closure of a relation $R$ on $S$.

As before, we can choose to think of this semantics as a dependently typed function:
\[
    \mathcal{R}\sem{-}: \KAT \to \forall S : \mathsf{Set}, (\underbrace{T \to 2^S}_{\text{test interp. $\tau$}}) \to  (\underbrace{\Sigma \to 2^{S \times S}}_{\text{action interp. $\sigma$}}) \to 2^{S \times S}
\]
\end{definition}

\begin{example}\label{ex:relational-semantics}
Consider the test alphabet $T = \{t\}$ and the action alphabet $\Sigma = \{p,q\}$, and let $S = \{s_1,s_2,s_3\}$, with $\tau: T \to 2^S$ such that $\tau(t) = \{s_1\}$, and $\sigma: \Sigma \to 2^{S \times S}$ such that $\sigma(p) = \{\langle s_1,s_2\rangle\}$ and $\sigma(q) = \{\langle s_1,s_3\rangle,\langle s_2,s_3\rangle\}$.
Then $\sem{t \cdot q}_\tau^\sigma = \{\langle s_1,s_3\rangle\}$ and $\sem{p + q}_\tau^\sigma = \{\langle s_1,s_2\rangle, \langle s_1, s_3\rangle, \langle s_2,s_3\rangle\}$.
\end{example}

The embedding of GKAT constructs within KAT is sound w.r.t.\ this semantics --- e.g., for any $t \in T$ and $p, q \in \Sigma$, we have that $\mathcal{R}\sem{t p + \overline{t} q}_\tau^\sigma = \sem{\Ifthenelse{t}{p}{q}}_\tau^\sigma$,\footnote{Here, and throughout this paper, we will treat partial functions as (functional) relations whenever convenient.} provided $\sigma(p)$ and $\sigma(q)$ are partial functions.
Thus, the relation on the left is functional, as long as $\sigma(p)$ and $\sigma(q)$ are.
The same holds for the embedding of $\while{t}{p}$ as $(t p)^* \overline{t}$.
This means that KAT can express the same kind of deterministic control flow as GKAT\@.
However, KAT is more powerful, because it can also express the program $F$ from before.
Specifically, the following holds when $\sigma(p)$ is functional:
\[
    \mathcal{R}\sem{t p (\overline{t} p t p)^* (t + \overline{t} p \overline{t}) + \overline{t} p (t p \overline{t} p)^* (\overline{t} + t p t)}_\tau^\sigma = F_\tau^\sigma
\]
We can dissect the expression above a little bit more to gain a better understanding of why this is true.
The non-deterministic branch $t p (\overline{t} p t p)^* (t + \overline{t} p \overline{t})$ represents the possibility that $t$ starts out being true, after which $p$ is run ($tp$).
At that point, a successful run may bounce back and forth between $t$ being false and then true as $p$ is run some number of times ($(\overline{t}ptp)^*$).
Finally, the program finishes either because $t$ is true, or because $t$ is false and remains so after another instance of $p$ ($t + \overline{t}p\overline{t}$).
The other half of the program covers the case where $t$ starts out false.

In the next section, we will make the case that the operators on partial functions expressible as KAT programs $\mathcal{R}\sem{e}$ that preserve partial functions are a reasonable notion of programmable control flow.
Before we can do that, we must first discuss an alternative semantics of KAT\@.

\subsection{Language Model}

The relational (resp.\ partial function) semantics of KAT (resp.\ GKAT) is parameterized by an interpretation of the primitive symbols.
This is useful when studying concrete cases, but it may also complicate arguments about interpretations in general, which is usually what we want to do.
To this end, it is more convenient to consider a \emph{language model}~\cite{KAT}, which turns out to be closely related to the parameterized model.
In the language model, a (G)KAT expression given a semantics in terms of \emph{guarded strings}, which represent successful (symbolic) executions of a program, recording which properties were true at which point, and the actions executed in between.\looseness=-1 

We write $\At_T$ for the atoms of the free Boolean algebra generated by $T$ --- i.e., $\At_T = 2^T$.
We will drop the subscript when $T$ is clear from context or immaterial.
One can think of the elements of $T$ as Boolean variables and each $\alpha \in \At_T$ as a truth assignment, where $t \in \alpha$ precisely when $t$ is true.

\begin{definition}[Guarded languages~\cite{cohen-kozen-smith-1996}]%
A \emph{guarded string} (over test alphabet $T$ and action alphabet $\Sigma$) is an element of the regular language $\At_T \cdot (\Sigma \cdot \At_T)^*$.
A \emph{guarded language} is a set of guarded strings.
We will denote  the set of all  regular languages of guarded strings over the test alphabet $T$ and action alphabet $\Sigma$ by $\G(\Sigma,T)$, or simply $\G$ when $\Sigma$ and $T$ do not matter.
\end{definition}

Intuitively, a guarded string like $\alpha p \beta q \gamma$ represents a program execution that starts in a situation where (only) the tests in $\alpha$ are true, and leads to a point where the tests in $\gamma$ are true.
Along the way, $p$ is executed; this takes us to a point where the tests in $\beta$ hold, after which $q$ is run.

To form the set of guarded strings representing all possible executions of a program, we need a way to compose them while respecting their internal structure.
This is accomplished as follows.

\begin{definition}[Guarded composition~\cite{cohen-kozen-smith-1996}]%
\label{def:guarded-comp}
If $w$ is a guarded string of the form $w'\alpha$ and $x$ is a guarded string of the form $\alpha x'$ (in other words, if the last symbol of $w$ coincides with the first symbol of $x$), then the \emph{guarded composition} of $w$ and $x$, denoted $w\diamond x$, is the guarded string $w' \alpha x'$.
This operation naturally extends to guarded languages $L$ and $K$, as follows:
\[
    L \diamond K =
        \{ w \diamond x: w \in L, x \in K \text{ such that $w \diamond x$ is well defined}\}
\]
We define for $n \in \mathbb{N}$ the repeated guarded concatenation $L^{(n)}$, and the \emph{guarded star} $L^{(*)}$ by
\begin{mathpar}
    L^{(0)} = \At_T
    \and
    L^{(n+1)} = L \diamond L^{(n)}
    \and
    L^{(*)} =
        \bigcup\{L^{(n)} \mid n \in \mathbb{N}\}
\end{mathpar}
\end{definition}

The idea behind guarded composition is as follows: if $w \diamond x$ is defined, then $x$ represents an execution that continues in the same context (i.e., with the same tests being true) as where $w$ left off.
The guarded composition of guarded languages lets guarded strings on the right extend any compatible guarded string on the left to form a new trace.
The guarded star then generalizes this to unbounded repetition, like how the Kleene star extends language concatenation.
These are exactly the operators we need to define the language semantics of KAT, as follows.

\begin{definition}[Language semantics of KAT~\cite{cohen-kozen-smith-1996}]
Let $\alpha \in \At_T$ and $b \in \BA$.
We write $\alpha \leq b$ when $b$ is true under the truth assignment encoded in $\alpha$.
More formally, we define $\sem{-}_{\At}: \BA \to 2^{\At_T}$ by induction on tests, as below, and write $\alpha \leq b$ when $\alpha \in \sem{b}_{\At}$.
\begin{mathpar}
    \sem{\mathsf{false}}_{\At} = \emptyset
    \and
    \sem{\mathsf{true}}_{\At} = \At_T
    \and
    \sem{t}_{\At} = \{ \alpha \in \At_T : t \in \alpha \}
    \\
    \sem{b \vee c}_{\At} = \sem{b}_{\At} \cup \sem{c}_{\At}
    \and
    \sem{b \wedge c}_{\At} = \sem{b}_{\At} \cap \sem{c}_{\At}
    \and
    \sem{\overline{b}}_{\At} = \At_T \setminus \sem{b}_{\At}
\end{mathpar}
Every $e\in \KAT(\Sigma,T)$ now denotes a guarded language $L(e) \in \G(\Sigma,T)$, defined inductively by:
\begin{align*}
    L(b) &= \{ \alpha \in \At_T : \alpha \leq b \}
        & L(e+f) &= L(e) \cup L(f)
        & L(e^*) &= {L(e)}^{(*)} \\
    L(p) &= \{ \alpha p \beta : \alpha, \beta \in \At_T \}
        & L(e \cdot f) &= L(e) \diamond L(f)
\end{align*}
\end{definition}

We hinted earlier that the language semantics of KAT is closely related to its relational semantics.
Having defined the former, we can now formally state the connection, which is that they induce the same kind of equivalences between programs.

\begin{proposition}[\cite{kozen-smith-1996}]%
\label{prop:language-relational-equiv}
For all $e, f \in \KAT$,
    $L(e) = L(f)$ iff
    $\mathcal{R}\sem{e} = \mathcal{R}\sem{f}$.
\end{proposition}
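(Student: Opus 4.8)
The plan is to prove both implications through a single \emph{evaluation} operation that extracts, from each guarded string, the pairs of states it connects under a fixed interpretation, thereby factoring the relational semantics through the language semantics. Fix $S$, $\tau : T \to 2^S$, and $\sigma : \Sigma \to 2^{S \times S}$. The key structural observation is that each state $s \in S$ satisfies exactly one atom, namely $\mathsf{at}(s) = \{ t \in T : s \in \tau(t)\} \in \At_T$, and that $s \in \sem{b}_\tau$ iff $\mathsf{at}(s) \leq b$. For a guarded string $w = \alpha_0 p_1 \alpha_1 \cdots p_n \alpha_n$ I would set
\[
\mathsf{eval}(w) = \{ \langle s_0, s_n\rangle : \exists\, s_1, \dots, s_{n-1} \in S,\ \mathsf{at}(s_i) = \alpha_i\ (\forall i),\ \langle s_{i-1}, s_i\rangle \in \sigma(p_i)\ (\forall i) \},
\]
and extend this to guarded languages by $\mathsf{eval}(L) = \bigcup_{w \in L} \mathsf{eval}(w)$. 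The central lemma is the \emph{factorization} $\mathcal{R}\sem{e}_\tau^\sigma = \mathsf{eval}(L(e))$, holding for every $e \in \KAT$ and every interpretation.

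I would prove the factorization by structural induction on $e$. The base cases $b$ and $p$ are direct from the definitions of $L(b)$, $L(p)$, $\mathsf{at}$, and $\mathsf{eval}$ (e.g.\ $\mathsf{eval}(L(p)) = \bigcup_{\alpha,\beta}\mathsf{eval}(\alpha p \beta) = \sigma(p)$), and the case $e+f$ is immediate since both $L(-)$ and $\mathsf{eval}(-)$ commute with union. The cases $e \cdot f$ and $e^*$ rest on a \emph{multiplicativity} lemma, $\mathsf{eval}(L \diamond K) = \mathsf{eval}(L) \circ \mathsf{eval}(K)$, which is exactly where atom-uniqueness does the work: a trace for $w \in L$ ending in a state $s$ composes with a trace for $x \in K$ beginning in $s$ precisely when the last atom of $w$ equals $\mathsf{at}(s)$ equals the first atom of $x$, and this is the very side condition under which $w \diamond x$ is defined. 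For the star I would then show $\mathsf{eval}(L^{(*)}) = (\mathsf{eval}(L))^*$ by matching $L^{(0)} = \At_T$ with the identity relation (note $\mathsf{eval}(\At_T) = \{\langle s,s\rangle : s \in S\}$) and $L^{(n+1)} = L \diamond L^{(n)}$ with $R^{n+1} = R \circ R^n$ via multiplicativity. The left-to-right direction of the proposition follows at once: if $L(e) = L(f)$, then $\mathcal{R}\sem{e}_\tau^\sigma = \mathsf{eval}(L(e)) = \mathsf{eval}(L(f)) = \mathcal{R}\sem{f}_\tau^\sigma$ for every interpretation.

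For the right-to-left direction I would instantiate the factorization lemma at a single \emph{canonical} interpretation from which $L(e)$ can be read back. Take $S$ to be the set of all guarded strings $\At_T \cdot (\Sigma \cdot \At_T)^*$, with $\tau^*(t) = \{ w : w \text{ ends in an atom } \alpha \text{ with } t \in \alpha\}$ and $\sigma^*(p) = \{ \langle w,\ w\, p\, \alpha\rangle : w \in S,\ \alpha \in \At_T\}$, so that $p$ appends one action--atom block and tests inspect the final atom; here $\mathsf{at}(w)$ is the last atom of $w$. Using the factorization lemma (or a short direct induction), a one-atom state $\alpha$ is related by $\mathcal{R}\sem{e}_{\tau^*}^{\sigma^*}$ exactly to those $w \in L(e)$ whose first atom is $\alpha$, so $L(e) = \{ w : \langle \alpha_w, w\rangle \in \mathcal{R}\sem{e}_{\tau^*}^{\sigma^*}\}$, where $\alpha_w$ is the first atom of $w$. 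Since $\mathcal{R}\sem{e} = \mathcal{R}\sem{f}$ as dependently typed functions, they agree at this particular interpretation, and hence $L(e) = L(f)$.

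The main obstacle is the star case, and specifically the care needed to align the guarded-star decomposition $L^{(*)} = \bigcup_n L^{(n)}$ with the reflexive-transitive-closure decomposition $R^* = \bigcup_n R^n$. This reduces to the multiplicativity lemma, whose correctness hinges entirely on the fact that a shared intermediate state determines a unique atom; verifying that non-matching atoms contribute nothing to the relational composition (mirroring the undefinedness of $\diamond$) is the one place where the two models could a priori diverge, and is therefore the crux of the argument.
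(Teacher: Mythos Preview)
Your proposal is correct. The forward direction matches the paper's argument almost exactly: your $\mathsf{eval}$ is precisely the paper's ``labeled path'' formulation (the paper writes $\mathcal{R}\sem{e}_\tau^\sigma = \{\langle s,s'\rangle : \text{there is a labeled path }\pi\text{ from }s\text{ to }s'\text{ with }w_\pi\in L(e)\}$, which is just $\mathsf{eval}(L(e))$), and you simply spell out the induction and the multiplicativity lemma that the paper waves through with ``it is not difficult to see.''

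The backward direction is where you genuinely diverge. The paper constructs, \emph{for each guarded string} $w = \alpha_1 p_1\cdots\alpha_n$, a separate finite interpretation on $S=\{1,\ldots,n\}$ with $\tau(t)=\{i : \alpha_i\leq t\}$ and $\sigma(p)=\{\langle i,i{+}1\rangle : p_i=p\}$, and checks that $w\in L(e)$ iff $\langle 1,n\rangle\in\mathcal{R}\sem{e}_{\tau_w}^{\sigma_w}$. You instead use a single global ``free'' interpretation on the set of all guarded strings, reading off $L(e)$ as the set of $w$ with $\langle \alpha_w,w\rangle\in\mathcal{R}\sem{e}_{\tau^*}^{\sigma^*}$. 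Your route is arguably cleaner (one model suffices, and the factorization lemma does the work uniformly), while the paper's per-string approach has the side benefit of producing \emph{finite} witnessing interpretations --- a feature the paper exploits immediately afterwards to note that non-determinism is always witnessed by a finite model.
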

\begin{proof}[Proof sketch]
For the forward implication, take any $\tau: S \to 2^T$ and $\sigma: \Sigma \to 2^{S \times S}$.
By a \emph{labeled path} we will mean a sequence $\pi = s_1 p_1 s_2 p_2 \ldots s_n$
with $s_i\in S$, such that $\langle s_i,s_{i+1}\rangle \in \sigma(p_i)$ for all $i<n$.
Every such labeled path $\pi$ induces a guarded string $w_\pi$, namely $w_\pi = \alpha_1 p_1 \alpha_2 p_2 \ldots \alpha_n$ where $\alpha_i=\{t\mid s_i\in\tau(t)\}$.
It is not difficult to see that, for all KAT-terms $e$, we have
\[
    \mathcal{R}\sem{e}_\tau^\sigma = \{\langle s,s'\rangle\mid \text{there is a labeled path $\pi$ from $s$ to $s'$ such that $w_\pi \in L(e)$}\}
\]
If $L(e) = L(f)$, then the above correspondence implies that $\mathcal{R}\sem{e}_\tau^\sigma = \mathcal{R}\sem{f}_\tau^\sigma$ for all $\tau$ and $\sigma$.

For the backward implication, note that a guarded string $w = \alpha_1 p_1 \alpha_2 p_2 \ldots \alpha_n$ naturally induces an interpretation on $S=\{1, \ldots, n\}$, where $\tau(t) = \{i\in S \mid \alpha_i\leq t\}$ and $\sigma(p) = \{\langle i,i+1\rangle\mid p_i=p\}$.
It is easy to see that, for all KAT expressions $e$, we have $w \in L(e)$ if and only if $\langle 1,n\rangle\in \mathcal{R}\sem{e}_{\tau_w}^{\sigma_w}$.
Given that $\mathcal{R}\sem{e} = \mathcal{R}\sem{f}$, we can then use this property to conclude that $L(e) = L(f)$.
\end{proof}

The correspondence between the language semantics and the relational semantics of KAT allows us to treat an equivalence between KAT terms in either semantics as an equivalence in the other.
This is useful because it prevents us from having to quantify over all possible interpretations of the primitives in proofs of equivalence.
Indeed, decision procedures for KAT equivalence (e.g.\ \cite{pous-2015}) typically operate on the language semantics.
The language semantics also gives rise to an automata model, which forms the basis for our proofs as well as our argument that deterministic KAT expressions are a sensible notion of ``programmable'' control flow (in \Cref{sec:determinism}).

The language semantics of GKAT is derived from that of KAT via the embedding discussed earlier, by treating $\Ifthenelse{b}{e}{f}$ as a shorthand for the KAT expression $be+\overline{b}f$, and $\while{b}{e}$ as a shorthand for the KAT expression $(be)^*\overline{b}$.
Thus every GKAT-term denotes a guarded language.

\begin{definition}[Language semantics of GKAT~\cite{GKAT,KozenTseng}]
For every GKAT term $e \in \GKAT$, we can derive a guarded language $L(e)$ inductively, as follows.
\begin{align*}
    L(b) &= \{ \alpha \in \At : \alpha \leq b \}
        & L(\Ifthenelse{b}{e}{f}) &= (L(b)\diamond L(e))+(L(\overline{b})\diamond L(f)) \\
    L(p) &= \{ \alpha p \beta : \alpha, \beta \in \At \}
        & L(\while{b}{e}) &=  (L(b)\diamond L(e))^*\diamond L(\overline{b}) \\
    L(e; f) &= L(e) \diamond L(f)
\end{align*}
\end{definition}

We will see in Section~\ref{sec:determinism} precisely in which sense GKAT is a ``deterministic'' fragment of KAT\@.

\subsection{Automata Model}

We continue our discussion of KAT by reviewing its automata model~\cite{kozen-2003,chen-pucella-2004,kozen-2017}.
This automaton model will turn out to be crucial for our technical arguments, but also to formalize our notion of ``programmable'' control flow, based on finite state.

We start with a straightforward type of automaton that can accept regular guarded languages; our presentation differs slightly from work in earlier literature, but is essentially equivalent.

\begin{definition}
A \emph{KAT automaton} is a triple $A = (Q, \delta, \iota)$, where $Q$ is a finite set of \emph{states}, $\delta: Q \times \At \to 2^{\{\textsf{accept}\} + \Sigma \times Q}$ is the \emph{transition function} and $\iota: \At \to 2^{\{\textsf{accept}\} + \Sigma \times Q}$ is the \emph{initialization function}.
The guarded language \emph{accepted} in $A$ by $\gamma: \At \to 2^{\{\textsf{accept}\} +\Sigma \times Q}$, denoted $L_A(\gamma)$, is defined as the smallest set satisfying the following rules, for all $\alpha \in \At$:
\begin{mathpar}
    \inferrule{%
        \textsf{accept}  \in \gamma(\alpha)
    }{%
        \alpha \in L_A(\gamma)
    }
    \and
    \inferrule{%
        (p, q') \in \gamma(\alpha) \\
        w \in L_A(\delta(q, -))
    }{%
        \alpha{}pw \in L_A(\gamma)
    }
\end{mathpar}
The language accepted by $A$, denoted $L(A)$, is simply the language of $\iota$ in $A$, i.e., $L_A(\iota)$.
In the sequel, we may lighten notation by simply writing $L_A(q)$ for $L_A(\delta(q, -))$.
\end{definition}

Intuitively, $\textsf{accept} \in \delta(q, \alpha)$ means that $q$ may accept the guarded string $\alpha$, while $(p, q') \in \delta(q, \alpha)$ means that $q$ can transition to $q'$ if it observes the context described in $\alpha$ and performs the action $p$.

\begin{remark}
The type of the transition function for KAT automata is somewhat awkward, and may equivalently be represented by a transition relation ${\rightarrow} \subseteq Q \times \At \times \Sigma \times Q$, accompanied by an acceptance relation ${\downarrow} \subseteq Q \times \At$.
We elect to use this representation because we want to enforce a specific type of determinism in the next section, which will simplify the transition function.

Furthermore, we equip KAT automata with an initialization function $\iota$ rather than an initial state $q_0$ (with its own transition dynamics $\delta(q_0, -)$) because the constructions we develop on (deterministic) KAT automata in sections to follow become a lot easier to specify if we can administer the initial dynamics separately.
This approach mirrors the \emph{pseudostates} used by \citet{GKAT}.
\end{remark}

We can represent a KAT automaton $(Q, \delta)$ visually, by drawing a circle for every state $q \in Q$, and edges leaving $q$ for each element of $\delta(q, \alpha)$, as in \Cref{fig:example-automaton}.
The initialization function is represented by a separate vertex drawn as $\bullet$, and the same transition notation.
We may condense the notation on edges, for instance by writing $b \mid p$ to denote a family of edges, one for each $\alpha \in \At$ with $\alpha \leq b$.

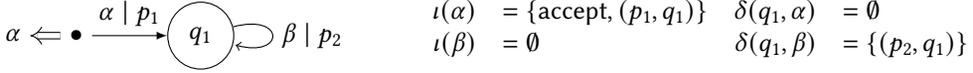
\begin{figure}
    \begin{mathpar}
        \begin{tikzpicture}[baseline]
            \node (init) {$\bullet$};
            \node[state,right=of init] (q1) {$q_1$};
            \draw (init) edge[-latex] node[above] {$\alpha \mid p_1$} (q1);
            \draw (q1) edge[-latex,loop right] node[right] {$\beta \mid p_2$} (q1);
            \node[left=4mm of init] (initacc) {$\alpha$};
            \draw (init) edge[double,double distance=2pt,-implies] (initacc);
        \end{tikzpicture}%
        \and
        \begin{array}{rlrl}
            \iota(\alpha) &= \{ \textsf{accept}, (p_1, q_1) \}
                & \delta(q_1, \alpha) &= \emptyset \\
            \iota(\beta) &= \emptyset
                & \delta(q_1, \beta) &= \{ (p_2, q_1) \}
        \end{array}
    \end{mathpar}
    \vspace{-3mm}
    \caption{%
        Visualization of a KAT automaton on $Q = \{ q_1 \}$ (left), with transitions $\delta$ and initialization $\iota$ (right).
    }%
    \label{fig:example-automaton}
\end{figure}

\begin{example}
Consider the automaton from \Cref{fig:example-automaton}.
The language accepted by $q_0$ includes the guarded strings $\alpha$ and $\alpha p_1 \beta p_2$, $\alpha p_2 \beta p_2 \beta p_2$, et cetera.
Note how this automaton is non-deterministic, in the sense that reading $\alpha$ in state $q_0$ may either cause it to terminate, or perform $p_1$ and transition to $q_1$.
We will further specify our notion of determinism in the next section.
\end{example}

Given how close KAT expressions (resp.\ KAT automata) are to regular expressions (resp.\ finite automata), it should not be surprising that the languages accepted by KAT automata are precisely those denoted by KAT expressions; this extends the well-known theorem due to \citet{kleene-1956}.

\begin{proposition}%
\label{prop:kat-kleene-thm}
Let $L$ be a guarded language.
Then $L = L(e)$ for some $e \in \KAT$ if and only if $L = L(A)$ for some KAT automaton $A$~\cite{kozen-2003}.
Indeed, for every $e \in \KAT$, there exists a KAT automaton $A$ with a number of states linear in the size of $e$ such that $L(e) = L(A)$.
\end{proposition}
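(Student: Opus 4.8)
The plan is to prove both implications as a guarded-string analogue of Kleene's theorem, with the bulk of the work—and the entire force of the linear state bound—living in the direction from expressions to automata. For that direction I would proceed by structural induction on $e$, maintaining the invariant that the constructed automaton has exactly one state per occurrence of a letter of $\Sigma$ in $e$, so that the number of states is at most $|e|$. The base cases are immediate: a test $b \in \BA$ yields $Q = \emptyset$ with $\iota(\alpha) = \{\textsf{accept}\}$ precisely when $\alpha \le b$; a primitive action $p$ yields a single state $q_p$ with $\iota(\alpha) = \{(p, q_p)\}$ for every $\alpha$ and $\delta(q_p, \beta) = \{\textsf{accept}\}$ for every $\beta$, so $L(A) = \{\alpha p \beta : \alpha, \beta \in \At\} = L(p)$. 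The inductive steps will add \emph{no} new states, and this is exactly what the separate initialization function $\iota$ (the pseudostate) is designed to buy us. For $e + f$, given automata $A_e, A_f$ on disjoint state sets, I take the disjoint union of states, keep each $\delta_e, \delta_f$ on its own part, and set $\iota(\alpha) = \iota_e(\alpha) \cup \iota_f(\alpha)$.

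For the two remaining operators I would introduce a single \emph{splicing} operation: given $\gamma \colon \At \to 2^{\{\textsf{accept}\} + \Sigma \times Q_e}$, let $\gamma \triangleright f$ be obtained from $\gamma$ by deleting $\textsf{accept}$ from each $\gamma(\alpha)$ and, whenever it was present, adding $\iota_f(\alpha)$ in its place. For $e \cdot f$ I keep $\delta_f$ on $Q_f$, replace $\delta_e(q, -)$ by $\delta_e(q, -) \triangleright f$ on $Q_e$, and set $\iota = \iota_e \triangleright f$. The point is that a guarded string is accepted only when a derivation \emph{ends} at an $\textsf{accept}$; rerouting each such $\textsf{accept}$ on atom $\alpha$ into $\iota_f(\alpha)$—which reads the \emph{same} atom $\alpha$—realizes guarded composition at the shared boundary exactly, mirroring $w'\alpha \diamond \alpha x' = w'\alpha x'$ (and the case $\alpha \in L(f)$ is handled because $\textsf{accept} \in \iota_f(\alpha)$ is carried along).

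For $e^\ast$ I splice $A_e$ into itself on the same state set $Q_e$: I let $\iota^\ast(\alpha)$ contain $\textsf{accept}$ for every $\alpha$ (supplying the base case $L^{(0)} = \At$) together with the action transitions of $\iota_e(\alpha)$, and at each state I augment every accepting transition of $\delta_e$ on atom $\alpha$ with both $\textsf{accept}$ and the action transitions of $\iota_e(\alpha)$, so that finishing one copy of $e$ at context $\alpha$ offers the choice of halting or restarting $e$ from that same context. The correctness of both spliced constructions is the crux, and I would handle it by proving strengthened invariants: for composition,
\[ L_{A_{ef}}(\gamma \triangleright f) = L_{A_e}(\gamma) \diamond L(f) \quad \text{for all } \gamma, \]
and an analogous fixpoint identity for the star, each established by matching the inductive (least-fixpoint) definition of $L_A$ against the inductive definitions of $\diamond$ and $L^{(*)}$; instantiating at $\gamma = \iota_e$ then gives $L(A_{ef}) = L(e) \diamond L(f)$ and $L(A^\ast) = L(e)^{(*)}$.

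For the converse, from a KAT automaton $A = (Q, \delta, \iota)$ I would associate to each state $q$ a variable $X_q$ standing for $L_A(q)$, governed by the equation stating that $X_q$ is the union of the single-atom tests $\alpha$ with $\textsf{accept} \in \delta(q, \alpha)$ and of the terms $\alpha\, p\, X_{q'}$ over all $(p, q') \in \delta(q, \alpha)$; solving the resulting linear system by repeated elimination via Arden's lemma (whose least solution $A^{(*)} \diamond B$ to $X = A \diamond X \cup B$ is valid since guarded languages under $\cup, \diamond, (*)$ form a Kleene algebra with $\diamond$-identity $\At$), and then reading off $L(A) = L_A(\iota)$ as the corresponding combination of the $X_q$, produces the required $e \in \KAT$. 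This direction is routine and parallels the classical automaton-to-expression construction, so I expect no real difficulty there. The main obstacle is the correctness of the star splicing: I must show that the least language accepted by the self-spliced automaton is exactly the guarded star, correctly accounting for the atoms shared between consecutive iterations and for the empty iteration $L^{(0)} = \At$ without over- or under-counting, and confirm the construction is well defined (it is, since $\iota^\ast$ and $\delta^\ast$ are given by an explicit, non-recursive rewriting of $\iota_e$ and $\delta_e$). Once these invariants are in hand, the linear bound is immediate, since no inductive step introduces a state beyond the one state contributed by each occurrence of an action letter.
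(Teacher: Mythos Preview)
The paper does not give its own proof of this proposition: it is stated with a citation to Kozen (2003), and the surrounding commentary merely points to Antimirov's partial-derivative construction as the method that yields a linear number of states. Your proposal therefore cannot be compared against a proof in the paper, only against the approach the paper alludes to.

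Your construction is sound. What you have built is the guarded analogue of the Glushkov/position automaton: one state per occurrence of an action letter, with the initialization function playing the role of the ``pseudostate'' that absorbs what would otherwise be $\varepsilon$-transitions. The splicing operation $\gamma \triangleright f$ is exactly the right device for sequencing without adding states, and your treatment of the star (keep $\textsf{accept}$, add only the \emph{action} part of $\iota_e$) correctly avoids the usual pitfall when $\iota_e$ itself accepts. The strengthened invariant $L_{A_{ef}}(\gamma \triangleright f) = L_{A_e}(\gamma) \diamond L(f)$ is the right thing to prove and goes through by induction on accepting derivations. For the converse direction, your appeal to Arden's rule is justified because guarded languages under $\cup$, $\diamond$, ${}^{(*)}$ with unit $\At$ form a Kleene algebra, in which $a^{(*)} \diamond b$ is always the least solution of $x = a \diamond x \cup b$ (no side condition on $a$ is needed, unlike the classical language-theoretic Arden lemma).

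By contrast, the Antimirov-style route the paper gestures at would define, for each $e$, a finite set of \emph{partial derivatives} (syntactic quotients) and take those as states; the linear bound then comes from a combinatorial count of partial derivatives. Your structural construction is more elementary and more directly tied to the syntax, and it makes the state bound transparent (literally one state per action occurrence). The Antimirov approach, on the other hand, tends to produce automata that are already closer to minimal and lends itself to coalgebraic presentations. Either route suffices here.
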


The second part of the above comes with the caveat that the number of \emph{transitions} in $A$ may scale exponentially in the size of $T$.
Consider for instance $t \in T \subseteq \KAT$, whose automaton contains a single state with $2^{|T|-1}$ outgoing transitions, one for each $\alpha \in \At_T$ such that $t \in \alpha$.
To keep the number of states limited, it is also crucial that KAT automata are (by default) non-deterministic; this allows the generalization of an efficient method to convert regular expressions to non-deterministic finite automata, such as the approach proposed by \citet{antimirov-1996}.

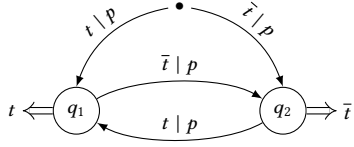
\begin{figure}\scriptsize
   \vspace{-2mm}
    \begin{tikzpicture}
        \node (init) {$\bullet$};
        \node[state,below left=of init] (q1) {$q_1$};
        \node[state,below right=of init] (q2) {$q_2$};
        \draw (init) edge[-latex,bend right] node[above,sloped] {$t \mid p$} (q1.north);
        \draw (init) edge[-latex,bend left] node[above,sloped] {$\overline{t} \mid p$} (q2.north);
        \draw (q1) edge[-latex,bend left,looseness=0.7] node[above] {$\overline{t} \mid p$} (q2);
        \draw (q2) edge[-latex,bend left,looseness=0.7] node[above] {$t \mid p$} (q1);
        \node[left=4mm of q1] (q1acc) {$t$};
        \draw (q1) edge[double,double distance=2pt,-implies] (q1acc);
        \node[right=4mm of q2] (q2acc) {$\overline{t}$};
        \draw (q2) edge[double,double distance=2pt,-implies] (q2acc);
    \end{tikzpicture}
    \vspace{-2mm}
    \caption{KAT automaton modeling the ``repeat $p$ until $t$ changes'' program $F$.}%
    \label{fig:repeat-until-changed-automaton}
\end{figure}

\subsection{Programmable Control Flow}
We started this section with the goal of formalizing what it meant for control flow to be programmable.
To that end, we considered KAT as a generalization of GKAT that includes non-determinism.
We have seen how the relational semantics of KAT extends the partial function semantics of GKAT, and how the former can be related to a model based on languages and automata.
Furthermore, we have seen that KAT can describe control flow that cannot be expressed in GKAT\@.

With these formalisms explained, we can now position KAT as a reasonable model of (non-deterministic) ``programmable'' control flow.
Our primary justification for this is the correspondence between the relational model and KAT automata, whose states can be viewed as different ``decision points'' within a program, with the transitions describing which actions are executed, their preconditions, and the resulting control flow.
If a program has finitely many decision points (which seems reasonable for a finite program), and the ways of moving between those decision points can be described, it should then be possible to derive a KAT automaton for it, and thus a KAT expression.

To further illustrate this point, we note that $F$ can be modeled by the two-state KAT automaton drawn in \Cref{fig:repeat-until-changed-automaton}.
The intended meaning of this automaton is somewhat easier to unravel than the equivalent KAT expression encountered earlier: the initialization function checks whether $t$ holds, and ``remembers'' the outcome by transitioning to $q_1$ or $q_2$; the latter two states accept when, upon arrival, the value of $t$ has not changed, and alternate between each other when it has not.

\section{Determinism}%
\label{sec:determinism}

We have made the case that KAT is a reasonable model of program flow, expressive to the point of being able to describe all finite shapes of control.
The Kleene theorem of KAT, which relates KAT automata to KAT expressions, provides a useful duality: on the one hand, there is the ``flat'' representation in terms of automata, where a program can move between any two states of control, while on the other hand there is the more hierarchical representation in terms of expressions.

We have asserted (but not proved) that this operational-denotational correspondence does not translate to GKAT, because $F$ does not correspond to a GKAT program.
Nevertheless, non-determinism is not necessary to specify $F$: the KAT automaton in \Cref{fig:repeat-until-changed-automaton} is deterministic, in the sense that at most one transition is available in any state at all times; for instance, in $q_1$ the automaton can accept if $t$ holds, or perform $p$ and transition to $q_2$ when it does not.
A reasonable question to ask is whether there is some sort of language that, like GKAT, composes deterministic programs in a deterministic way, and which can express all deterministic control flow described by KAT\@.
The negative answer to this question is (a more formal version of) our main thesis.

In this section, we take the first steps towards that goal, by studying notions of determinism on for the different models of KAT\@.
We start by defining determinism in our setting, as follows.

\begin{definition}
A KAT expression $e \in \KAT(\Sigma,T)$ is \emph{deterministic} when, for all $\tau: T \to 2^S$ and $\sigma: \Sigma \to 2^{S \times S}$, if $\sigma(p)$ is the graph of a partial function over $S$ for all $p\in\Sigma$, then so is $\sem{e}_\tau^\sigma$.
\end{definition}

For instance, Example~\ref{ex:relational-semantics} shows that the KAT expression $p+q$ is not deterministic.
The mere presence of $+$ in a KAT expression does not exclude it from being deterministic, however; for instance, $p + p$ is deterministic, and so is the KAT expression corresponding to the program $F$.

\subsection{Deterministic Languages}
The following is a natural counterpart to determinism under the language semantics.

\begin{definition}[\cite{GKAT}]%
\label{def:deterministic}
A guarded language $L$ is \emph{deterministic} if for all distinct $w, w' \in L$,
(i)~$w$ is not a proper prefix of $w'$, and
(ii)~the first position where they differ is an atom.
\end{definition}
Intuitively, these two conditions mean that (i)~whether an action is executed and (ii)~which action is executed is entirely determined by the preceding string, and by extension, the control point of the program reached so far, as well as the current valuation of the tests.

\begin{example}
The guarded language defined by the KAT-term $p\cdot q$ is deterministic.
Indeed, all strings in $L(p\cdot q)$ have the same length, and they can disagree only on atoms.
The guarded language defined by $p+q$, on the other hand, is not deterministic (assuming $p$ and $q$ are distinct), since $\emptyset p \emptyset$ and $\emptyset q\emptyset$ both belong to $L(p+q)$, which violates the second condition.
The guarded language defined by $p^*$ is not deterministic since it contains both $\emptyset$ and $\emptyset p \emptyset$, which violates the first condition.
\end{example}

In the examples above, the KAT expressions with deterministic languages happen to also be deterministic, and vice versa.
This is no coincidence, as these notions turn out to be the same.

\begin{proposition}%
\label{prop:deterministic-iff-language-deterministic}
Let $e$ be any KAT expression.
Then  $e$ is a deterministic KAT expression if and only if $L(e)$ is a deterministic guarded language.
\end{proposition}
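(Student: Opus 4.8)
The plan is to prove both directions via the characterization from the proof sketch of \Cref{prop:language-relational-equiv}, which gives a tight correspondence between labeled paths in an interpretation and guarded strings in $L(e)$. The key observation is that \emph{functionality} of a relation $\mathcal{R}\sem{e}_\tau^\sigma$ means that no state has two distinct $\mathcal{R}\sem{e}_\tau^\sigma$-successors, while \emph{determinism} of the guarded language $L(e)$ forbids exactly the two ``branching'' phenomena (a proper prefix, and a first disagreement on an action rather than an atom) that would witness such a split. So in both directions I want to transport a failure of one property into a failure of the other.

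\textbf{($\Leftarrow$) $L(e)$ deterministic implies $e$ deterministic.} Suppose $L(e)$ is deterministic, and fix any $\tau$ and $\sigma$ with each $\sigma(p)$ functional. I want to show $\mathcal{R}\sem{e}_\tau^\sigma$ is functional. By the path characterization, $\langle s, s'\rangle \in \mathcal{R}\sem{e}_\tau^\sigma$ iff there is a labeled path $\pi$ from $s$ to $s'$ with $w_\pi \in L(e)$. Suppose toward a contradiction that $\langle s, s'\rangle$ and $\langle s, s''\rangle$ both lie in $\mathcal{R}\sem{e}_\tau^\sigma$ with $s' \neq s''$, witnessed by labeled paths $\pi', \pi''$ both starting at $s$. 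Their induced guarded strings $w_{\pi'}, w_{\pi''}$ share the initial atom $\alpha_1 = \{t \mid s \in \tau(t)\}$. Reading along both paths in lockstep and using functionality of each $\sigma(p)$, I would argue that as long as the two guarded strings agree symbol-for-symbol, the underlying path states stay equal; the first place the paths diverge into distinct states must therefore correspond to a disagreement in the guarded strings. Because both $w_{\pi'}, w_{\pi''} \in L(e)$ and $L(e)$ is deterministic, this first disagreement is at an atom, and neither is a prefix of the other. But a disagreement \emph{at an atom position} means the two paths already occupy states with differing test-valuations, i.e. they had already diverged at the previous action step — contradicting functionality of the relevant $\sigma(p)$, since a single action from a single state cannot reach two states with different atoms. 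Chasing this carefully yields $s' = s''$.

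\textbf{($\Rightarrow$) $e$ deterministic implies $L(e)$ deterministic.} For the contrapositive, suppose $L(e)$ is \emph{not} deterministic; I produce a functional interpretation under which $\mathcal{R}\sem{e}_\tau^\sigma$ fails to be functional. There are two cases matching \Cref{def:deterministic}. If some $w \in L(e)$ is a proper prefix of some $w' \in L(e)$, use the canonical interpretation on positions of $w'$ from the backward direction of \Cref{prop:language-relational-equiv}: the whole path of $w'$ is present, each $\sigma(p)$ is functional on this linear interpretation, and $\langle 1, |w|\rangle$ and $\langle 1, |w'|\rangle$ both lie in $\mathcal{R}\sem{e}$ with distinct endpoints, breaking functionality. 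If instead two distinct strings $w, w' \in L(e)$ first disagree at an action position $p \neq p'$ after a common prefix $u\alpha$, I would glue the two paths at their shared prefix — identifying the states along $u\alpha$ and branching afterward — so that from the shared state reached after $u$, reading atom $\alpha$ and then taking action $p$ versus $p'$ leads to two distinct continuations both landing in $L(e)$; each $\sigma$ remains functional because the branch is on \emph{different} action letters, yet the induced relation is non-functional at the branch endpoint.

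\textbf{Main obstacle.} The delicate step is the backward reading of the atom-disagreement case in ($\Leftarrow$) and the path-gluing in ($\Rightarrow$): I must ensure the constructed interpretation stays functional while still realizing the required branching, which forces care about \emph{where} the two guarded strings differ. In particular, the asymmetry in \Cref{def:deterministic} — that a disagreement on an \emph{atom} is benign but one on an \emph{action} is not — mirrors exactly the constraint that a functional $\sigma(p)$ may send a state to only one successor (so actions cannot branch) whereas the test-valuation at a given state is fixed by $\tau$ (so two distinct reachable states may legitimately carry different atoms). Making this correspondence airtight, rather than merely suggestive, is the crux; I expect to lean heavily on the explicit path-to-string translation $\pi \mapsto w_\pi$ and its converse from \Cref{prop:language-relational-equiv} to keep the bookkeeping honest.
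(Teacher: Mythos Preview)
Your proposal is correct and follows essentially the same route as the paper: the paper packages your labeled-path correspondence into a map $\sharp$ from words to relations (with $\sharp(L(e)) = \mathcal{R}\sem{e}_\tau^\sigma$), and for the ($\Rightarrow$) direction it realizes your ``gluing'' concretely by taking the state set to be the guarded-string prefixes of $w$ and $x$, which makes the functionality check for $\sigma(p)$ exactly the observation you sketch. One small slip of phrasing: in the action-disagreement case the non-functionality of $\mathcal{R}\sem{e}$ is witnessed at the \emph{initial} state (both full paths start there and end at distinct endpoints), not ``at the branch endpoint.''
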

\begin{proof}
We start by developing a standard map from guarded languages to relations.
Let $\tau: T \to 2^S$ and $\sigma: \Sigma \to 2^{S \times S}$.
With each word $w \in (\At \cup \Sigma)^*$, we inductively associate a relation $\sharp(w)$:
\begin{mathpar}
\sharp(\alpha) = \{ \langle s, s \rangle: t \in \alpha \iff s \in \tau(t) \}
\and
\sharp(p) = \sigma(p)
\and
\sharp(wx) = \sharp(w) \circ \sharp(x)
\end{mathpar}
We lift $\sharp$ to languages, writing $\sharp(L)$ for $\bigcup_{w \in L} \sharp(w)$.
A straightforward proof shows $\sharp(L(e)) = \mathcal{R}\sem{e}_\tau^\sigma$.

Now assume that $L(e)$ is a deterministic guarded language, and fix $\tau: T \to 2^S$ and $\sigma: \Sigma \to 2^{S \times S}$ where $\sigma(p)$ is a partial function for every $p \in \Sigma$.
We will show that $\mathcal{R}\sem{e}_\tau^\sigma$ is a partial function.

Let $s_0, s_1, s_2 \in S$ be such that $s_0 \mathrel{\mathcal{R}\sem{e}_\tau^\sigma} s_1$ and $s_0 \mathrel{\mathcal{R}\sem{e}_\tau^\sigma} s_2$.
Because $\sharp(L(e)) = \mathcal{R}\sem{e}_\tau^\sigma$, there exist $w, x \in L(e)$ with $\langle s_0, s_1 \rangle \in \sharp(w)$ and $\langle s_0, s_2 \rangle \in \sharp(x)$.
If $w = x$, then $s_1 = s_2$ because both $\sharp(w)$ and $\sharp(x)$ are partial functions.
We can rule out that $w \neq x$ by observing that, in this case, the conditions of determinism imply that $w = y\alpha w'$ and $x = y\beta x'$ for distinct $\alpha, \beta \in \At$.
Because $\sharp(y)$ is functional, there exists an $s_0' \in S$ such that $\langle s_0,s_0' \rangle \in \sharp(y)$, $\langle s_0',s_1 \rangle \in \sharp(\alpha w')$ and $\langle s_0',s_2 \rangle \in \sharp(\beta x')$.
But then $\langle s_0',s_0' \rangle \in \sharp(\alpha)$ and $\langle s_0',s_0' \rangle \in \sharp(\beta)$, which tells us that $\alpha = \beta$ by definition of $\sharp$, a contradiction.

Next, suppose $L(e)$ is \emph{not} deterministic.
This gives us two cases to dinstinguish:

(1) There are distinct $w, x \in L(e)$ such that the first position where $w$ and $x$ differ is a letter.
    Let $S$ be the set of guarded strings prefixing $w$ or $x$.
    We choose $\tau(t) = \{ w \alpha : w \alpha \in S, t \in \alpha \}$ and $\sigma(p) = \{ (w, wp\alpha) : wp\alpha \in S \}$ (cf.~\cite{kozen-smith-1996}).
    Note that $\sigma(p)$ is functional as a relation on $S$ for every $p \in \Sigma$: given a prefix $y$, if both $yp\alpha$ and $yp\beta$ for $\alpha \neq \beta$ are prefixes of $w$ or $x$, then (w.l.o.g.) $yp\alpha$ is a prefix of $w$ and $yp\beta$ is a prefix of $x$, which contradicts that the first position where they differ is a letter.
    Now apply the definition of $\sharp$ to this $\tau$ and $\sigma$.
    If $\alpha$ is the first atom in both $w$ and $x$ (which exists for the same reason) then $(\alpha,w)\in \sharp(w)$ and $(\alpha,x)\in \sharp(x)$.
    It follows that $(\alpha,w),(\alpha,x)\in\sharp(L(e))$; since $\sharp(L(e)) = \mathcal{R}\sem{e}_\tau^\sigma$, we find that $\mathcal{R}\sem{e}_\tau^\sigma$ is not functional.
    \looseness=-1

 (2)
    There exist $w, x \in L(e)$ with $w$ a proper prefix of $x$.
    Now choose $S$ to be the set of guarded strings prefixing $x$, and $\tau$ and $\sigma$ as before.
    Once again, $\sigma(p)$ is functional for every $p \in \Sigma$: given a prefix $y$ of $x$, if $yp\alpha$ and $yp\beta$ are prefixes of $x$, then $\alpha = \beta$.
    Finally, we turn to $\sharp$ once more, noting that if $w$ (and thus $x$) start with $\alpha$, then $\alpha \mathrel{\sharp(L(e))} w$ and $\alpha \mathrel{\sharp(L(e))} x$, meaning $\mathcal{R}\sem{e}_\tau^\sigma$ is not functional.
\end{proof}

It is worth noting that the interpretation constructed in the above proof is always finite.

\begin{proposition}
Non-determinism of a KAT expression is witnessed by a finite interpretation.
\end{proposition}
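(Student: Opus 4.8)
The plan is to observe that this is essentially a corollary of the construction carried out in the proof of \Cref{prop:deterministic-iff-language-deterministic}, so most of the work is already done. The statement asserts that whenever a KAT expression $e$ is non-deterministic, there exists a \emph{finite} interpretation $(S, \tau, \sigma)$ with each $\sigma(p)$ functional such that $\mathcal{R}\sem{e}_\tau^\sigma$ fails to be a partial function. First I would invoke \Cref{prop:deterministic-iff-language-deterministic}: since $e$ is not deterministic as a KAT expression, $L(e)$ is not a deterministic guarded language. This gives us two witnesses $w, x \in L(e)$ that violate either condition (i) or condition (ii) of \Cref{def:deterministic}.

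Next I would simply point at the witnessing interpretation built in cases (1) and (2) of that proof and argue it is finite. In case (1), the witnesses $w$ and $x$ first differ at a letter, and the state set is taken to be $S = \{\,y : y \text{ is a guarded string prefixing } w \text{ or } x\,\}$; in case (2), $w$ is a proper prefix of $x$ and $S$ is the set of prefixes of $x$. In both cases the key point is that $w$ and $x$ are \emph{fixed finite} guarded strings, so they have only finitely many prefixes, whence $S$ is finite. The maps $\tau$ and $\sigma$ are defined from $S$ by the formulas $\tau(t) = \{y\alpha \in S : t \in \alpha\}$ and $\sigma(p) = \{\langle y, yp\alpha\rangle : yp\alpha \in S\}$, and each $\sigma(p)$ was already verified to be functional on $S$. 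The same proof shows $\mathcal{R}\sem{e}_\tau^\sigma$ is not functional, so this finite interpretation witnesses non-determinism.

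Since $e$ non-deterministic means precisely that some interpretation with functional $\sigma$ makes $\mathcal{R}\sem{e}_\tau^\sigma$ non-functional, and we have exhibited such an interpretation that is moreover finite, the statement follows directly. There is essentially no additional obstacle here: the only thing worth double-checking is that the witnesses $w, x$ produced by the failure of determinism are genuinely finite strings (which they are, being elements of the guarded-string language), so that $S$, as a set of their prefixes, is finite. In short, the proof is a one-line remark that the interpretation constructed in the backward direction of \Cref{prop:deterministic-iff-language-deterministic} has a finite state space, which is already noted in the sentence immediately preceding the statement.
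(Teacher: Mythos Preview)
Your proposal is correct and matches the paper's own approach exactly: the paper does not give a separate proof but simply remarks that ``the interpretation constructed in the above proof is always finite,'' which is precisely the observation you spell out. There is nothing to add.
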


GKAT expressions denote deterministic languages.
This can be shown directly, by induction on the syntax of GKAT expressions, but it also follows from \Cref{prop:deterministic-iff-language-deterministic} and the observation that GKAT expressions can be regarded as KAT expressions in a way that is conservative w.r.t.\ their semantics in terms of (functional) relations, pointed out in the previous section.

\begin{theorem}[\cite{GKAT}]\label{thm:gkat-deterministic}
For every GKAT expression $e$, $L(e)$ is deterministic.
\end{theorem}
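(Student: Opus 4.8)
The plan is to avoid a direct syntactic induction and instead leverage \Cref{prop:deterministic-iff-language-deterministic}, which reduces determinism of $L(e)$ to determinism of $e$ viewed as a KAT expression. First I would make the embedding of GKAT into KAT explicit: write $\hat{e}$ for the KAT expression obtained by reading $b$ and $p$ as themselves, $\widehat{e;f}$ as $\hat{e}\hat{f}$, $\widehat{\Ifthenelse{b}{e}{f}}$ as $b\hat{e} + \overline{b}\hat{f}$, and $\widehat{\while{b}{e}}$ as $(b\hat{e})^*\overline{b}$. By the very definition of the language semantics of GKAT, $L(e) = L(\hat{e})$, so it suffices to prove that $\hat{e}$ is a deterministic KAT expression; \Cref{prop:deterministic-iff-language-deterministic} then hands us determinism of $L(e)$.

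To show $\hat{e}$ is deterministic I would prove conservativity of the embedding: for every $\tau$ and every $\sigma$ in which each $\sigma(p)$ is the graph of a partial function, $\mathcal{R}\sem{\hat{e}}_\tau^\sigma = \sem{e}_\tau^\sigma$, where the right-hand side is the partial-function semantics of GKAT. Since the right-hand side is a partial function by construction, this equality immediately witnesses that $\mathcal{R}\sem{\hat{e}}_\tau^\sigma$ is functional for all such $\sigma$, i.e.\ that $\hat{e}$ is deterministic. The equality goes by induction on $e$. The assertion, primitive-action, and sequencing cases are immediate from the definitions (composition of relations agrees with composition of partial functions), and the if-then-else case is exactly the soundness computation $\mathcal{R}\sem{b\hat{e}+\overline{b}\hat{f}}_\tau^\sigma = \sem{\Ifthenelse{b}{e}{f}}_\tau^\sigma$ already flagged in \Cref{sec:relational-semantics-kat}: the two summands have disjoint domains (guarded by $b$ and $\overline{b}$), so their union is functional and matches the case split.

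The main obstacle is the while case, i.e.\ reconciling the two fundamentally different ways the loop is defined: a least fixed point in the partial-function semantics (\Cref{remark:ill-founded}) versus a reflexive-transitive closure in the relational semantics of $(b\hat{e})^*\overline{b}$. Writing $g = \mathcal{R}\sem{b\hat{e}}_\tau^\sigma$ --- a partial function whose domain lies in $\sem{b}_\tau$, by the inductive hypothesis --- one has $\mathcal{R}\sem{(b\hat{e})^*\overline{b}}_\tau^\sigma = g^* \circ d$, where $d$ is the sub-identity on the states satisfying $\overline{b}$. I would argue that $\langle s,s'\rangle \in g^* \circ d$ holds exactly when iterating $g$ from $s$ terminates at the $\overline{b}$-state $s'$ (termination being forced because $g$ is undefined on $\overline{b}$-states), and that this is precisely the graph of the least fixed point $\sem{\while{b}{e}}_\tau^\sigma$. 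The cleanest way to nail this down is the standard Kleene-fixpoint argument: unfold $g^* \circ d$ as $\bigcup_{n} g^n \circ d$, where $g^n$ denotes $n$-fold relational composition and $g^0$ is the identity, and show by induction on $n$ that $g^n \circ d$ equals the $n$-th approximant of the fixpoint from \Cref{remark:ill-founded}; taking the union over $n$ yields the least fixed point. Functionality then comes for free, since $g^* \circ d$ thereby equals a partial function.

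As a remark, one could instead carry out the promised direct induction on guarded languages, proving three closure lemmas: that $\diamond$ preserves determinism; that a union $L \cup K$ of deterministic languages is deterministic whenever the first atoms occurring in $L$ and $K$ are separated by complementary tests; and that the guarded-star construction $(L(b)\diamond L(e))^{(*)}\diamond L(\overline{b})$ is deterministic when $L(e)$ is. This is more self-contained but duplicates effort: the guarded-star lemma requires essentially the same case analysis --- at each re-entry atom the choice between looping ($\leq b$) and halting ($\leq \overline{b}$) is forced by complementarity --- that makes the while case of the conservativity argument work, so I would prefer the route via \Cref{prop:deterministic-iff-language-deterministic}.
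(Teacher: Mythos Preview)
Your proposal is correct and matches the paper's approach exactly: the paper does not spell out a proof but states that the result ``follows from \Cref{prop:deterministic-iff-language-deterministic} and the observation that GKAT expressions can be regarded as KAT expressions in a way that is conservative w.r.t.\ their semantics in terms of (functional) relations,'' which is precisely the route you take. Your remark about the alternative direct induction is also anticipated by the paper, so there is nothing to add.
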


Nevertheless, GKAT is not capable of expressing \emph{all} deterministic guarded languages.
This was first formally shown by \citet{KozenTseng}; our main result can be viewed as a generalization.

\begin{theorem}[\citet{KozenTseng}]%
\label{thm:KozenTseng}
GKAT is properly contained in deterministic KAT:\@ there exists an $e \in \KAT$ such that $L(e)$ is deterministic, without an $e' \in \GKAT$ with $L(e)=L(e')$.
\end{theorem}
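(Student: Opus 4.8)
The plan is to take for $e$ the KAT expression witnessing the program $F$ of \Cref{definition:repeat-while-changes-semantics}, namely $e = t p (\overline{t} p t p)^* (t + \overline{t} p \overline{t}) + \overline{t} p (t p \overline{t} p)^* (\overline{t} + t p t)$, whose relational semantics we already know satisfies $\mathcal{R}\sem{e}_\tau^\sigma = F_\tau^\sigma$ whenever $\sigma(p)$ is functional. The theorem then splits into two obligations: (a)~that $L(e)$ is deterministic, and (b)~that no $e' \in \GKAT$ has $L(e') = L(e)$.

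For (a), rather than reasoning about $L(e)$ directly, I would show that $e$ is a \emph{deterministic} KAT expression and then invoke \Cref{prop:deterministic-iff-language-deterministic}. The key observation is that $F_\tau^\sigma$ is, by its very definition as a least fixed point in the directed-complete partial order of partial functions, a partial function for every functional $\sigma$. Since $\mathcal{R}\sem{e}_\tau^\sigma = F_\tau^\sigma$ for all such $\sigma$, the expression $e$ sends functional interpretations to partial functions, which is exactly the definition of a deterministic KAT expression; \Cref{prop:deterministic-iff-language-deterministic} then hands us that $L(e)$ is a deterministic guarded language. (Equivalently, one could read determinism off the two-state automaton in \Cref{fig:repeat-until-changed-automaton}, which offers at most one available transition per state and atom.)

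For (b), I would argue by contradiction, threading together the semantic correspondences established earlier. Suppose $e' \in \GKAT$ satisfies $L(e') = L(e)$. Viewing $e'$ through its embedding as a KAT expression, under which the GKAT and KAT language semantics agree by definition, this is an equality of guarded languages between two KAT expressions, so \Cref{prop:language-relational-equiv} gives $\mathcal{R}\sem{e'}_\tau^\sigma = \mathcal{R}\sem{e}_\tau^\sigma$ for \emph{all} interpretations. Restricting to functional $\sigma$: on the one hand $\mathcal{R}\sem{e}_\tau^\sigma = F_\tau^\sigma$, and on the other hand soundness of the GKAT-into-KAT embedding (the fact that $\mathcal{R}\sem{-}$ agrees with the partial-function semantics $\sem{-}$ on functional interpretations) gives $\mathcal{R}\sem{e'}_\tau^\sigma = \sem{e'}_\tau^\sigma$. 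Chaining these equalities yields $\sem{e'}_\tau^\sigma = F_\tau^\sigma$ for every functional interpretation, i.e.\ $\sem{e'} = F$, contradicting \Cref{thm:impossible}.

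The proof is thus largely a matter of assembling results already in hand, so the genuine mathematical content sits in \Cref{thm:impossible} itself rather than in this derivation; that inexpressibility fact, not the bookkeeping around it, is the real obstacle. The only care needed here is in step (b): one must be precise that $L(e') = L(e)$ is read as an equality of KAT expressions (so that \Cref{prop:language-relational-equiv} applies), and that the embedding faithfully identifies $\sem{e'}$ with $\mathcal{R}\sem{e'}$ exactly on the functional interpretations over which $\sem{e'}$ and $F$ are being compared.
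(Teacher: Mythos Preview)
Your proof is correct. The paper itself does not prove \Cref{thm:KozenTseng}; it cites the result from Kozen--Tseng and then offers \Cref{thm:gkat-impossible} (from \cite{GKATBisim}) as a ``simpler witness''. The most direct route the paper suggests is therefore to take the \emph{half}-expression $e_0 = t p (\overline{t} p t p)^* (t + \overline{t} p \overline{t})$, observe that $L(e_0)$ is deterministic (e.g.\ via the upper branch of the automaton in \Cref{fig:repeat-until-changed-automaton} and \Cref{prop:automata-complete}), and invoke \Cref{thm:gkat-impossible} verbatim to rule out any GKAT expression with the same language.

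Your approach instead takes the full $F$-expression, passes through the relational semantics, and appeals to \Cref{thm:impossible}. This works, but note the dependency chain: in the paper, \Cref{thm:impossible} is itself derived from \Cref{thm:gkat-impossible} via a Brzozowski-derivative argument. So you are implicitly relying on \Cref{thm:gkat-impossible} anyway, with an extra layer of translation (language $\to$ relational $\to$ partial-function semantics) on top. The payoff of your route is that part~(a) becomes semantically transparent --- determinism of $e$ falls out of $F$ being a partial function --- whereas the direct route requires checking determinism of $L(e_0)$ by inspection. The cost is that part~(b) threads through two semantic correspondences (\Cref{prop:language-relational-equiv} and the soundness of the GKAT-into-KAT embedding on functional interpretations) that the direct route avoids entirely.
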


\begin{corollary}
There exists a deterministic $e \in \KAT$ without an $e' \in \GKAT$ with $\mathcal{R}\sem{e} = \sem{e'}$.
\end{corollary}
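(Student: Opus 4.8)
The plan is to transfer the language-semantic separation of \Cref{thm:KozenTseng} down to the relational (partial-function) level, using the fact that the two semantics of KAT induce the same equivalences (\Cref{prop:language-relational-equiv}) together with the conservativity of the GKAT-into-KAT embedding. The only genuinely delicate point is that the GKAT semantics is defined only for functional interpretations, so I must check that \Cref{prop:language-relational-equiv} can be invoked in that restricted setting.

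First I would fix the KAT expression $e$ supplied by \Cref{thm:KozenTseng}: $L(e)$ is a deterministic guarded language, and there is no $e' \in \GKAT$ with $L(e) = L(e')$. By \Cref{prop:deterministic-iff-language-deterministic}, $e$ is then a \emph{deterministic} KAT expression, so $\mathcal{R}\sem{e}_\tau^\sigma$ is a partial function whenever every $\sigma(p)$ is; this is what makes $\mathcal{R}\sem{e}$ a legitimate candidate to compare against a GKAT semantics $\sem{e'}$, and it supplies the ``deterministic $e \in \KAT$'' demanded by the corollary.

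Next I would argue by contradiction. Suppose some $e' \in \GKAT$ satisfies $\mathcal{R}\sem{e} = \sem{e'}$, i.e. $\mathcal{R}\sem{e}_\tau^\sigma = \sem{e'}_\tau^\sigma$ for every interpretation in which each $\sigma(p)$ is a partial function. Regarding $e'$ as a KAT expression $\widehat{e'}$ via the embedding ($\Ifthenelse{b}{e_1}{e_2}$ as $b e_1 + \overline{b} e_2$, $\while{b}{e_1}$ as $(b e_1)^* \overline{b}$), the conservativity of this embedding on functional interpretations gives $\sem{e'}_\tau^\sigma = \mathcal{R}\sem{\widehat{e'}}_\tau^\sigma$ for all such $\tau,\sigma$, while the GKAT language semantics is defined precisely so that $L(e') = L(\widehat{e'})$. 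Chaining these, $\mathcal{R}\sem{e}_\tau^\sigma = \mathcal{R}\sem{\widehat{e'}}_\tau^\sigma$ on every functional interpretation.

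The crux is to upgrade this to $L(e) = L(\widehat{e'})$, and the main obstacle is exactly that the two relational semantics have only been shown to agree on \emph{functional} interpretations, whereas \Cref{prop:language-relational-equiv} nominally quantifies over arbitrary relational $\sigma$. The observation that resolves this is that the backward direction of \Cref{prop:language-relational-equiv} only ever evaluates the relational semantics on the interpretations $\tau_w,\sigma_w$ induced by individual guarded strings $w = \alpha_1 p_1 \cdots \alpha_n$, and each such $\sigma_w(p) = \{\langle i, i+1\rangle \mid p_i = p\}$ is \emph{already} a partial function. Hence agreement on functional interpretations suffices to run that argument: for every guarded string $w$ we obtain $w \in L(e)$ iff $\langle 1, n\rangle \in \mathcal{R}\sem{e}_{\tau_w}^{\sigma_w}$ iff $\langle 1, n\rangle \in \mathcal{R}\sem{\widehat{e'}}_{\tau_w}^{\sigma_w}$ iff $w \in L(\widehat{e'})$, so $L(e) = L(\widehat{e'}) = L(e')$. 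This contradicts the choice of $e$ in \Cref{thm:KozenTseng}, and I expect the restriction-to-functional-interpretations check above to be the only non-routine step; the rest is bookkeeping about the embedding.
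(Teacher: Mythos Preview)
Your argument is correct, and it matches the paper's intended route: the corollary is stated in the paper without proof, as an immediate consequence of \Cref{thm:KozenTseng} via \Cref{prop:language-relational-equiv} and the conservativity of the GKAT-into-KAT embedding. Your careful treatment of the one genuinely subtle point—that the backward direction of \Cref{prop:language-relational-equiv} only evaluates $\mathcal{R}\sem{-}$ on the interpretations $(\tau_w,\sigma_w)$ induced by guarded strings, which are already functional—is exactly the detail the paper leaves implicit, and it is handled correctly.
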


The expression that witnesses \Cref{thm:KozenTseng} in op.\ cit.\ is rather involved, and so is the accompanying proof.
For the purpose of illustration, we provide a somewhat simpler witness below.

\begin{theorem}[{\cite[Lemma~D.2]{GKATBisim}}]%
\label{thm:gkat-impossible}
There does not exist a GKAT expression whose language semantics coincides with that of the KAT expression $t p (\overline{t} p t p)^* (t + \overline{t} p \overline{t})$.
\end{theorem}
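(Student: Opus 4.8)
The plan is to argue entirely in the language/automaton model. Write $e = t p (\overline{t} p t p)^* (t + \overline{t} p \overline{t})$; over the single test $T = \{t\}$ there are exactly two atoms, which I will denote $t$ and $\overline{t}$ by mild abuse of notation. The first thing I would do is unfold the guarded star and the guarded compositions to obtain an explicit description of $L(e)$: it is precisely the set of guarded strings $\alpha_0 p \alpha_1 p \cdots p \alpha_k$ with $k \geq 1$, $\alpha_0 = t$, $\alpha_i \neq \alpha_{i-1}$ for all $1 \leq i \leq k-1$, and $\alpha_k = \alpha_{k-1}$. In words, these are exactly the runs that start with $t$ true and keep performing $p$ as long as the truth value of $t$ flips at every step, halting at the first step where it stays put. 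Since $e$ is deterministic, \Cref{prop:deterministic-iff-language-deterministic} guarantees that $L(e)$ is a deterministic guarded language, which is a useful sanity check on the computation.

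Next I would isolate the structural feature that makes $L(e)$ problematic. Reading off the description, its live behaviour is carried by the two states of \Cref{fig:repeat-until-changed-automaton}: a state $q_1$ reached after an even number of $p$'s (``$t$ was last true'') that \emph{accepts} on atom $t$ and \emph{continues} with $p$ on atom $\overline{t}$, and a mirror-image state $q_2$ reached after an odd number of $p$'s. A quick Myhill--Nerode check shows $q_1 \neq q_2$: the prefixes $tp$ and $tp\,\overline{t}\,p$ are distinguished by the continuation $t$, since $tpt \in L(e)$ while $tp\,\overline{t}\,pt \notin L(e)$. The salient point is that $q_1$ and $q_2$ lie in a single strongly connected component, yet their acceptance sets $\{t\}$ and $\{\overline{t}\}$ are \emph{complementary}: the loop is ``exited'' (i.e.\ accepts) from two different states, on opposite atoms.

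To derive a contradiction, suppose $L(e) = L(e')$ for some $e' \in \GKAT$. Using the translation of GKAT expressions into deterministic automata (\cite{GKAT}), or a direct induction, one establishes that the resulting automaton is \emph{well-nested}: every loop stems from a $\while{b}{g}$ subexpression and has a single \emph{header} state at which its guard $b$ is tested, continuing on atoms $\alpha \leq b$ and being left only through that header on atoms $\alpha \leq \overline{b}$; inner loops exit into the body of enclosing loops rather than out of them, so the whole program terminates only after all enclosing loops have been exited at their headers. The unbounded alternating runs of $L(e)$ force some loop to iterate; focusing on the innermost one carrying this behaviour, with header $h$, guard $b$, and period $d$ (the number of $p$'s per iteration), $h$ is visited exactly at the $p$-counts in one residue class modulo $d$. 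But by the explicit description the accepting exits of $L(e)$ occur after odd $p$-counts (stay at $t$) \emph{and} after even $p$-counts (stay at $\overline{t}$), for arbitrarily large values of each parity. No residue class modulo $d$ contains numbers of both parities unless $d = 1$; and $d = 1$ forces $h$ to continue on both $t$ and $\overline{t}$, i.e.\ $b \equiv \mathsf{true}$, so the loop never exits and none of the ``stay'' strings can ever be accepted. Either branch is a contradiction.

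The delicate point, and the step I expect to be the main obstacle, is making the last argument robust against \emph{nesting}. Sequencing, conditionals, and above all nested loops mean the alternating behaviour need not be produced by a single clean loop of fixed period, so the ``single residue class'' reasoning must be upgraded to a genuine structural invariant: that GKAT-expression automata are well-nested in the sense above, and that this property is inherited by whatever deterministic automaton accepts $L(e)$. Concretely, I expect the core of the proof to be a structural induction on $e'$ showing that $L(e')$ can never place two states with \emph{complementary} acceptance sets inside a common loop — exactly the configuration exhibited in the second step. Equivalently, the whole argument can be phrased coalgebraically, by showing that the two-state automaton of \Cref{fig:repeat-until-changed-automaton} is not a homomorphic image of the automaton of any GKAT expression, which is the route taken in \cite{GKATBisim}.
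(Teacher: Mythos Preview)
The paper does not actually prove this theorem; it merely cites \cite[Lemma~D.2]{GKATBisim}, so there is no in-paper proof to compare your attempt against. Your overall strategy --- identify the two mutually-reachable states with complementary acceptance behaviour and argue that no GKAT-expression automaton can realise this pattern --- matches the spirit of the cited result, and you correctly note at the end that the rigorous argument in \cite{GKATBisim} is coalgebraic (showing the two-state automaton is not a solution to any GKAT expression).

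That said, the concrete periodicity argument you sketch contains an error. You write that ``no residue class modulo $d$ contains numbers of both parities unless $d=1$'', but this is false for every odd $d>1$: for instance, the class $0 \bmod 3$ contains $0,3,6,9,\dots$, which alternates parity. So the case distinction collapses not to $d=1$ but to $d$ odd, and you would then need a separate argument for odd $d>1$. More fundamentally, the premise that the body of the innermost loop executes a \emph{fixed} number $d$ of $p$'s is unjustified: the body is an arbitrary GKAT expression, so the number of actions it performs may depend on the atoms encountered, and there is no single period $d$ to speak of. You acknowledge this obstacle in your final paragraph, but the earlier ``simple case'' is already broken, not merely in need of generalisation. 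The actual proof in \cite{GKATBisim} avoids these pitfalls by working with the structural invariant (the ``nesting coequation'') directly on the coalgebraic semantics, rather than reasoning about periods of runs.
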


Observe that the expression above appears in the left half of the ``repeat $p$ while $t$ changes'' program from \Cref{sec:relational-semantics-kat}.
In combination with \Cref{prop:language-relational-equiv}, this leads to a proof of \Cref{thm:impossible}.

\begin{proof}[Proof (of \Cref{thm:impossible})]
Suppose that such an expression $e$ exists.
By \Cref{prop:language-relational-equiv} and the fact that the relational semantics of $e$ coincides with that of $f = t p (\overline{t} p t p)^* (t + \overline{t} p \overline{t}) + \overline{t} p (t p \overline{t} p)^* (\overline{t} + t p t)$, we find that the language semantics of $e$ and $f$ must also be the same.
Because both GKAT and KAT expressions are closed under Brzozowski derivatives~\cite{GKAT,chen-pucella-2004}, it should then be possible to obtain from $e$ (resp.\ $f$) a GKAT (resp.\ KAT) expression $e'$ (resp.\ $f'$) that is equivalent to the subexpression $t p (\overline{t} p t p)^* (t + \overline{t} p \overline{t})$.
This contradicts \Cref{thm:gkat-impossible}.
\end{proof}

\subsection{Deterministic KAT Automata}

As we have seen, KAT expressions and KAT automata are equally expressive.
Similarly, deterministic KAT expressions can be characterized by a natural class of KAT automata, namely those $A=(Q,\delta,\iota)$ where $\delta(q, \alpha)$ and $\iota(\alpha)$ always contain at most one element for any $\alpha \in \At$.

\begin{definition}
A \emph{deterministic KAT automaton} is a triple $A = (Q, \delta, \iota)$ where
    $Q$ is a finite set of \emph{states},
    $\delta: Q \times \At \to \{\textsf{accept}, \textsf{reject}\} + \Sigma \times Q$ is the \emph{transition function}, and finally
    $\iota: \At \to \{\textsf{accept}, \textsf{reject}\} + \Sigma \times Q$ is the \emph{initialization} function.
\end{definition}

In other words, a transition of the form
$\delta(q,\alpha)=\textsf{accept}$ can be read as ``when in state $q$, upon reading $\alpha$, accept'', and
$\delta(q,\alpha)=\textsf{reject}$ can be read as ``when in state $q$, upon reading $\alpha$, reject''.

In contrast with KAT automata, $\delta$ and $\iota$ are now functions instead of relations, and their range includes an additional ``$\textsf{reject}$'' value.
Equivalently, of $\delta$ and $\iota$ are partial functions to $\{\textsf{accept}\}+\Sigma\times Q$, with \textsf{reject} being ``undefined''.
Clearly, a KAT automaton $(Q, \delta, \iota)$ where $\delta(q, \alpha)$ and $\iota(\alpha)$ contain at most one element for all $q \in Q$ and $\alpha \in \At$ can be cast as a deterministic KAT automaton.

The semantics of deterministic KAT automata is the same as that of arbitrary KAT automata.
Since we are using a slightly different presentation for $\delta$ and $\iota$ (as functions
instead of relations), we will review the semantics again for the specific case of deterministic KAT automata.

\begin{definition}
For $A = (Q, \delta, \iota)$ a deterministic KAT automaton and $\gamma: \At \to \{\textsf{accept}, \textsf{reject}\} + \Sigma \times Q$, we define the guarded language $L_A(\gamma)$ as the smallest set satisfying the following rules:
\begin{mathpar}
    \inferrule{%
        \gamma(\alpha) =\textsf{accept}
    }{%
        \alpha \in L_A(\gamma)
    }
    \and
    \inferrule{%
        \gamma(\alpha) = (p, q) \\
        w \in L_A(\delta(q, -))
    }{%
        \alpha p w \in L_A(\gamma)
    }
\end{mathpar}
The language of $A$, denoted $L(A)$, is $L_A(\iota)$.
As before, we may write $L_A(q)$ for $L_A(\delta(q, -))$.
\end{definition}

Throughout the sequel, we will use $q \transition{\alpha}{p}_A q'$ as a notation for $\delta(q,\alpha)=(p,q')$.

\begin{proposition}\label{prop:automata-complete}
The languages of guarded strings that are recognized by a deterministic KAT automaton are
precisely the deterministic regular languages of guarded strings.
\end{proposition}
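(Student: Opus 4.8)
The plan is to prove the two inclusions separately, using \Cref{prop:kat-kleene-thm} for everything about regularity and reserving the genuine content of the statement for the interplay between the determinism conditions of \Cref{def:deterministic} and the single-valuedness of $\delta$ and $\iota$.

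\textbf{From automata to deterministic regular languages.} Let $A = (Q, \delta, \iota)$ be a deterministic KAT automaton. Since it is a special case of a KAT automaton (each $\delta(q,\alpha)$ and $\iota(\alpha)$ is essentially a singleton or empty), \Cref{prop:kat-kleene-thm} immediately gives that $L(A)$ is a regular guarded language. For determinism I would show, by induction on word length, that $L_A(\gamma)$ is deterministic for \emph{every} $\gamma: \At \to \{\textsf{accept},\textsf{reject}\} + \Sigma \times Q$. The key point is that for a fixed first atom $\alpha$ the value $\gamma(\alpha)$ is a single item: if it is $\textsf{accept}$, then $\alpha$ is the only string of $L_A(\gamma)$ beginning with $\alpha$; if it is $(p,q)$, then every string beginning with $\alpha$ has the form $\alpha p v$ with $v \in L_A(\delta(q,-))$; if it is $\textsf{reject}$, no string begins with $\alpha$. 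Hence two distinct strings of $L_A(\gamma)$ either already differ in their first atom (so condition~(ii) holds and neither is a prefix of the other), or they share the first atom, pass through the same transition $(p,q)$, and reduce to two distinct strings of $L_A(\delta(q,-))$, which are deterministic by the induction hypothesis; and a proper-prefix pair can never arise, since that would require both $\textsf{accept}$ and a transition at the same atom. Taking $\gamma = \iota$ yields that $L(A)$ is deterministic.

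\textbf{From deterministic regular languages to automata.} This is the substantive direction. Given a deterministic regular guarded language $L$, I would build its automaton of left derivatives. For a word $x = \alpha_0 p_1 \cdots \alpha_{i-1} p_i$ that prefixes some $w \in L$ and ends in an action, let $x \backslash L = \{ v : x v \in L\}$ denote the left quotient under ordinary concatenation; since $x$ ends in an action and $v$ starts with an atom, each such $x\backslash L$ is again a guarded language. Let $Q$ be the set of these residuals reachable from $L$, and define the dynamics by $\delta(x\backslash L, \alpha) = \textsf{accept}$ when $\alpha \in x\backslash L$; $\delta(x\backslash L, \alpha) = (p,\, x\alpha p \backslash L)$ when some string of $x\backslash L$ begins with $\alpha p$; and $\textsf{reject}$ otherwise, with $\iota$ defined identically using $L$ itself in place of $x\backslash L$. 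Correctness, $L(A) = L$, then follows from a routine induction on word length establishing $L_A(x\backslash L) = x\backslash L$ for each reachable residual (and $L_A(\iota) = L$), matching the $\textsf{accept}$ clause to single-atom membership and the transition clause to the quotient $x\alpha p\backslash L$.

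The main obstacle, and the only place where the hypothesis is really used, is verifying that these dynamics are well defined as \emph{functions}. If some string of $x\backslash L$ begins with $\alpha p$ and another with $\alpha p'$ for $p \neq p'$, then $L$ contains two strings whose first point of difference is an action, contradicting condition~(ii); so both the action $p$ and the successor residual are uniquely determined. If instead $\alpha \in x\backslash L$ while some string of $x\backslash L$ begins with $\alpha p$, then $L$ contains a string that is a proper prefix of another, contradicting condition~(i); so $\textsf{accept}$ and transitions never compete. Once single-valuedness is secured, finiteness of $Q$ is immediate from regularity: $L$ is a regular language over the finite alphabet $\At \cup \Sigma$, so it has finitely many left quotients by the classical Myhill--Nerode theorem, and the reachable residuals are among them. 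Thus the derivative construction yields a genuine deterministic KAT automaton recognizing $L$.
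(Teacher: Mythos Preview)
Your proposal is correct and follows essentially the same route as the paper. In both directions the arguments match: for the forward direction the paper also proves that $L_A(\gamma)$ is deterministic for every $\gamma$ by induction (on one of the two strings rather than on length, but with the same case analysis on $\gamma(\alpha)$), and for the converse the paper also builds the Brzozowski-derivative automaton, uses the determinism conditions to establish the accept/transition/reject trichotomy at each atom, and appeals to Myhill--Nerode for finiteness of the state set.
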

\begin{proof}
For the forward claim, let $A = (Q, \delta, \iota)$ be a deterministic KAT automaton.
Clearly $L(A)$ is regular, as $A$ can be converted to a classic finite automaton that accepts the same language by inserting additional states.
For determinism, it suffices to show that $L_A(\gamma)$ is deterministic for all $\gamma: \At \to \{\textsf{accept}, \textsf{reject}\} + \Sigma \times Q$.
Let $w_1, w_2 \in L_A(\gamma)$ be distinct.
We proceed by induction on $w_1$.
    In the base, $w_1 = \alpha$ for some $\alpha \in \At$.
    Then $\gamma(\alpha) = \textsf{accept}$, which means that $w_1$ cannot be a proper prefix of $w_2$ --- if it were, then $w_2 \in L_A(\gamma)$ being distinct necessitates that $\gamma(\alpha) \neq \textsf{accept}$.
    The atom at the beginning of $w_2$ must then be different from $\alpha$, satisfying the determinism condition.
    For the inductive step, let $w_1 = \alpha p w_1'$; this means that $\gamma(\alpha) = (p, q)$ for some $q \in Q$, with $w_1' \in L_A(\delta(q, -))$.
    If $w_2$ does not start with $\alpha$, then we are done.
    Otherwise, if $w_2$ does start with $\alpha$, then because $w_2 \in L_A(\gamma)$ we know that $w_2 = \alpha p w_2'$ with $w_2' \in L_A(\delta(q, -))$.
    By induction, neither $w_1'$ nor $w_2'$ is a prefix of the other, and the first position where they differ must be an atom, which means the same holds for $w_1$ and $w_2$.\looseness=-1

To see the converse, we define for any $w$ the Brzozowski derivative $L_w = \{ x \mid wx \in L \}$~\cite{brzozowski-1964}, and note that $L_{wp}$ is a guarded language whenever $w$ is a guarded string and $p \in \Sigma$.
We start by claiming that for any deterministic regular language of guarded strings $L$ and any atom $\alpha$, exactly one of the following holds: (1)~$L_\alpha$ is empty, (2)~$L_\alpha = \{ \epsilon \}$, or (3)~there is a unique $p_{L,\alpha} \in \Sigma$ such that $L_{\alpha p_{L,\alpha}}$ is non-empty.
First, note that these are mutually exclusive: if $L_\alpha$ is empty, then $L_{\alpha p}$ is empty for all $p$; if $L_\alpha = \{ \epsilon \}$, then it is not empty; if $L_{\alpha p_{L,\alpha}}$ is non-empty, then $p_{L,\alpha} w' \in L_\alpha$ for some $w'$.
Second, if $L_\alpha$ is neither empty nor $\{ \epsilon \}$, then all strings in $L_\alpha$ start with the same $p_{L,\alpha} \in \Sigma$ by determinism; so one of the cases must apply.
In the third case, $L_{\alpha p_{L,\alpha}}$ must again be deterministic.

Now set $R(L) = \{ L_w : w \in (\At \cdot \Sigma)^* \}$; this set is finite because $L$ is regular, via the Myhill-Nerode theorem. 
We can then define $D: R(L) \times \At \to \{\textsf{accept}, \textsf{reject}\} + \Sigma \times R(L)$ by
\[
    D(L', \alpha) =
        \begin{cases}
        \textsf{reject} & \text{$L'_\alpha$ is empty} \\
        \textsf{accept} & L'_\alpha = \{ \epsilon \} \\
        (p_{L', \alpha}, L'_{\alpha p_{L', \alpha}}) & \text{otherwise}
        \end{cases}
\]
This Brzozowski map is well-defined by the observations above.
Using $D$, we can define the deterministic KAT automaton $(R(L), D, D(L, -))$, which accepts the language $L$ by a straightforward inductive argument on the length of guarded strings.
\end{proof}

\subsection{Deciding Determinism}

While determinism in KAT automata is straightforward to check, finding out whether a given KAT expression is deterministic is another matter altogether.
Fortunately, the characterization of deterministic KAT expressions in terms of deterministic guarded languages and the Kleene theorem for KAT together provide us a way to analyze the complexity of this question.

\begin{theorem}
The following problem is coNP-complete: \emph{given a KAT expression $e$, is $L(e)$ deterministic?}
The same problem is solvable in polynomial time if $|T|$ is bounded by a constant.
\end{theorem}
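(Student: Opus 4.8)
The plan is to show the \emph{complement} problem --- whether $L(e)$ is \emph{not} deterministic --- is \textbf{NP}-complete, from which the claim follows by complementation. I would first isolate the two shapes a violation of \Cref{def:deterministic} can take. Given distinct $w, w' \in L(e)$, let $y$ be their longest common prefix; since guarded strings alternate atoms and letters, $y$ necessarily ends in an atom whenever the first disagreement occurs at a letter. Hence every witness of non-determinism decomposes into one of: (a)~$y = w \in L(e)$ and $w$ is properly extended by $w'$, violating prefix-freeness; or (b)~$y = u\alpha$ is a common prefix after which $w$ and $w'$ continue with \emph{distinct} actions $p \neq p'$, violating condition~(ii). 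Conversely, each configuration yields a genuine witness provided the relevant continuations can be completed to accepted strings.

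For the upper bound I would pass to the linear-size KAT automaton $A = (Q,\delta,\iota)$ of \Cref{prop:kat-kleene-thm} with $L(A) = L(e)$, recalling that its transitions are represented \emph{symbolically} by tests, so that checking $q \transition{\alpha}{p} q'$ for an explicitly given atom $\alpha$ is polynomial even though $|\At| = 2^{|T|}$. The key observation is that a type-(b) witness corresponds to two accepting runs of $A$ reading a \emph{common} prefix $u$ before branching; tracking the pair of states reached by these runs traces a path in the product $A \times A$ synchronized on the common input, whose state space has size $|Q|^2$. As reachability is achieved along a simple path, the common prefix may be taken of length at most $|Q|^2$. A polynomial certificate for non-determinism therefore consists of a simple path in $A \times A$ to a reachable configuration $(q_1, q_2)$, together with its explicitly written atoms and actions; a branching atom $\alpha^*$ with distinct actions witnessing (a)~or~(b); and short (length $\le |Q|$) accepting continuations from the two successor states, certifying productivity. (Productivity is in fact decidable deterministically in polynomial time by backward reachability from $\textsf{accept}$, so it need not even be guessed.) Verifying such a certificate is polynomial, placing non-determinism in \textbf{NP} and hence determinism in \textbf{coNP}.

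For hardness I would reduce from \textsc{unsat}. Taking the variables of a Boolean formula $\varphi$ as the test alphabet $T$ and two distinct actions $p \neq q$, consider $e_\varphi = \varphi p + \varphi q$, so that $L(e_\varphi) = \{\alpha p \beta : \alpha \leq \varphi\} \cup \{\alpha q \beta' : \alpha \leq \varphi\}$. All strings have the same length, so condition~(i) is vacuous, and two distinct strings first differ at a letter exactly when some atom $\alpha$ satisfies $\alpha \leq \varphi$ and feeds into both the $p$- and the $q$-branch. Thus $L(e_\varphi)$ is non-deterministic iff $\varphi$ is satisfiable, so $L(e_\varphi)$ is deterministic iff $\varphi \in \textsc{unsat}$; this polynomial reduction establishes \textbf{coNP}-hardness.

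Finally, for bounded $|T|$ I would derandomize the argument: with $|\At| = 2^{|T|}$ constant, $A$ has polynomially many transitions, productivity and product reachability are computed in polynomial time over the $|Q|^2$-state product, and the finitely many atoms $\alpha^*$ at each reachable configuration can be enumerated exhaustively to test for a type-(a) or type-(b) violation. The \textbf{main obstacle} I anticipate is precisely the upper bound: the naive witness --- a pair of accepting runs --- may be arbitrarily long, and determinizing $A$ to read off determinism directly would incur an exponential blow-up in $|Q|$. The product-automaton argument is what keeps the common prefix, and hence the certificate, polynomial while still correctly forcing the two runs to agree up to the branch point.
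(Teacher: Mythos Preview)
Your proposal is correct and follows essentially the same route as the paper: both arguments use the product $A \times A$ (the paper calls the relevant relation on pairs ``divergent'') to bound the size of a counterexample, both verify the certificate in polynomial time, and your hardness reduction $\varphi p + \varphi q$ is KAT-equivalent to the paper's $b_\phi\cdot(p_1+p_2)$. The one point where the paper is slightly cleaner is certificate verification: rather than checking the guessed runs against the automaton (which, as you note, requires arguing that transitions are \emph{symbolically} represented so that $\delta(q,\alpha)$ can be evaluated in polynomial time for an explicitly given $\alpha$), the paper simply extracts the two guarded strings $w,w'$ from the certificate and checks $w,w'\in L(e)$ directly on the expression by a bottom-up evaluation over subexpressions (or, equivalently, by appealing to polynomial-time PDL model checking). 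This sidesteps any assumption about how $A$ is represented, which \Cref{prop:kat-kleene-thm} does not actually guarantee.
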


\begin{proof}
We will first prove the second part of the statement, namely that the problem is solvable in polynomial time for fixed $k=|T|$ (where $T$ is the test alphabet).
We translate $e$ to a (non-deterministic) KAT automaton $A = (Q,\delta,\iota)$ with $L(A) = L(e)$ in polynomial time, with a number of states polynomial in the size of $e$, per \Cref{prop:kat-kleene-thm}.
We further prune the state space of $A$ to ensure that all of its states are \emph{live}~\cite{GKAT}, i.e., they can reach a state $q$ such that $\delta(q, \alpha)$ for some $\alpha \in \At$; this too can be done in polynomial time (provided $|T|$ is fixed).

Let $\gamma, \eta: \At \to 2^{\{ \textsf{accept} \} + \Sigma \times Q}$.
We say that $(\gamma, \eta)$ is \emph{divergent} if for some $\alpha \in \At$:
\begin{enumerate}
    \item
    $\textsf{accept} \in \gamma(\alpha)$ while $\eta(\alpha)$ includes a pair of the form $(p, q)$ (or vice versa); or
    \item
    $\gamma(\alpha)$ (resp. $\eta(\alpha)$) includes a pair $(p,q)$ (resp.\ $(p', q')$) with $p \neq p'$; or
    \item
    $\gamma(\alpha)$ (resp. $\eta(\alpha)$) includes a pair $(p,q)$ (resp.\ $(p, q')$) with $(\delta(q, -), \delta(q', -))$ divergent.
\end{enumerate}

\medskip\par\noindent\emph{Claim 1: }
$L(e)$ is deterministic if and only if $(\iota, \iota)$ is not divergent.

\medskip\par\noindent\emph{Proof of claim: }
An inductive proof shows that $(\gamma, \eta)$ is divergent if and only if there exist distinct $w \in L_A(\gamma)$ and $x \in L_A(\eta)$ where $w$ is a prefix of $x$ (or vice versa), or the first position where $w$ and $x$ differ is a letter.
Thus $L(e) = L(A) = L_A(\iota)$ is deterministic if and only if $(\iota, \iota)$ is not divergent.

\medskip\par\noindent
Standard arguments show that for all $q, q' \in Q$, we can test whether $(\delta(q, -), \delta(q', \alpha))$ is divergent within polynomial time (provided $|T|$ is fixed) and by extension whether $(\iota, \iota)$ is divergent.

Next we proceed to the first part of the statement.
For the coNP upper bound, we argue as follows: for a KAT expression $e$, by a \emph{counterexample for determinism} we will mean a pair of guarded strings $w,w'\in L(e)$ such that either
(i)~$w$ is a proper prefix of $w'$, or
(ii)~the first position where $w$ and $w'$ differ is an element of the action alphabet $\Sigma$.
By definition, $L(e)$ is deterministic if and only if there is no counterexample.
The following two claims together show that the problem is in coNP\@.

\medskip\par\noindent\emph{Claim 2: }
if $L(e)$ is \emph{not} deterministic, there exists a counterexample of size polynomial in that of $e$.

\medskip\par\noindent\emph{Proof of claim: }
Standard automata-theoretic arguments show that if $(\gamma, \eta)$ is divergent, then the $w \in L_A(\gamma)$ and $x \in L_A(\eta)$ from the proof of the previous claim are of size linear in the number of states in $A_e$.
Since $L(e)$ is not deterministic, by Claim 1, $(\iota, \iota)$ is divergent; the $w, w' \in L_{A_e}(\iota) = L(A_e) = L(e)$ obtained this way are the counterexample to determinism.
This concludes the proof of the claim, which we stress concerns only the \emph{existence} of a polynomial-sized counterexample.

\medskip\par\noindent\emph{Claim 3: } it can be decided in polynomial time whether a given pair of guarded strings $(w,w')$ is a counterexample for a given KAT expression $e$.

\medskip\par\noindent\emph{Proof of claim: }
It suffices to show that it can be decided in polynomial time (combined complexity) whether a given guarded string $w$ belongs to $L(e)$ for a given KAT expression $e$.
This follows from known results for Propositional Dynamic Logic (PDL), namely the fact that the combined complexity of model checking for PDL is in polynomial time~\cite{FisherLadner79}, together with the fact that KAT is a fragment of PDL under the relational semantics~\cite{KAT} (recall that guarded strings can be viewed as a special case of relational interpretations, cf.\ the proof of~\Cref{prop:language-relational-equiv}).
This can also easily be argued directly, using a standard ``bottom-up'' evaluation procedure: for $i,j\leq |w|$, let $w_{i,j}$ denote the infix of the guarded string $w$ starting at position $i$ and ending at position $j$.
Given a KAT expression $e$ and a guarded string $w$, we can then inductively compute for each subexpression $e'$ of $e$ the set of all pairs $(i,j)$ for which it holds that $w_{i,j}\in L(e')$.
Because $e$ is a subexpression of itself, this tells us whether $w$ belongs to $L(e)$.

\medskip\par\noindent
Claim 2 and 3 together immediately imply that the problem is in coNP\@.
Hardness can be shown by a reduction from the propositional \emph{satisfiability} problem: for propositional formula $\phi$, let $e_\phi$ be the KAT expression $b_\phi\cdot(p_1 + p_2)$, where $b_\phi \in \BA$ is the corresponding test (in which each proposition letter is now treated as an atomic test) and where $p_1, p_2 \in \Sigma$ are distinct actions.
It is easy to see that $\phi$ is satisfiable if and only if $e_\phi$ is \emph{not} deterministic.
\end{proof}

The upshot of this result is that it is fixed parameter tractable to verify whether a KAT expression intended to model deterministic control flow is indeed deterministic.
However, this does \emph{not} imply that equivalence of deterministic KAT expressions is possible with similar complexity; indeed, equivalence of (general) KAT expressions is PSPACE-complete~\cite{cohen-kozen-smith-1996}.
Part of the added value of a syntax for (a subset of) deterministic KAT expressions such as GKAT is that it provides a method to obtain a deterministic KAT automata for an expression that is linear in the size of that expression, makes equivalence checking tractable~\cite{GKAT,zhang-etal-2024}.

\section{Regular Control Flow Operations}%
\label{sec:regular-operations}

In the previous section, we showed that determinism is a robust concept within KAT, which can be described using the relational semantics, the language semantics, and KAT automata.
We also saw that GKAT cannot express all deterministic KAT operations.
In this section, we lay the groundwork for generalizing GKAT using \emph{regular control flow operations}.
Intuitively, these are operations that arise from a deterministic KAT expression containing action variables $p_1, \ldots, p_n$ and test variables $t_1, \ldots, t_m$.
We encountered these kinds of operations before, when we encoded $\Ifthenelse{t}{p_1}{p_2}$ in KAT using the deterministic expression $t p_1 + \overline{t} p_2$ (cf.\ Example~\ref{ex:operations} below).

To make this precise, we need the notion of a \emph{KAT-morphism}.
Recall that $\KAT(\Sigma,T)$ denotes the set of KAT-expressions generated by $\Sigma$ and $T$, and that $\BA(T)$ denotes the set of tests over $T$.

\begin{definition}[KAT-morphism]
By a  \emph{KAT-morphism} $(\ssigma,\ttau): (\Sigma_0,T_0) \to (\Sigma_1,T_1)$, we will mean a pair $(\ssigma,\ttau)$ of functions where $\ssigma: \Sigma_0 \to \KAT(\Sigma_1,T_1)$
and $\ttau:T_0\to \BA(T_1)$.
This extends naturally to a corresponding map $\apply^\ssigma_\ttau$ from $\KAT(\Sigma_0,T_0)$ to $\KAT(\Sigma_1,T_1)$: for $e \in \KAT(\Sigma_0,T_0)$, $\apply^\ssigma_\ttau(e)$ is the result of replacing every test $t \in T_0$ in $e$ by $\ttau(t)$ and replacing every action $p \in \Sigma_0$ in $e$ by $\ssigma(p)$.
\end{definition}

We can think of $\apply^\ssigma_\ttau$ as performing syntactic substitutions specified by $\ssigma$ and $\ttau$ in the obvious way.
There is also a corresponding substitution operation at the level of guarded languages.

\begin{definition}[Guarded language morphism]%
\label{def:apply-lang}
A \emph{guarded language morphism} is a pair $(\ssigma,\ttau)$ of functions $\ssigma: \Sigma_0 \to \G(\Sigma_1,T_1)$ and $\ttau:T_0\to \BA(T_1)$.
Let $(\ssigma, \ttau)$ be such a guarded language morphism.

An atom $\beta \in \At_{T_1}$ is \emph{$\ttau$-consistent} with $\alpha \in \At_{T_0}$ when for all $t \in T_0$, $\alpha \leq t$ iff $\beta \leq \ttau(t)$.
Given $L \in \G(\Sigma_0,T_0)$, we write $\apply_{\ttau}(L)$ for the guarded language in $\G(\Sigma_0, T_1)$ consisting of guarded strings $\beta_0 p_0 \beta_1 \cdots \beta_{n-1} p_{n-1} \beta_n$ where $\alpha_0 p_0 \alpha_1 \cdots \alpha_{n-1} p_{n-1} \alpha_n \in L$ and each $\alpha_i$ is $\ttau$-consistent with $\beta_i$.

Given $L \in \G(\Sigma_0, T_1)$, the guarded language in $\G(\Sigma_1, T_1)$ consisting of guarded concatenations (cf. \Cref{def:guarded-comp}) $\alpha_0 \diamond w_0 \diamond \alpha_1 \diamond \cdots \diamond \alpha_{n-1} \diamond w_{n-1} \diamond \alpha_n$, where $\alpha_0 p_0 \alpha_1 \cdots \alpha_{n-1} p_{n-1} \alpha_n \in L$ with $w_i \in \ssigma(p_i)$ is denoted $\apply^\ssigma(L)$.
We write $\apply^\ssigma_\ttau$ for $\apply^\ssigma \circ \apply_\ttau: \G(\Sigma_0, T_0) \to \G(\Sigma_1, T_1)$.
\end{definition}

Note: the order of application matters in the above definition: $\apply^\ssigma$ must be performed after $\apply_\ttau$.
The following propositions then follow immediately from the definitions:

\begin{proposition}\label{prop:apply-well-defined}
For all KAT-morphisms $(\ssigma,\ttau): (\Sigma_0,T_0)\to (\Sigma_1,T_1)$ and $e \in KAT(\Sigma_0,T_0)$, we have that $ L(\apply^\ssigma_\ttau(e)) = \apply^{\ssigma'}_\ttau(L(e))$
where $\ssigma'$ is given by $\ssigma'(p) = L(\ssigma(p))$.
\end{proposition}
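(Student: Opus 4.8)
The plan is to prove the identity by induction on the structure of the KAT expression $e$, exploiting the fact that the syntactic substitution $\apply^\ssigma_\ttau$ is itself defined by a homomorphic replacement and so commutes with each term-former, i.e.\ $\apply^\ssigma_\ttau(e+f) = \apply^\ssigma_\ttau(e) + \apply^\ssigma_\ttau(f)$, and likewise for $\cdot$ and $^*$. Applying $L(-)$ then turns the question into whether the guarded-language morphism $\apply^{\ssigma'}_\ttau$ is a homomorphism for the matching operations on guarded languages (union, guarded composition $\diamond$, and guarded star). Before starting the induction I would record one structural fact used throughout: every $\beta \in \At_{T_1}$ is $\ttau$-consistent with \emph{exactly one} $\alpha \in \At_{T_0}$, namely $\alpha = \{t \in T_0 : \beta \le \ttau(t)\}$. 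I would also note that both $\apply_\ttau$ and $\apply^{\ssigma'}$ are defined string-by-string, and hence automatically preserve unions.

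For the base cases: when $e = p \in \Sigma_0$ the left-hand side is $L(\ssigma(p)) = \ssigma'(p)$ by definition of $\ssigma'$, while on the right $\apply_\ttau(L(p))$ is the set of all $\beta_0 p \beta_1$ and $\apply^{\ssigma'}$ replaces $p$ by the members of $\ssigma'(p)$ subject to the boundary atoms matching; this forces the result to be exactly $\ssigma'(p)$, so the two sides agree. When $e = b \in \BA(T_0)$ there are no actions, so $\apply^{\ssigma'}$ acts trivially on the single-atom strings of $L(b)$, and the claim reduces to the Boolean substitution lemma $\beta \le \apply_\ttau(b) \iff \{t : \beta \le \ttau(t)\} \le b$, which I would verify by a routine induction on the test $b$ (the case $b = t$ is immediate, and the connectives go through because $\apply_\ttau$ commutes with $\vee,\wedge,\overline{(\cdot)}$).

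The inductive cases for $+$, $\cdot$, and $^*$ all follow the same pattern: push $\apply^\ssigma_\ttau$ through the top operation, apply $L(-)$, invoke the induction hypothesis on the immediate subexpressions, and move $\apply^{\ssigma'}_\ttau$ back out through the corresponding guarded-language operation. The union case is immediate from string-by-string definedness, and for $\cdot$ and $^*$ the real content is the single lemma that $\apply^{\ssigma'}_\ttau(L \diamond K) = \apply^{\ssigma'}_\ttau(L) \diamond \apply^{\ssigma'}_\ttau(K)$, from which the star case follows by a further induction on $n$ using $L^{(0)} = \At_{T_0}$ together with $\apply^{\ssigma'}_\ttau(\At_{T_0}) = \At_{T_1}$ and union-preservation. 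I expect this $\diamond$-homomorphism lemma to be the main obstacle, and would prove it by factoring $\apply^{\ssigma'}_\ttau = \apply^{\ssigma'} \circ \apply_\ttau$ and checking each factor separately. The $\subseteq$ direction is a direct unfolding: a witness $w \diamond x \in L \diamond K$ translates to the guarded composition of the translations of $w$ and $x$ glued at the (translated) join atom. The $\supseteq$ direction is the delicate one: for $\apply_\ttau$ it relies on the uniqueness fact above — a string glued from a translation of $w \in L$ and a translation of $x \in K$ shares some atom $\beta$ at the join, and since $\beta$ is $\ttau$-consistent with a unique source atom, the original endpoints of $w$ and $x$ must coincide, so $w \diamond x \in L \diamond K$ is defined and maps to the given string — while for $\apply^{\ssigma'}$ the join is an anchor atom copied verbatim, so the split of a composed string back into its two halves is unambiguous.
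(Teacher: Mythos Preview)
Your proposal is correct and is exactly the natural way to spell out this result. The paper itself does not give a proof: it states that the proposition ``follows immediately from the definitions,'' so there is nothing to compare against beyond noting that your structural induction is the obvious unfolding of that remark. Your identification of the $\diamond$-homomorphism lemma as the only point with real content, and in particular the use of the uniqueness of the $\ttau$-consistent preimage for the $\supseteq$ direction of $\apply_\ttau$, is spot on. One cosmetic quibble: in the $\apply^{\ssigma'}$ case you say the split is ``unambiguous,'' but you do not need uniqueness of the split---you only need that \emph{some} witnessing split exists, which is guaranteed because $\apply^{\ssigma'}$ preserves the first and last atoms of any guarded string.
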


\begin{proposition}
Let $L\in \G(\Sigma_0,T_0)$ be a deterministic guarded language and $(\ssigma,\ttau):(\Sigma_0,T_0)\to (\Sigma_1,T_1)$ a guarded language morphism.
If $\ssigma(p)$ is deterministic for all $p\in\Sigma_0$, then so is $\apply^\ssigma_\ttau(L)$.
\end{proposition}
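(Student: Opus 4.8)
The plan is to reduce the statement to a fact about KAT \emph{expressions} and then prove that fact relationally. Since every language in $\G$ is regular, \Cref{prop:kat-kleene-thm} lets me choose a KAT expression $e \in \KAT(\Sigma_0,T_0)$ with $L(e) = L$ and, for each $p \in \Sigma_0$, a KAT expression $e_p \in \KAT(\Sigma_1,T_1)$ with $L(e_p) = \ssigma(p)$. By \Cref{prop:deterministic-iff-language-deterministic}, $e$ and every $e_p$ are then \emph{deterministic} KAT expressions. Letting $\ssigma''$ be the KAT-morphism with $\ssigma''(p) = e_p$, \Cref{prop:apply-well-defined} gives $L(\apply^{\ssigma''}_\ttau(e)) = \apply^{\ssigma}_\ttau(L)$ (using that the induced language map is $\ssigma'(p) = L(e_p) = \ssigma(p)$). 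By \Cref{prop:deterministic-iff-language-deterministic} once more, it therefore suffices to show that the KAT expression $\apply^{\ssigma''}_\ttau(e)$ is deterministic.

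The core of the argument is a \emph{relational substitution lemma}. Fix any $\tau_1: T_1 \to 2^S$ and $\sigma_1: \Sigma_1 \to 2^{S\times S}$, and define an interpretation of the source signature by $\tau_0(t) = \sem{\ttau(t)}_{\tau_1}$ and $\sigma_0(p) = \mathcal{R}\sem{e_p}_{\tau_1}^{\sigma_1}$. I claim that
\[
    \mathcal{R}\sem{\apply^{\ssigma''}_\ttau(e)}_{\tau_1}^{\sigma_1} = \mathcal{R}\sem{e}_{\tau_0}^{\sigma_0}.
\]
This is proved by induction on $e$. The action case $e = p$ holds by the definition of $\sigma_0$, and the cases $e = e_1 + e_2$, $e = e_1 e_2$ and $e = e_1^*$ are immediate because $\apply^{\ssigma''}_\ttau$ commutes syntactically with $+$, $\cdot$ and $(-)^*$, while $\mathcal{R}\sem{-}$ interprets these operators by union, relational composition and reflexive-transitive closure, so the inductive hypotheses combine directly. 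The only case needing a small separate (Boolean) sub-induction is $e = b \in \BA(T_0)$: one checks $\sem{\apply^{\ssigma''}_\ttau(b)}_{\tau_1} = \sem{b}_{\tau_0}$, which holds because $\tau_0(t) = \sem{\ttau(t)}_{\tau_1}$ and $\sem{-}_{\tau_1}$, $\sem{-}_{\tau_0}$ are Boolean homomorphisms agreeing on the generators.

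With the lemma in hand the conclusion is short. Because each $e_p$ is deterministic, whenever $\sigma_1(q)$ is a partial function for every $q \in \Sigma_1$, the relation $\sigma_0(p) = \mathcal{R}\sem{e_p}_{\tau_1}^{\sigma_1}$ is a partial function for every $p \in \Sigma_0$. Since $e$ is deterministic, $\mathcal{R}\sem{e}_{\tau_0}^{\sigma_0}$ is then a partial function, and by the substitution lemma so is $\mathcal{R}\sem{\apply^{\ssigma''}_\ttau(e)}_{\tau_1}^{\sigma_1}$. As $\tau_1$ and (functional) $\sigma_1$ were arbitrary, $\apply^{\ssigma''}_\ttau(e)$ is deterministic, which by the reduction above finishes the proof.

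I expect the main obstacle to be the bookkeeping in the substitution lemma rather than any conceptual difficulty: one must set up $\tau_0$ and $\sigma_0$ so that the outer atoms introduced by the $\ttau$-part and the guarded splicing performed by the $\ssigma''$-part line up exactly with reinterpreting the source tests and actions. A more self-contained alternative avoids passing through expressions entirely: using the map $\sharp$ from the proof of \Cref{prop:deterministic-iff-language-deterministic}, one shows directly that $\sharp_{\tau_1,\sigma_1}(\apply^{\ssigma}_\ttau(L)) = \sharp_{\tau_0,\sigma_0}(L)$, where the idempotent partial-identity filters $\sharp(\beta)$ automatically enforce the boundary-atom constraints and absorb empty or action-free $\ssigma(p)$-blocks; the same functionality argument then applies.
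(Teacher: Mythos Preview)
Your proof is correct. The paper, however, does not give an explicit argument at all: it states that this proposition (together with \Cref{prop:apply-well-defined}) ``follows immediately from the definitions,'' which points to a direct language-theoretic verification --- i.e., take two distinct strings in $\apply^\ssigma_\ttau(L)$, trace them back to their source strings in $L$ and block decompositions via $\ssigma$, and use determinism of $L$ and of each $\ssigma(p)$ to locate the first difference at an atom and rule out the prefix case.

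Your route is genuinely different: you pass to KAT expressions via \Cref{prop:kat-kleene-thm}, establish a relational substitution lemma $\mathcal{R}\sem{\apply^{\ssigma''}_\ttau(e)}_{\tau_1}^{\sigma_1} = \mathcal{R}\sem{e}_{\tau_0}^{\sigma_0}$, and then invoke the relational characterization of determinism (\Cref{prop:deterministic-iff-language-deterministic}) twice. This is a clean, modular argument that replaces combinatorial case analysis on guarded strings with a structural induction on expressions; it also makes the ``substitution preserves determinism'' intuition completely transparent. The trade-off is that your argument leans on regularity of $L$ and of each $\ssigma(p)$ to obtain the expressions $e$ and $e_p$ --- the paper's implicit definitional argument would work for \emph{arbitrary} deterministic guarded languages, though the stated proposition only concerns regular ones, so nothing is lost here. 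Your alternative sketch via $\sharp$ is essentially the same relational idea without the syntactic intermediary, and would recover that extra generality if needed.
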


Having this in place, we can now talk about \emph{operations defined by KAT terms}.

\begin{definition}
For every KAT term $e\in \KAT(\{p_1, \ldots, p_n\},\{t_1, \ldots, t_m\})$, we define $O_e$ to be the $n+m$-ary function that takes as input
$n$ guarded languages $L_1, \ldots, L_n\in \G(\Sigma,T)$ and
$m$ Boolean expressions $b_1, \ldots, b_m\in \BA(T)$, and
outputs the guarded language $\apply^\ssigma_\ttau(L(e))$, where $\ssigma$ and $\ttau$ are given by $\ssigma(p_i) = L_i$ and $\ttau(t_i) = b_i$.
We refer to such functions on guarded languages as \emph{regular operations}.
If $e$ is deterministic, we call $O_e$ a \emph{(deterministic) regular control flow operation}.
\end{definition}

\begin{example}\label{ex:operations}
Examples of regular control flow operations include
\begin{itemize}
    \item
    if-then-else, defined by the deterministic KAT expression $t_1 p_1 + \overline{t_1}p_2$;
    \item
    sequential composition, defined by the deterministic KAT expression $p_1 p_2$;
    \item
    while-do, defined by the deterministic KAT expression $(t_1 p_1)^*\overline{t_1}$; and
    \item
    repeat-while-changes, defined by the deterministic KAT expression from \Cref{sec:relational-semantics-kat}.
\end{itemize}
\end{example}

Clearly, the guarded languages denoted by GKAT terms can be formed using the first three operations above, starting from the languages $L(p)$ for $p \in \Sigma$, and $L(t)$ for $t \in \BA$.
\Cref{thm:KozenTseng} says that these languages are a proper subset of all deterministic KAT languages.

Interestingly, intersection of guarded languages is \emph{not} a regular control flow operation, even though it is a binary operation on guarded languages that preserves determinism.
\begin{lemma}%
\label{lem:intersection-nonregular}
There is no $e \in \KAT(\{p_1, p_2\}, \emptyset)$ such that $O_e(L_1, L_2) = L_1 \cap L_2$ for all $L_1, L_2 \in \G$.
\end{lemma}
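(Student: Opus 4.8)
The plan is to first reduce $O_e$ to a purely word-theoretic substitution operation, and then derive a contradiction by evaluating it on a few carefully chosen inputs.

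First I would analyze the shape of $O_e$. Since the source test alphabet is empty, $\At_\emptyset$ has a single atom $\circ$, so every guarded string in $L(e)$ has the form $\circ p_{i_1} \circ \cdots \circ p_{i_k} \circ$ and is determined by the word $p_{i_1}\cdots p_{i_k}$; let $M \subseteq \{p_1,p_2\}^*$ be the (regular) set of these words. Unwinding \Cref{def:apply-lang} with $\ttau$ empty, $\apply_\ttau$ replaces each $\circ$ by an arbitrary target atom, and $\apply^\ssigma$ then glues in guarded strings drawn from $\ssigma(p_{i_j})$ via guarded composition. Writing $\ssigma(p_1)=L_1$ and $\ssigma(p_2)=L_2$, this yields
\[
 O_e(L_1,L_2) = \{\, w_1 \diamond \cdots \diamond w_k : p_{i_1}\cdots p_{i_k}\in M,\ w_j\in L_{i_j}\,\},
\]
together with every single atom in case $\epsilon\in M$. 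I expect establishing this characterization to be the main obstacle, as it is where the definitions of $\apply_\ttau$ and $\apply^\ssigma$ must be carefully combined; everything afterwards is elementary.

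Next I would constrain $M$. Work over the target alphabets $T=\{t\}$ and $\Sigma=\{a\}$, fix an atom $\alpha\in\At_T$, and put $x=\alpha a\alpha$, so that $x$ begins and ends with the same atom and hence the $k$-fold composite $x^{\diamond k}$ is always defined and has exactly $k$ occurrences of $a$. Taking $L_1=L_2=\{x\}$ gives $L_1\cap L_2=\{x\}$, whereas the characterization yields $O_e(\{x\},\{x\}) = \{\, x^{\diamond |v|} : v\in M,\ v\neq\epsilon \,\}$, plus every atom of $\At_T$ if $\epsilon\in M$. Since $|T|=1$ there are two distinct atoms (neither of which equals $x$), and the strings $x^{\diamond k}$ have pairwise different lengths, so equality with $\{x\}$ forces $\epsilon\notin M$ and $M\subseteq\{p_1,p_2\}$ with $M\neq\emptyset$.

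Finally I would rule out the surviving cases uniformly. Pick any nonempty $L$, say $L=\{x\}$. Because $\emptyset\cap L = L\cap\emptyset = \emptyset$, the assumed identity $O_e(L_1,L_2)=L_1\cap L_2$ requires $O_e(\emptyset,L)=O_e(L,\emptyset)=\emptyset$. But with $M$ consisting only of length-one words, the characterization gives $O_e(\emptyset,L)=L$ whenever $p_2\in M$ (the word $p_1$ contributes $\ssigma(p_1)=\emptyset$) and $O_e(L,\emptyset)=L$ whenever $p_1\in M$. As $L\neq\emptyset$, this forces $p_1,p_2\notin M$, i.e.\ $M=\emptyset$, contradicting $M\neq\emptyset$. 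Hence no such $e$ exists.
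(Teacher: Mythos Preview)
Your proof is correct but proceeds differently from the paper's. The paper gives a single-witness argument: taking $T=\emptyset$, $\Sigma=\{u_1,u_2,u_3\}$, $L_1=\{\emptyset u_1\emptyset,\emptyset u_2\emptyset\}$, $L_2=\{\emptyset u_2\emptyset,\emptyset u_3\emptyset\}$, it notes that the single string $\emptyset u_2\emptyset$ in $L_1\cap L_2$ must arise from some word in $L(e)$ together with a choice of substituted $w_i$'s, and then swaps one $w_i$ (to $\emptyset u_1\emptyset$ or $\emptyset u_3\emptyset$) to produce an element of $O_e(L_1,L_2)$ outside $L_1\cap L_2$. Your approach instead makes the word-substitution structure of $O_e$ explicit via the set $M$, then uses two probing evaluations: the diagonal $L_1=L_2=\{x\}$ pins $M$ down to a nonempty subset of $\{p_1,p_2\}$, and the tests with one argument $\emptyset$ rule out each remaining possibility. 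The paper's argument is shorter and needs only one instantiation; yours is slightly longer but makes transparent exactly why intersection fails---$O_e$ is a union of ``linear'' substitutions indexed by $M$, and no such operation can see both arguments simultaneously in the way intersection requires. One small note: your displayed characterization of $O_e(L_1,L_2)$ should be read with the implicit proviso that the guarded compositions are defined, which you correctly handle later by choosing $x=\alpha a\alpha$ with matching endpoints.
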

\begin{proof}
Suppose towards a contradiction that such an $e$ does exist.
Take $\Sigma = \{ u_1, u_2, u_3 \}$ and $T = \emptyset$, and choose $L_1 = \{ \emptyset u_1 \emptyset, \emptyset u_2 \emptyset \}$ and $L_2 = \{ \emptyset u_2 \emptyset, \emptyset u_3 \emptyset \}$.
We can then derive that
\begin{mathpar}
    \emptyset u_2 \emptyset \in L_1 \cap L_2 = O_e(L_1, L_2) = \apply^\ssigma_\ttau(L(e))
    \and
\end{mathpar}
where $\ssigma(p_i) = L_i$ and $\ttau$ is the unique map from $\emptyset$ to $\BA(\emptyset)$.
By definition of $\apply^\ssigma_\ttau$, there must then exist a $w = \emptyset p'_0 \emptyset \cdots \emptyset p'_{n-1} \emptyset \in L(e)$ such that $\emptyset u_2 \emptyset$ is of the form $\emptyset \diamond w_0 \diamond \emptyset \diamond \cdots \diamond \emptyset \diamond w_{n-1} \diamond \emptyset$ where if $p'_i = p_j$ then $w_i \in L_j$.
But then there is precisely one $w_i$ such that $w_i = \emptyset u_2 \emptyset$, while $w_j = \emptyset$ for all $i \neq j$.
If $p'_i = p_1$, then using this $w$ we can also derive that $\emptyset u_1 \emptyset \in O_e(L_1, L_2)$, while if $p'_i = p_2$ then $\emptyset u_3 \emptyset \in O_e(L_1, L_2)$; either possibility contradicts that $O_e(L_1, L_2) = L_1 \cap L_2$, so $e$ cannot exist.
\end{proof}

\section{Extensions of GKAT with Regular Control Flow Operations}%
\label{sec:main}

We now consider sets of languages that can be built using finitely many regular control flow operations, for example by adding repeat-while-changes to GKAT\@.
More precisely, if $\mathcal{O}$ is a set of regular control flow operations, then we say $\mathcal{O}$ \emph{generates} $L \in \G(\Sigma, T)$ if $L$ can be obtained by composing $L(p)$ for $p \in \Sigma$ and $L(b)$ for $b \in \BA(T)$ using (possibly multiple applications of) operations from $\mathcal{O}$.
$\mathcal{O}$ generates a \emph{set} of guarded languages if it generates each member.
Our main theorem implies that no finite set of regular control flow operators (and hence no finite extension of GKAT) can generate all deterministic guarded languages.
Note that the infinitary extension of GKAT with \emph{all} regular control flow operations is expressively complete by construction.

Our arguments hinge on the following family of languages, which can be thought of as a generalization of the languages put forward by \citet{KozenTseng,GKATBisim}.

\begin{definition}\label{def:Lk}
For $k>0$, the language $L_k$ is defined as follows: let $T=\{t_1, \ldots, t_k\}$ and $\Sigma = \{p_1, \ldots, p_k\}$.
For $i\leq k$, $\alpha_i$ is the singleton set $\{t_i\}$.
$L_k$ consists of all guarded strings of the form
\[\alpha_{i_1} ~ p_{1} ~ \alpha_{i_2} ~ p_{i_1} ~ \alpha_{i_3}  ~ p_{i_2} \ldots \alpha_{i_n} ~ p_{i_{n-1}} ~  \alpha_{i_n}  \]
with $n\geq 1$ where $i_1\neq 1$ and each $i_j\in \{1, \ldots, k\}$ such that $i_j \neq i_{j+1}$ for all $j<n$.
\end{definition}

It is clear that $L_k$ is a deterministic regular guarded language (cf.\ \Cref{def:deterministic}).
For any finite set of regular control flow operators $\mathcal{O}$ and sufficiently large $k$, $L_k$ cannot be generated by $\mathcal{O}$.

\begin{theorem}\label{thm:GKATO-main}
Let $\mathcal{O}$ be any finite set of regular control flow operations (possibly including the operations of GKAT).
Then there exists a $k > 0$ such that $L_k$ is not generated by $\mathcal{O}$.
\end{theorem}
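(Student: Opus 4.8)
The plan is to pass to the automaton model and argue that the minimal deterministic KAT automaton of $L_k$ contains a ``completely interconnected'' loop that finitely many fixed operations cannot build. First I would record the key structural features of $L_k$ via \Cref{prop:automata-complete}: its minimal deterministic KAT automaton has states $q_1,\dots,q_k$ forming a single strongly connected component in which, from each $q_i$, reading the atom $\alpha_j$ (for $j\ne i$) performs the action $p_i$ and moves to $q_j$, while reading $\alpha_i$ accepts. Thus the performed action depends only on the \emph{source} state (it is $p_i$ at $q_i$, and these are pairwise distinct), the successor depends only on the \emph{atom} (it is $q_j$ on $\alpha_j$), and all $k$ states carry essentially the same atom-to-successor map, differing only in their action and in their unique accepting atom. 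This ``parallel family'' of $k$ states is exactly the feature I want to show is out of reach.

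Next I would set up the composition side. By \Cref{prop:apply-well-defined}, generating a language with $\mathcal{O}$ amounts to iteratively substituting sublanguages (and Boolean tests) into the fixed automata $A_e$ of the operations $e\in\mathcal{O}$; since $\mathcal{O}$ is finite there is a uniform bound on the number of action variables, on the number of test variables $m_e$, and on $|Q_e|$. I would describe the resulting composed automaton $A$ together with the folding $\phi$ onto the minimal automaton of $L_k$, and observe that, because a star-free KAT expression denotes a finite (hence acyclic) language, every cycle of $A$ must arise from a Kleene star of some operation occurrence. Each such star has a single header state $h$ with $\phi(h)$ a fixed $q_i$, so iterating that star alone only produces closed walks \emph{based at one fixed} $q_i$. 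Realizing the full SCC---where, for every pair $\{a,b\}$, arbitrarily long runs alternate $q_a,q_b$ and therefore \emph{avoid} every other state---then forces many independent stars to cooperate.

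From here the quantitative heart is an exchange/pumping argument. I would show that reaching $k-1$ distinct in-SCC successors $q_j$ from a single $q_i$, all via the \emph{same} action $p_i$, forces the composition to route the atom read at $q_i$ into $k-1$ distinct continuations; since each operation branches on only its $m_e$ test variables (into at most $2^{m_e}$ classes before delegating to a sublanguage) and loop headers pin the projected minimal-state down to a single value, I would derive that the ``parallel width'' realizable at any fixed control point is bounded by a function of $\mathcal{O}$ alone, \emph{independent} of composition depth. Choosing $k$ larger than this bound then yields two runs that the composed automaton cannot distinguish but that $L_k$ separates---they diverge on which action is performed next---contradicting the determinism of $A$ and so showing $L_k$ is not generated by $\mathcal{O}$.

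The step I expect to be the main obstacle is exactly the preservation bound in the previous paragraph: the naive complexity measures of the main SCC---its size, its number of distinct actions, or its feedback vertex number---are \emph{not} bounded under substitution, since substituting a complex body into a loop (for instance \textsf{repeat}-\textsf{while}-\textsf{changes} of a while-loop) merges inner loops into the outer component and inflates all of these. The crux is therefore to isolate a parameter that genuinely captures the \emph{parallel} (as opposed to nested or sequential) interconnection witnessed by $L_k$, and to prove by induction along the composition tree that it stays bounded even when loops cross the parent--child boundaries introduced by $\apply^\ssigma_\ttau$; pinning down this parameter and its inductive invariant is where the real difficulty lies.
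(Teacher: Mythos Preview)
Your high-level strategy matches the paper's: pass to deterministic KAT automata, identify a structural parameter of the minimal automaton for $L_k$ that witnesses its ``complete interconnection'', and show this parameter cannot be manufactured by iterated composition of a fixed finite stock of small automata. You also correctly diagnose the difficulty: naive measures (SCC size, action count, feedback vertex number) blow up under substitution, so one needs an invariant that projects \emph{down} through composition. Where your proposal falls short is that you do not actually supply this invariant---you end by saying ``pinning down this parameter and its inductive invariant is where the real difficulty lies.'' That is exactly the content of the paper's proof, so as it stands your proposal is a plan with the key lemma missing.

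The paper's answer is the notion of \emph{$k$-density}: an automaton is $k$-dense if it contains nonempty state-sets $S_1,\dots,S_k$ and distinct atoms $\alpha_1,\dots,\alpha_k$ such that every $q\in S_i$ accepts on $\alpha_i$, transitions into $S_j$ on $\alpha_j$ for $j\neq i$, and the action used on that transition determines the source cluster (if $q\in S_i$, $q'\in S_j$ and they use the same action into $S_m$, then $i=j$). One then shows: (i) any automaton recognizing $L_k$ is $k$-dense; (ii) $k$-dense implies at least $k$ states; (iii) if $\compose^\ssigma_\ttau(A)$ is $k$-dense then either $A$ or some $\ssigma(p)$ is $k$-dense. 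Step (iii) is proved via an ``all-or-nothing'' lemma: the witnessing clusters either never share a locale (then $A$ inherits $k$-density) or some locale meets all $k$ clusters (then that subautomaton does). Taking $k$ larger than the state count of every operation in $\mathcal{O}$ finishes.

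Two of your intermediate heuristics would not pan out and are worth abandoning. First, the Kleene-star/loop-header analysis (``every cycle arises from a star; each star's header pins $\phi(h)$ to a single $q_i$'') is a syntactic view that does not map cleanly onto the automaton composition $\compose^\ssigma_\ttau$: cycles in the composed automaton can weave through several nested locales, and there is no single ``header'' controlling them. Second, the bound ``at most $2^{m_e}$ branching classes before delegating'' does not survive $\compose_\ttau$: after replacing tests by Boolean combinations over a larger $T_1$, a single state may branch on all of $\At_{T_1}$, so the outgoing fan-out at a state is not bounded by anything intrinsic to $\mathcal{O}$. The point of $k$-density is that it tracks the \emph{relational} pattern among clusters (including the action-determines-source-cluster condition), not raw fan-out, and this is what makes the projection lemma go through.
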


\begin{corollary}\label{cor:main}
The deterministic fragment of KAT is not  generated by a finite set of regular control flow operations.
\end{corollary}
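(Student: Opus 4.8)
The plan is to route everything through deterministic KAT automata (\Cref{prop:automata-complete}) and to turn ``generated by $\mathcal{O}$'' into a statement about a substitution operation on automata, isolating one structural invariant --- the amount of state that must survive a single turn of a loop --- that separates the $\mathcal{O}$-generated languages from $L_k$ for large $k$. Concretely, using \Cref{prop:apply-well-defined}, applying $O_e$ to arguments is exactly the substitution $\apply^\ssigma_\ttau$ on guarded languages, and by the Kleene theorem for KAT (\Cref{prop:kat-kleene-thm}) each $O_e$ is carried by a fixed deterministic KAT automaton $A_e$ whose action-edges are labelled by the variables $p_1,\dots,p_{n_e}$ and whose transitions branch on $\ttau$-images of the test variables. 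A generation of a language by $\mathcal{O}$ is then a finite composition tree, which I would realise as a \emph{hierarchical} deterministic KAT automaton: each node contributes a fresh copy of its $A_e$, and an action-edge labelled $p_i$ is replaced by the automaton built for the $i$-th child, entered at its initial dynamics and exited at its accepts. Write $N=\max_e|A_e|$; this depends only on the finite set $\mathcal{O}$. Since each $L_k$ is a deterministic regular guarded language (\Cref{def:Lk}, \Cref{def:deterministic}), \Cref{cor:main} follows from \Cref{thm:GKATO-main}, so it suffices to find, for each $\mathcal{O}$, some $k$ for which no such tree produces $L_k$.

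The heart of the argument is a \emph{cross-iteration memory bound}. In the hierarchical automaton the loops are well-nested: the loops of a child sit strictly inside one iteration of the loops of its parent, and a child automaton is re-entered afresh on every iteration, hence retains nothing between iterations of the parent. Consequently, for any cycle, the only information that survives a full traversal is the backbone state of the \emph{single} operation $A_e$ that owns that cycle, of which there are at most $N$ --- a bound independent of the depth or total size of the tree, even though the overall state count of the automaton may grow without limit. Because $L_k$ contains arbitrarily long strings, for a fixed tree and a sufficiently long $w\in L_k$ some single operation's loop is traversed many times while the automaton emits the alternating pattern of \Cref{def:Lk}. I would localise to that loop and examine the persistent state at the point where the next action is emitted.

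The contradiction then comes from determinism. In $L_k$ the action emitted after reading the current atom $\alpha_{i_j}$ is $p_{i_{j-1}}$, i.e.\ it is fixed by the \emph{previous} atom, which reading the current atom cannot reveal; hence the index $i_{j-1}\in\{1,\dots,k\}$ must already be stored in the persistent loop state before that action is produced. If $k>N$, then two histories carrying distinct previous-indices $a\ne a'$ must reach the \emph{same} persistent state $q^\ast$; reading a common continuation atom $\alpha_c$ (with $c\ne a,a'$) the deterministic automaton emits a single fixed action, which can match at most one of the required actions $p_a\ne p_{a'}$. The automaton therefore mishandles some string of $L_k$, contradicting that it accepts exactly $L_k$. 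Choosing $k$ larger than $N$ (with slack for the pigeonhole over the finitely many loops in the tree) yields \Cref{thm:GKATO-main}, and \Cref{cor:main} follows immediately.

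The step I expect to be hardest is the cross-iteration memory bound, specifically its \emph{depth-independence}: I must show that arbitrary nesting and sequencing of operations cannot accumulate persistent loop-memory beyond the backbone of the one operation owning the unbounded loop. Making ``the loop responsible for the unbounded alternation'' precise, and ruling out that the test-substitution $\ttau$ --- which may send each $t_i$ to an arbitrary Boolean expression over the real $k$ tests, and so lets one operation branch richly on the \emph{current} atom --- smuggles \emph{previous}-atom information into a bounded number of states, is where the real work lies. Once that invariant is pinned down, the determinism/fooling-set conclusion is routine.
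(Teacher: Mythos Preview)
Your proposal is correct and is essentially the same strategy as the paper: realise generation by $\mathcal{O}$ as composition of bounded-size deterministic KAT automata (\Cref{prop:apply-compose}), and argue that recognizing $L_k$ forces a single component to carry $k$ distinguishable ``previous-atom'' states. The paper packages your cross-iteration memory bound as a static invariant on automata, \emph{$k$-density} (\Cref{def:kdense}): Lemma~\ref{lem:Lk-dense} shows any recognizer of $L_k$ is $k$-dense, Lemma~\ref{lem:k-states} that $k$-density needs $k$ states, and Lemmas~\ref{lem:dense-tau}--\ref{lem:dense} that $k$-density of a composition descends to $A$ or to some $\ssigma(p)$, so by induction some building block has $\geq k$ states. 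The depth-independence you flag as the hard step is exactly the all-or-nothing dichotomy of \Cref{lem:all-or-nothing}: the $k$ clusters either all sit in one locale (so a subautomaton is $k$-dense) or pairwise in different locales (so the outer automaton is). The third clause of $k$-density --- that states in different clusters emit different actions into a common target cluster --- is precisely your observation that $p_a\neq p_{a'}$ in $L_k$, and it is what powers \Cref{lem:all-or-nothing} via \Cref{lem:non-local-transitions-determinacy}; isolating this as a combinatorial property of the automaton, rather than reasoning about runs and loop ownership, is what makes the induction go through cleanly.
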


\begin{remark}
Another way to phrase \Cref{cor:main} is as follows: there is no finite set of deterministic KAT-expressions from which all other deterministic KAT-expressions can be obtained by closing under substitution. This purely syntactic formulation makes it clear that the result is not inherently tied to the language semantics of KAT, but applies equally well to the relational semantics. As mentioned in the introduction, this also resolves an open question from~\cite{bogaerts2024preservation}.
\end{remark}

The remainder of this section is dedicated to the proof of the above theorem.
Our approach is as follows: by Proposition~\ref{prop:automata-complete}, each regular control flow operation corresponds to a deterministic KAT automaton.
Intuitively, we can view terms built from such operations as recursively composed automata.
We formalize this intuition by defining a suitable composition operation on deterministic KAT automata.
We then consider languages recognized by automata that are recursively composed out of small automata (i.e., with a bounded number of states).
We show that these form a proper subset of the deterministic regular languages of guarded strings
(cf.~\Cref{thm:main} below).

\subsection{Automata Composition}

We will now describe a way to compose deterministic KAT automata (henceforth: automata) out of simpler automata.
The composition operation for KAT automata is along the same lines as for finite state automata.
The main complicating factor is the presence of the atoms guarding transitions, which will require us to address the issue of possible trivial transitions in composed automata.

Let $A = (Q, \delta, \iota)$ be an automaton over program and test alphabet $(\Sigma_0, T_0)$.
Furthermore, let $\ssigma$ assign to each $p \in \Sigma_0$ an automaton $\ssigma(p)$ over $(\Sigma_1, T_1)$, and let $\ttau: T_0 \to \BA(T_1)$.
We will construct an automaton $\compose^\ssigma_\ttau(A)$ over $(\Sigma_1, T_1)$.
This will be the automata-counterpart of the $\apply^\ssigma_\ttau$ operation on languages from Section~\ref{sec:regular-operations}, and it will satisfy the following analogue of \Cref{prop:apply-well-defined}:

\begin{proposition}%
\label{prop:apply-compose}
$L(\compose^\ssigma_\ttau(A)) = \apply^{\ssigma'}_\ttau(L(A))$ where $\ssigma'$ is given by $\ssigma'(p) = L(\ssigma(p))$.
\end{proposition}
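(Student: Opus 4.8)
The plan is to construct $\compose^\ssigma_\ttau(A)$ explicitly so that its runs mirror, step for step, the two-stage substitution $\apply^\ssigma \circ \apply_\ttau$ on languages described in \Cref{def:apply-lang}, and then verify the language identity by a double inclusion argument driven by the inductive definition of $L_A(\gamma)$. The natural state space for the composite automaton is a disjoint union: a copy of $Q$ (the ``outer'' states, where $A$ decides which $p \in \Sigma_0$ to fire next), together with, for each $p \in \Sigma_0$, a copy of the state space of the inner automaton $\ssigma(p)$ (the ``inner'' states, where we are executing a guarded string of $\ssigma(p)$). A transition $q \transition{\alpha}{p}_A q'$ in $A$ should be simulated in $\compose^\ssigma_\ttau(A)$ by (i) translating the atom $\alpha \in \At_{T_0}$ to all $\ttau$-consistent atoms $\beta \in \At_{T_1}$, and then (ii) handing control to the initialization function of $\ssigma(p)$, run to completion, after which control returns to $q'$. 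The initialization function $\iota$ of $A$ is handled the same way to produce the initialization function of the composite.

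The key technical step I would carry out first is defining this ``hand-off'' correctly. When $A$ fires $p$ at outer state $q$ under $\beta$ (for $\beta$ $\ttau$-consistent with some $\alpha$), the composite automaton must begin executing $\ssigma(p)$ via its initialization function $\iota_{\ssigma(p)}$, and whenever $\ssigma(p)$ would \textsf{accept} (under some atom $\beta'$), the composite must instead ``splice'' into the continuation at $q'$ by consulting $\delta_A(q',-)$ at that same $\beta'$. This is exactly where guarded composition $\diamond$ enters: the trailing atom of the inner guarded string and the leading atom of the continuation must be identified, so the accept-points of an inner automaton become branch-points into the outer one. Concretely, I would define the transition function of the composite on inner states of $\ssigma(p)$ to agree with $\delta_{\ssigma(p)}$ except that wherever $\delta_{\ssigma(p)}(r,\beta') = \textsf{accept}$, we substitute the behavior of the outer continuation at $q'$ under $\beta'$ (which may itself be \textsf{accept}, \textsf{reject}, or a further action). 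I expect this splicing of accept-transitions into outer continuations, handled compositionally so that it respects $L_A(\gamma)$ as a function of an arbitrary $\gamma$, to be the main obstacle, since one must be careful that determinism is preserved and that no spurious ``trivial'' transitions are introduced — precisely the complicating factor the surrounding text warns about.

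Once the construction is fixed, the proof of the identity $L(\compose^\ssigma_\ttau(A)) = \apply^{\ssigma'}_\ttau(L(A))$ proceeds by two inclusions. For the harder direction, I would prove a generalized statement: for every $\gamma: \At_{T_0} \to \{\textsf{accept},\textsf{reject}\} + \Sigma_0 \times Q$, the language accepted from the corresponding spliced continuation in the composite equals $\apply^{\ssigma'}_\ttau(L_A(\gamma))$. Taking $\gamma = \iota$ then yields the theorem. The inclusion $\apply^{\ssigma'}_\ttau(L_A(\gamma)) \subseteq L(\compose^\ssigma_\ttau(A))$ follows by induction on the derivation of a guarded string $w = \alpha_0 p_0 \alpha_1 \cdots \alpha_n \in L_A(\gamma)$: each outer step $\alpha_i p_i$ with $\alpha_{i+1}$ lifts, via $\ttau$-consistency, to a $\beta_i$-transition that launches $\ssigma(p_i)$, runs out an arbitrary $w_i \in L(\ssigma(p_i)) = \ssigma'(p_i)$, and returns, assembling exactly a guarded concatenation $\beta_0 \diamond w_0 \diamond \cdots \diamond w_{n-1} \diamond \beta_n$ as required by \Cref{def:apply-lang}. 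The reverse inclusion unpacks an accepting run of the composite, segmenting it at the hand-off points back to outer states; each inner segment witnesses membership in some $\ssigma'(p_i)$ and each outer step witnesses the corresponding symbol of a string in $L_A(\gamma)$, recovering a preimage under $\apply^{\ssigma'}_\ttau$. Both inductions are routine once the splicing is set up correctly, so all the genuine difficulty is concentrated in the construction itself.
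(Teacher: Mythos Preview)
Your overall strategy (build the composite so that runs factor into outer steps and inner segments, then prove the language identity by a strengthened induction over $L_A(\gamma)$) matches the paper's. But the concrete construction you sketch has a genuine gap: your proposed state space ``$Q$ together with, for each $p \in \Sigma_0$, a copy of the state space of $\ssigma(p)$'' does not retain the \emph{return address}. When the composite is executing an inner state $r$ of $\ssigma(p)$, it must know which outer state $q'$ to resume at upon acceptance; but the same action $p$ may label many outer transitions $q \transition{\alpha}{p}_A q'$ with different targets $q'$, so $r$ alone (or even $(p,r)$) does not determine the continuation. Your own description (``substitute the behavior of the outer continuation at $q'$'') presupposes that $q'$ is available at that point --- it is not, with your state space. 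The paper fixes this by taking the state space to be $\sum_{p \in \Sigma_0} Q_p \times Q$, i.e., each inner state is explicitly paired with its pending outer continuation; there are no separate ``outer'' states at all.

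A second, smaller gap: you flag the ``trivial transition'' issue but do not say what to do about it. Concretely, if $\iota_{\ssigma(p)}(\beta) = \textsf{accept}$, the composite must in the same step consult the next outer transition, which may again hit an immediately-accepting inner automaton, and so on --- possibly forever. The paper handles this by precomputing a ``shortcut'' transition function $\hat{\delta}$ (and $\hat{\iota}$) on $A$ that chases these immediate-accept chains and outputs $\textsf{reject}$ if they do not terminate; the composite then uses $\hat{\delta}$ at every hand-off. Without this, your transition function is not well-defined on the offending atoms. Once you adopt the paired state space and the $\hat{\delta}$ shortcut, your double-inclusion argument (strengthened to arbitrary $\gamma$, segmenting composite runs at hand-off points) is exactly the paper's Claim~1 in the proof of \Cref{prop:apply-compose-2}, and goes through.
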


In light of~\Cref{def:apply-lang}, we can do this by first developing separate composition operations $\compose^\ssigma$ and $\compose_\ttau$, and then defining $\compose^\ssigma_\ttau = \compose^\ssigma \circ \compose_{\ttau}$.

\subsubsection{Definition of $\compose_\ttau(A)$}
Let $\ttau:T_0\to \BA(T_1)$ and let $A$ be an automaton over $(\Sigma,T_0)$.
We construct $\compose_{\ttau}(A)$ over $(\Sigma,T_1)$ as follows.
For each $\beta\in\At_{T_1}$ let $\ttau^{-1}(\beta)$ denote the corresponding element of $\At_{T_0}$, that is, the atom where $t \in \alpha$ iff $\beta \leq \ttau(t)$.
Now $\compose_\ttau(A) = (Q,\delta',\iota')$ where $\delta'$ and $\iota'$ are given by $\delta'(q,\beta)=\delta(q,\ttau^{-1}(\beta))$ and
$\iota'(\beta)=\iota(\ttau^{-1}(\beta))$.
It is easy to confirm that:

\begin{proposition}%
\label{prop:apply-compose-1}
$L(\compose_{\ttau}(A)) = \apply_{\ttau}(L(A))$.
\end{proposition}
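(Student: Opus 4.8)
The plan is to prove $L(\compose_{\ttau}(A)) = \apply_{\ttau}(L(A))$ by unfolding both sides according to their inductive definitions and matching them string-by-string. Recall that $\compose_\ttau(A) = (Q, \delta', \iota')$ differs from $A = (Q,\delta,\iota)$ only in that its transition and initialization functions read atoms over $T_1$ and translate them back via $\ttau^{-1}$, i.e., $\delta'(q,\beta) = \delta(q, \ttau^{-1}(\beta))$ and $\iota'(\beta) = \iota(\ttau^{-1}(\beta))$. On the language side, $\apply_\ttau(L(A))$ replaces each atom $\alpha_i$ appearing in an accepted string by any $\beta_i$ that is $\ttau$-consistent with it, leaving the actions $p_i$ untouched. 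The crucial observation connecting the two is that $\beta$ is $\ttau$-consistent with $\alpha$ if and only if $\ttau^{-1}(\beta) = \alpha$: both conditions say exactly that for every $t \in T_0$, $\alpha \leq t$ iff $\beta \leq \ttau(t)$. So I would record this as the first step, since it is the bridge between the automaton-level and language-level reformulations of the substitution.

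Next I would prove, for every $\gamma : \At_{T_0} \to 2^{\{\textsf{accept}\} + \Sigma \times Q}$, the pointwise statement
\[
    L_{\compose_\ttau(A)}(\gamma \circ \ttau^{-1}) = \apply_\ttau(L_A(\gamma)),
\]
where $\gamma \circ \ttau^{-1}$ is the corresponding acceptance function over $T_1$. Taking $\gamma = \iota$ then yields the desired result, since $\iota' = \iota \circ \ttau^{-1}$ and $L(\compose_\ttau(A)) = L_{\compose_\ttau(A)}(\iota')$ while $L(A) = L_A(\iota)$. Because $L_A(\gamma)$ is defined as a least fixed point via two inference rules, the natural route is a double inclusion, each direction by induction on the length of the guarded string (equivalently, on the derivation of membership).

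For the forward inclusion, suppose $\beta_0 p_0 \beta_1 \cdots \beta_n \in L_{\compose_\ttau(A)}(\gamma \circ \ttau^{-1})$. In the base case the string is a single atom $\beta_0$ accepted because $\textsf{accept} \in (\gamma \circ \ttau^{-1})(\beta_0) = \gamma(\ttau^{-1}(\beta_0))$; setting $\alpha_0 = \ttau^{-1}(\beta_0)$ gives $\alpha_0 \in L_A(\gamma)$ with $\beta_0$ $\ttau$-consistent with $\alpha_0$, so $\beta_0 \in \apply_\ttau(L_A(\gamma))$. In the inductive step a transition $(p_0, q) \in \gamma(\ttau^{-1}(\beta_0))$ is used together with a shorter accepted suffix for $\delta'(q,-) = \delta(q,-) \circ \ttau^{-1}$; the induction hypothesis rewrites that suffix, and reassembling with $\alpha_0 = \ttau^{-1}(\beta_0)$ produces a witnessing string in $L_A(\gamma)$ with the right $\ttau$-consistency at every atom position. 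The backward inclusion runs symmetrically, starting from a string in $L_A(\gamma)$ and a $\ttau$-consistent relabeling, and using the consistency-equals-preimage equivalence to convert each $\gamma(\alpha_i)$ into the corresponding $(\gamma\circ\ttau^{-1})(\beta_i)$.

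I do not expect a serious obstacle here: the construction only renames the atom-labels of transitions, leaving the action structure and the state set entirely intact, so the two least fixed points are literally reindexed copies of one another. The only point requiring genuine care is the bookkeeping in the inductive step, namely verifying that $\ttau$-consistency holds simultaneously at \emph{every} atom position of the string rather than just the first, which the induction handles cleanly once the base-case equivalence $\ttau^{-1}(\beta) = \alpha \iff \beta \text{ is } \ttau\text{-consistent with } \alpha$ is in hand. This is exactly why the proposition is flagged as ``easy to confirm'' in the excerpt.
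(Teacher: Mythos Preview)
Your proof is correct and is precisely the straightforward verification the paper gestures at; the paper itself provides no proof beyond the remark ``it is easy to confirm,'' so there is nothing to compare against. One small notational slip: in this section of the paper automata are deterministic, so $\gamma$ has type $\At_{T_0} \to \{\textsf{accept},\textsf{reject}\} + \Sigma \times Q$ rather than $\At_{T_0} \to 2^{\{\textsf{accept}\} + \Sigma \times Q}$, and the acceptance rules use equality (e.g., $\gamma(\alpha) = \textsf{accept}$) rather than membership --- but this does not affect your argument.
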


\subsubsection{Definition of $\compose^\ssigma(A)$}
Let $\ssigma$ be a map from $\Sigma_0$ to automata over $(\Sigma_1,T)$ and let $A$ be an automaton for $(\Sigma_0,T)$.
The basic intuition behind the construction of $\compose^\ssigma(A)$ is simple.
In loose terms, whenever $A$ would execute the action $p$, $\compose^\ssigma(A)$ instead starts executing a trace in $\ssigma(p)$ until it accepts, at which point we proceed as $A$ would.
This way, every occurrence of $p$ in a guarded string accepted by $A$ is replaced with a guarded string accepted by $\ssigma(p)$.

This is fairly easy to define as a construction on automata, except for one slippery issue: \emph{what if the automaton $\ssigma(p)$ accepts immediately?}
In such a case, we should consult the next state of $A$, which may execute another action $p'$ and transition to $q''$, meaning we need to consult $\ssigma(p')$, and so on.
In general, this may trigger an infinite series of actions $p, p', p'', \ldots$ where each corresponding automaton accepts immediately --- in which case we want the composition to reject, as no useful progress is being made.
To make this idea more precise, the following definition is helpful.

\begin{definition}%
\label{def:delta-hat}
We define $\hat{\delta}: Q \times \At_{T} \to \{\textsf{accept}, \textsf{reject}\} + \Sigma_0 \times Q$ as follows:
\[\hat{\delta}(q, \alpha) =
  \begin{cases}
  \mathsf{accept} & \text{whenever } \delta(q, \alpha) = \mathsf{accept} \\
  (p,q') &\text{whenever } \delta(q,\alpha) = (p, q')  \text{ and } \iota_p(\alpha) \neq \textsf{accept} \\
  \hat{\delta}(q',\alpha) &\text{whenever } \delta(q, \alpha) = (p, q') \text{ and }  \iota_p(\alpha) =\textsf{accept} \\
   \textsf{reject} & \text{otherwise (i.e., if no value is derivable by the above clauses)}
\end{cases}\]
\end{definition}

This definition is recursive, but the last case handles infinite regression, which makes $\hat{\delta}$ well-defined.
We could alternatively define it as the least fixed point of a Scott-continuous function in a DCPO (cf. \Cref{remark:ill-founded} and \Cref{definition:repeat-while-changes-semantics}), but we chose to keep this version for the sake of clarity.

Intuitively, $\hat{\delta}$ ``shortcuts'' the definition of $\delta$ based on the atoms accepted by $\ssigma(p)$ for $p\in\Sigma_0$: when $\delta(q, \alpha) = (p, q')$ and $\iota_p(\alpha) = 1$, it looks at what $q'$ would have done when given $\alpha$.
When this leads to an infinite loop, $\hat{\delta}(q, \alpha)$ rejects.
We can now shortcut the initialization function, too.

\begin{definition}%
\label{def:iota-hat}
We define $\hat{\iota}: \At_{T} \to \{\textsf{accept}, \textsf{reject}\} + \Sigma_0 \times Q$ by
\begin{align*}
    \hat{\iota}(\alpha) &=
        \begin{cases}
        \hat{\delta}(q, \alpha) & \text{if } \iota(\alpha) = (p, q) \text{ and }  \iota_{\ssigma(p)}(\alpha) =\textsf{accept} \\
        \iota(\alpha) & \text{otherwise}
        \end{cases}
\end{align*}
\end{definition}

\begin{figure}
    \begin{mathpar}
        \ssigma(p_1) =
            \begin{tikzpicture}[baseline,every state/.style={scale=0.8}]
                \node (initial) {$\bullet$};
                \node[state,right=15mm of initial] (r1) {$r_1$};
                \node[state,right=15mm of r1] (r2) {$r_2$};
                \draw (initial) edge[-latex] node[above] {$t|p$} (r1);
                \draw (r1) edge[-latex] node[above] {$t'|p'$} (r2);
                \node[above=4mm of initial] (initialacc) {$\overline{t}$};
                \draw (initial) edge[double,double distance=2pt,-implies] (initialacc);
                \node[above=4mm of r1] (r1acc) {$\overline{t'}$};
                \draw (r1) edge[double,double distance=2pt,-implies] (r1acc);
                \node[right=4mm of r2] (r2acc) {$\mathsf{true}$};
                \draw (r2) edge[double,double distance=2pt,-implies] (r2acc);
            \end{tikzpicture}
        \and
        \ssigma(p_2) =
            \begin{tikzpicture}[baseline,every state/.style={scale=0.8}]
                \node (initial) {$\bullet$};
                \node[state,right=15mm of initial] (r3) {$r_3$};
                \node[above=4mm of initial] (initialacc) {$\overline{t''}$};
                \draw (initial) edge[double,double distance=2pt,-implies] (initialacc);
                \node[above=4mm of r3] (r3acc) {$\overline{t''}$};
                \draw (r3) edge[double,double distance=2pt,-implies] (r3acc);
                \draw (initial) edge[-latex] node[above] {$t''|p''$} (r3);
                \draw (r3) edge[-latex,loop right] node[right] {$t''|p''$} (r3);
            \end{tikzpicture}
    \end{mathpar}
    \[
        \begin{tikzpicture}[baseline=0mm,every state/.style={scale=0.8}]
            \node (initial) {$\bullet$};
            \node[state,right=11mm of initial] (s1) {$s_1$};
            \node[state,right=11mm of s1] (s2) {$s_2$};
            \draw (initial) edge[-latex] node[above] {$\mathsf{true}|p_1$} (s1);
            \draw (s1) edge[-latex] node[above] {$\mathsf{true}|p_2$} (s2);
            \node[above=4mm of s2] (s2acc) {$\mathsf{true}$};
            \draw (s2) edge[double,double distance=2pt,-implies] (s2acc);
        \end{tikzpicture}
        \quad\quad\leadsto\quad\quad
        \begin{tikzpicture}[baseline=0mm,every state/.style={scale=0.8}]
            \node (initial) {$\bullet$};
            \node[state,right=11mm of initial] (s1) {$s_1$};
            \node[state,right=11mm of s1] (s2) {$s_2$};
            \draw (initial) edge[-latex] node[above] {$t|p_1$} (s1);
            \draw (initial) edge[-latex,bend right] node[below,sloped] {$\overline{t} \wedge t''|p_1$} (s2);
            \draw (s1) edge[-latex] node[above] {$t''|p_2$} (s2);
            \node[left=4mm of initial] (initialacc) {$\overline{t} \wedge \overline{t''}$};
            \draw (initial) edge[double,double distance=2pt,-implies] (initialacc);
            \node[above=4mm of s1] (s1acc) {$\overline{t''}$};
            \draw (s1) edge[double,double distance=2pt,-implies] (s1acc);
            \node[above=4mm of s2] (s2acc) {$\mathsf{true}$};
            \draw (s2) edge[double,double distance=2pt,-implies] (s2acc);
        \end{tikzpicture}
    \]
    \caption{
        At the top, a map $\ssigma$ from $\Sigma_2 = \{ p_1, p_2 \}$ to automata over $T = \{ t, t', t'' \}$ and $\Sigma = \{ p, p', p'' \}$.
        On the bottom left, an automaton $A = (Q, \delta, \iota)$ over $T_0 = \emptyset$ and $\Sigma_2$, representing the operator $p_1 \cdot p_2$.
        On the bottom right, the automaton $(Q, \hat{\delta}, \hat{\iota})$ obtained from $A$, with $\hat{\delta}$ and $\hat{\iota}$ derived from $\delta$ and $\iota$ per \Cref{def:delta-hat,def:iota-hat}.
    }%
    \label{fig:shortcut-drawing}
\end{figure}
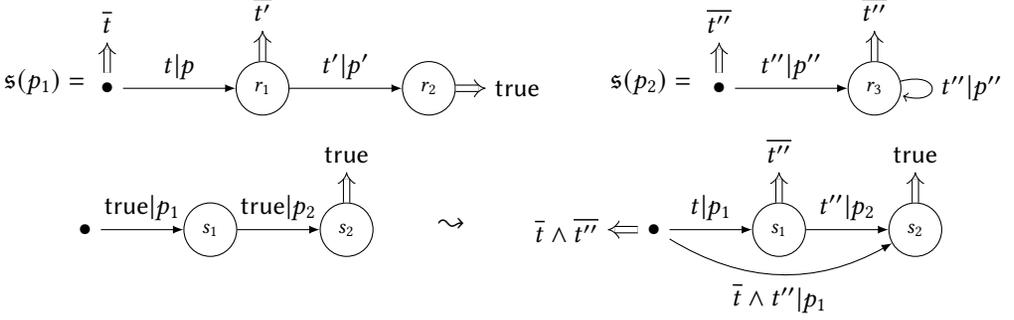

\begin{example}%
\label{ex:shortcut}
Consider the automaton at the bottom left of \Cref{fig:shortcut-drawing}, representing the deterministic KAT operator $p_1 \cdot p_2$.
If we want to compose using $\ssigma$ at the top of the figure, we must first shortcut the $\delta$ and $\iota$ to obtain the automaton at the bottom right.
In this automaton, the initialization map has gained a transition to $s_2$ labeled $\overline{t} \wedge t''|p_1$, because if $\alpha$ is such that $\alpha \leq \overline{t} \wedge t''$, then $\iota(\alpha) = (p_1, s_1)$, and $\iota_{\ssigma(p_1)}(\alpha) = \mathsf{accept}$, so $\hat{\iota}(\alpha)$ takes the value of $\hat{\delta}(s_1, \alpha)$, which is $(p_2, s_2)$ because $\delta(s_1, \alpha) = (p_2, s_2)$ and $\iota_{p_2}(\alpha) \neq \mathsf{accept}$.
The initialization map also gains an accepting transition for all $\alpha \leq \overline{t} \wedge \overline{t''}$.
\end{example}

Now that we have $\hat{\delta}$ and $\hat{\iota}$, we can define our composed automaton.

\begin{definition}%
\label{def:compose-sigma}
Let $A$ be an automaton over $(\Sigma_0, T)$, and let $\ssigma$ map each $p \in \Sigma_0$ to an automaton $A_p = (Q_p, \delta_p, \iota_p)$ over $(\Sigma_1,T)$.
We write $\compose^\ssigma(A)$ for the automaton $(Q', \delta', \iota')$ over $(\Sigma_1,T)$, where \(Q' = \sum_{p \in \Sigma_0} Q_p \times Q\) and
$\delta', \iota'$ are as follows (for $q_p \in Q_p$, $q_{p'} \in Q_{p'}$, $q \in Q$ and $\alpha \in \At_{T}$):
\begin{align*}
    \delta'((q_p, q), \alpha) &=
        \begin{cases}
        (p'', (q_p', q)) & \text{if } \delta_p(q_p, \alpha) = (p'', q_p') \\
        (p'', (q_{p'}, q')) & \text{if } \delta_p(q_p, \alpha) =\textsf{accept}, \hat{\delta}(q, \alpha) = (p', q') \text{ and } \iota_{p'}(\alpha) = (p'', q_{p'}) \\
        \textsf{accept} & \text{if } \delta_p(q_p, \alpha) =\textsf{accept} \text{ and } \hat{\delta}(q, \alpha) =\textsf{accept} \\
        \textsf{reject} & \text{otherwise}
        \end{cases} \\
    \iota'(\alpha) &=
        \begin{cases}
        \textsf{accept} & \text{if }\hat{\iota}(\alpha) =\textsf{accept} \\
        (p'', (q_p, q)) & \text{if } \hat{\iota}(\alpha) = (p', q) \text{ and } \iota_{p'}(\alpha) = (p'', q_p) \\
        \textsf{reject} & \text{otherwise}
        \end{cases}
\end{align*}
\end{definition}

In $\compose^\ssigma(A)$, a state $(q_p, q) \in Q_p \times Q$ represents that we are currently in state $q_p$ of the subautomaton for $\ssigma(p)$, and that execution will continue in state $q$ of $A$ once the subautomaton accepts.
With this in mind, the first case in the definition of $\delta'$ covers transitions within the subautomaton, while the second and third case say what to do when the subautomaton accepts.

\begin{figure}
    \[
        \begin{tikzpicture}[baseline=-12mm,every state/.style={scale=0.8}]
            \node (initial) {$\bullet$};
            \node[draw,dotted,rounded corners=1mm,right=7mm of initial] (ssigmap1') {$\ssigma(p_1)$};
            \node[state,right=7mm of ssigmap1'] (s1) {$s_1$};
            \draw (initial) edge node[above] {$t$} (ssigmap1');
            \draw (ssigmap1') edge[-latex] (s1);
            \node[draw,dotted,rounded corners=1mm,below=4mm of s1] (ssigmap2) {$\ssigma(p_2)$};
            \node[state,below=12mm of s1] (s2) {$s_2$};
            \node[draw,dotted,rounded corners=1mm] at (initial |- s2) (ssigmap1) {$\ssigma(p_1)$};
            \draw (initial) edge node[left] {$\overline{t} \wedge t''$} (ssigmap1);
            \draw (ssigmap1) edge[-latex] (s2);
            \draw (s1) edge node[right] {$t''$} (ssigmap2);
            \draw (ssigmap2) edge[-latex] (s2);
            \node[left=4mm of initial] (initialacc) {$\overline{t} \wedge \overline{t''}$};
            \draw (initial) edge[double,double distance=2pt,-implies] (initialacc);
            \node[right=4mm of s1] (s1acc) {$\overline{t''}$};
            \draw (s1) edge[double,double distance=2pt,-implies] (s1acc);
            \node[right=4mm of s2] (s2acc) {$\mathsf{true}$};
            \draw (s2) edge[double,double distance=2pt,-implies] (s2acc);
        \end{tikzpicture}
        \;\leadsto\;
        \begin{tikzpicture}[baseline=-12mm,every state/.style={scale=0.8}]
            \node (initial) {$\bullet$};
            \node[state,right=15mm of initial] (r1s1) {$r_1, s_1$};
            \node[state,below=12mm of initial] (r3s2) {$r_3, s_2$};
            \node[state] at (r3s2 -| r1s1) (r2s1) {$r_2, s_1$};
            \node[left=4mm of initial] (initialacc) {$\overline{t} \wedge \overline{t''}$};
            \draw (initial) edge[double,double distance=2pt,-implies] (initialacc);
            \draw (initial) edge[-latex] node[above] {$t|p$} (r1s1);
            \draw (initial) edge[-latex] node[left] {$\overline{t} \wedge t''|p''$} (r3s2);
            \node[right=4mm of r1s1] (r1s1acc) {$\overline{t'} \wedge \overline{t''}$};
            \draw (r1s1) edge[double,double distance=2pt,-implies] (r1s1acc);
            \node[right=4mm of r2s1] (r2s1acc) {$\overline{t''}$};
            \draw (r2s1) edge[double,double distance=2pt,-implies] (r2s1acc);
            \node[below=4mm of r3s2] (r3s2acc) {$\overline{t''}$};
            \draw (r3s2) edge[double,double distance=2pt,-implies] (r3s2acc);
            \draw (r1s1) edge[-latex] node[right] {$t'|p'$} (r2s1);
            \draw (r1s1) edge[-latex] node[above,sloped,pos=0.5] {$\overline{t'} \wedge t''|p''$} (r3s2);
            \draw (r2s1) edge[-latex] node[above,pos=0.4] {$t''|p''$} (r3s2);
            \draw (r3s2) edge[-latex,loop left] node[left] {$t''|p''$} (r3s2);
        \end{tikzpicture}
    \]
    \vspace{-4mm}
    \caption{
        On the left, the automaton $(Q, \hat{\delta}, \hat{\iota})$ from \Cref{fig:shortcut-drawing}, drawn suggestively with the transitions ``entering'' the subautomata or $\ssigma(p_1)$ and $\ssigma(p_2)$, along with the tests that are true when this happens, and the states where the automaton resumes after a subautomaton accepts.
        On the right, the automaton $\compose^\ssigma(A)$, obtained by following the recipe from \Cref{def:compose-sigma}.
    }%
    \label{fig:compose-example}
   \vspace{-3mm}
\end{figure}
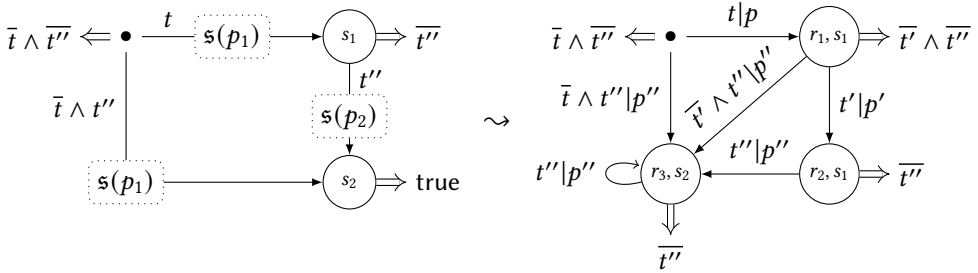

\begin{example}
Returning to the automaton $A$ and the map $\ssigma$ from \Cref{ex:shortcut}, we compute $\compose^\ssigma(A)$ in \Cref{fig:compose-example}.
On the left, we see shortcut version of $A$, but with the automata obtained from $\ssigma$ drawn ``between'' the states.
On the right, these subautomata are expanded to obtain the composed automaton, per \Cref{def:compose-sigma}.
Here, $(r_1, s_1)$ has a transition labeled $\overline{t'} \wedge t' | p''$ to $(r_3, s_2)$, because $(r_1, s_1)$ is in the subautomaton $\ssigma(p_1)$, and when  $\alpha \leq \overline{t'} \wedge t''$, we have $\delta_{p_1}(r_1, \alpha) = \mathsf{accept}$ and $\hat{\delta}(s_1, \alpha) = (p_2, s_2)$; thus $\delta'((r_1, s_2), \alpha)$ takes on the value $(p'', (r_3, s_2))$ because $\iota_{p_2}(\alpha) = (p'', r_3)$.
\end{example}

\begin{proposition}\label{prop:apply-compose-2}
$L(\compose^\ssigma(A)) = \apply_{\ssigma'}(L(A))$, where $\ssigma'$ is given by $\ssigma'(p)=L(\ssigma(p))$.
\end{proposition}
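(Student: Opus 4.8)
The plan is to prove the identity by double inclusion, thinking of an accepting run of $\compose^\ssigma(A)$ as an accepting run of $A$ into which an accepting run of the relevant subautomaton $\ssigma(p)$ has been spliced for every action $p$. Recall from \Cref{def:apply-lang} that a guarded string $x \in \apply^{\ssigma'}(L(A))$ comes from some $u = \alpha_0\, p_{j_0}\, \alpha_1 \cdots p_{j_{n-1}}\, \alpha_n \in L(A)$ together with a choice of $w_i \in L(\ssigma(p_{j_i}))$, assembled as $x = \alpha_0 \diamond w_0 \diamond \alpha_1 \diamond \cdots \diamond w_{n-1} \diamond \alpha_n$. I would match each $w_i$ that contains at least one action with a sojourn of the composed run through the corresponding copy of $\ssigma(p_{j_i})$ (tracked by the first coordinate of a state $(q_p, q) \in Q'$), and each single-atom $w_i = \alpha_i$ (which disappears under $\diamond$, since $\alpha_i \diamond \alpha_i = \alpha_i$) with a single shortcut step of $\hat\delta$ or $\hat\iota$.

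First I would pin down the shortcut maps. Following the recursion in \Cref{def:delta-hat}, I would show that $\hat\delta(q,\alpha) = (p', q')$ exactly when there is a finite chain $q = q^{(0)}, \dots, q^{(m)}$ with $\delta(q^{(i)}, \alpha) = (p^{(i)}, q^{(i+1)})$ and $\iota_{p^{(i)}}(\alpha) = \textsf{accept}$ for all $i < m$, and $\delta(q^{(m)}, \alpha) = (p', q')$ with $\iota_{p'}(\alpha) \neq \textsf{accept}$; that $\hat\delta(q,\alpha) = \textsf{accept}$ when such a chain instead ends in $\delta(q^{(m)}, \alpha) = \textsf{accept}$; and that $\hat\delta(q,\alpha) = \textsf{reject}$ precisely when the chain never terminates, which (as $Q$ is finite) happens iff it enters a cycle. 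An analogous statement holds for $\hat\iota$ by \Cref{def:iota-hat}. The point of this lemma is that each step of such a chain corresponds to an action $p^{(i)}$ whose subautomaton accepts $\alpha$ with the empty continuation $w = \alpha$, i.e.\ exactly the single-atom substitutions that collapse under guarded composition; the \textsf{reject} case matches the fact that guarded strings are finite, so $\apply^{\ssigma'}$ can never realise an infinite collapse.

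With the shortcut understood, I would prove the key correspondence by induction on the length of $x$: for every $(q_p, q) \in Q'$,
\[
    x \in L_{\compose^\ssigma(A)}((q_p, q))
    \iff
    x = w \diamond y \text{ for some } w \in L_{\ssigma(p)}(q_p) \text{ and } y \in \apply^{\ssigma'}(L_A(q)).
\]
The base case and inductive step are driven by the three nontrivial clauses of $\delta'$ in \Cref{def:compose-sigma}: the first clause keeps the run inside $\ssigma(p)$ (so $w$ grows); the second clause fires when $\ssigma(p)$ accepts, at which point $\hat\delta(q,\alpha) = (p', q')$ hands control back to $A$, which---after collapsing empty-accepting actions via the chain above---performs $p'$, enters $\ssigma(p')$ through $\iota_{p'}(\alpha) = (p'', q_{p'})$, and continues from $(q_{p'}, q')$; the third clause produces acceptance exactly when both $\ssigma(p)$ and (the shortcut of) $A$ accept. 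Feeding this correspondence into the two clauses of $\iota'$ and using the $\hat\iota$ characterisation then yields $L(\compose^\ssigma(A)) = L_{\compose^\ssigma(A)}(\iota') = \apply^{\ssigma'}(L_A(\iota)) = \apply^{\ssigma'}(L(A))$, as required; composing with \Cref{prop:apply-compose-1} gives \Cref{prop:apply-compose}, the automata-level analogue of \Cref{prop:apply-well-defined}.

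The hard part will be the bookkeeping at the seams between spliced subruns, i.e.\ making the single-atom (collapsing) substitutions of $\apply^{\ssigma'}$ line up exactly with the shortcut steps of $\hat\delta$ and $\hat\iota$, without ever double-counting or dropping the shared boundary atom. Getting the base case of the induction right is similarly delicate, since a maximal chain of immediately-accepting subautomata can consume several actions of $A$ before the composed automaton either accepts or commits to a genuine action, and one must check that the \textsf{reject} outcome of the shortcut recursion loses no accepting word (again because every $x \in \apply^{\ssigma'}(L(A))$ is finite). Once the shortcut lemma is in place, however, the remaining case analysis is routine.
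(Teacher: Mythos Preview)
Your proposal is correct and follows essentially the same approach as the paper: both hinge on the per-state correspondence $x \in L_{\compose^\ssigma(A)}((q_p,q)) \iff x = v \diamond y$ with $v \in L_{\ssigma(p)}(q_p)$ and $y \in \apply^{\ssigma'}(L_A(q))$, proved by induction on the length of $x$ and then lifted to the full automaton via $\iota'$. The paper's own proof is a bare sketch (it states the claim and says ``by induction on $w$''); your explicit unwinding of $\hat\delta$ and $\hat\iota$ into finite chains of immediately-accepting subautomata, and your discussion of how these chains align with the single-atom substitutions that collapse under $\diamond$, is exactly the bookkeeping the paper leaves implicit.
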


\begin{proof}
By definition, $\apply_{\ssigma'}(L(A))$ is the language consisting of all guarded strings
\[
	\alpha_0 \diamond w_0  \diamond \alpha_1  \diamond \cdots  \diamond \alpha_{n-1}  \diamond w_{n-1}  \diamond \alpha_n
\]
for $\alpha_0 p_0 \alpha_1 \cdots \alpha_{n-1} p_{n-1} \alpha_n \in L(A)$, where each $w_i \in L(\ssigma(p_i))$.
We must show that $L(\compose^\ssigma(A))$ is equal to this language, i.e, that both contain the same guarded strings.
Our strategy will be to prove this by induction on the length of the guarded string in question.
However, for this we need a stronger induction hypothesis, embodied by the following claim.

\medskip\par\emph{Claim 1: } For all $w \in \G(\Sigma_1,T)$
and states $(q_p,q) \in Q_p \times Q$, the following are equivalent:
\begin{enumerate}
    \item
    $w \in L_{\compose^\ssigma(A)}((q_p,q))$
    \item
    $w$ is of the form $v \diamond \alpha_0 \diamond w_0 \diamond \alpha_1 \diamond \cdots \diamond \alpha_{n-1} \diamond w_{n-1} \diamond \alpha_n$
    where $v \in L_{\ssigma(p)}(q_p)$ and there is some guarded string
    $\alpha_0 p_0 \alpha_1 \cdots \alpha_{n-1} p_{n-1} \alpha_n \in L_A(q)$,
    such that each $w_i \in L(\ssigma(p_i))$.
\end{enumerate}
This captures that an accepting run of $\compose^\ssigma(A)$ starting in $(q_p,q)$ consists of an accepting run of $A_p$ starting in state $q_p$, followed by an accepting run of $A$ in state $q$ where each each action $p$ is replaced by a string belonging to $L(\ssigma(p))$.
Claim 1 can now be proved by induction on $w$.
\end{proof}

\Cref{prop:apply-compose} then follows from \Cref{prop:apply-compose-1} and \Cref{prop:apply-compose-2}.

\begin{remark}\label{rem:trivfree-wlog}
Let us say that an instance of composition $\compose^\ssigma(A)$ is \emph{triviality-free} if $\hat{\delta} = \delta$ and $\hat{\iota} = \iota$.
For every automaton $A$ and map $\ssigma$ there is an automaton $A'$ over the same alphabets as $A$ and with at most as many states as $A$, such that $L(\compose^\ssigma(A)) = L(\compose^\ssigma(A'))$ and such that $\compose^\ssigma(A')$ is triviality-free.
Indeed, we can simply replace $A = (Q, \delta, \iota)$ by $A' = (Q, \hat{\delta}, \hat{\iota})$.
For simplicity, we usually work with triviality-free composition when possible in the sequel.
\end{remark}

In preparation of the next subsection, some additional terminology will be helpful.
For two states $r = (q_1, q_2), r' = (q_3, q_4)$ of $\compose^\ssigma(A)$, we say that $r$ and $r'$ are \emph{in the same locale} $q_2 = q_4$, and $q_1$ and $q_3$ belong to the same subautomaton --- that is $q_1$ and $q_3$ are states of $\ssigma(p)$ for some $p \in \Sigma_0$.
A transition is \emph{local} when it is generated by the first case for $\delta'$ in \Cref{def:compose-sigma}.

Intuitively, a local transition in $\compose^\ssigma(A)$ ``stays within the same sub-automaton''.
If $r\transition{\alpha}{p} r'$ is a local transition, then $r$ and $r'$ are in the same locale, but the converse need not hold: a non-local transition could, after exiting a locale, coincidentally re-enter the same locale.

Composed automata enjoy the property that transitions exiting a locale must do so while executing the same actions; in essence, this is true because we work with deterministic automata.
We record this in the following lemma, which will play a technical role soon.

\begin{lemma}\label{lem:non-local-transitions-determinacy}
If $r_1, r_2$ are in the same locale of $\compose^\ssigma(A)$, and $r_1\transition{\alpha}{p} r'_1$ and $r_2\transition{\alpha}{p'} r'_2$ are non-local transitions, then $p=p'$.
\end{lemma}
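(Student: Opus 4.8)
The plan is to trace both non-local transitions back to the clauses of \Cref{def:compose-sigma} that can generate them, and then to observe that every quantity determining the action label depends only on the shared locale and the common atom $\alpha$, and not on the particular subautomaton states from which $r_1$ and $r_2$ start. Concretely, I would write $r_1 = (q_1, q)$ and $r_2 = (q_3, q)$, using that the two states share the locale $q$.

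First I would argue that both transitions arise from the \emph{second} clause of $\delta'$ in \Cref{def:compose-sigma}. Indeed, each is an action-carrying transition, and the only two clauses of $\delta'$ that produce such transitions are the first (local) and the second (non-local); since both $r_1\transition{\alpha}{p} r_1'$ and $r_2\transition{\alpha}{p'} r_2'$ are non-local, the first clause is excluded, so both must come from the second. Unpacking that clause for $r_1$ yields $\delta_{p}(q_1,\alpha) = \textsf{accept}$ for the subautomaton containing $q_1$, together with $\hat{\delta}(q,\alpha) = (\tilde{p}_1, \tilde{q}_1)$ and $\iota_{\tilde{p}_1}(\alpha) = (p, \cdot)$, so that the label $p$ is exactly the first action of $\ssigma(\tilde{p}_1)$ on reading $\alpha$. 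Unpacking it for $r_2$ gives, symmetrically, $\hat{\delta}(q,\alpha) = (\tilde{p}_2, \tilde{q}_2)$ and $\iota_{\tilde{p}_2}(\alpha) = (p', \cdot)$.

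The key step is to compare these two extractions. Both invoke $\hat{\delta}$ at the \emph{same} arguments $q$ and $\alpha$, and $\hat{\delta}$ is single-valued (it is well-defined by \Cref{def:delta-hat}); hence $(\tilde{p}_1, \tilde{q}_1) = (\tilde{p}_2, \tilde{q}_2)$, and in particular $\tilde{p}_1 = \tilde{p}_2 =: \tilde{p}$. Thus both transitions enter the \emph{same} subautomaton $\ssigma(\tilde{p})$ through its initialization on $\alpha$, and the labels $p$ and $p'$ are each read off the first component of $\iota_{\tilde{p}}(\alpha)$. Since $\ssigma(\tilde{p})$ is a deterministic KAT automaton, $\iota_{\tilde{p}}$ is a function, so $\iota_{\tilde{p}}(\alpha)$ is a single value, and therefore $p = p'$, as required.

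I do not expect a genuine obstacle here; the only thing requiring care is the bookkeeping that pins both transitions to the second clause and correctly identifies which data the action label depends on. The conceptual content is simply that a non-local step out of a locale is forced, via the (functional) shortcut $\hat{\delta}$, to consult $A$'s deterministic continuation at $q$, and then to enter the dictated subautomaton at its deterministic initial action — neither of which can distinguish the starting states $r_1$ and $r_2$.
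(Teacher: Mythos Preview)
Your proposal is correct and follows essentially the same approach as the paper: both arguments recognize that a non-local action transition must arise from the second clause of $\delta'$ in \Cref{def:compose-sigma}, then exploit that $\hat{\delta}(q,\alpha)$ and $\iota_{\tilde p}(\alpha)$ are single-valued to conclude the action labels coincide. The paper's proof is slightly terser but identical in substance.
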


\begin{proof}
Let $r=(q_1,q)$ and $r'=(q_2,q)$, where $q_1,q_2$ are states of the subautomaton $\ssigma(p)$
for $p\in\Sigma_0$.
Since both transitions are non-local, by the definition of $\delta'$ in \Cref{def:compose-sigma} we have that
\begin{quote}
$\delta_p(q_1,\alpha) = \textsf{accept}$ and $\hat{\delta}(q, \alpha) = (p'', \ldots)$ and $\iota_{p''}(\alpha) = (p, \ldots)$, and, similarly,

$\delta_p(q_2,\alpha) = \textsf{accept}$ and $\hat{\delta}(q, \alpha) = (p'', \ldots)$ and $ \iota_{p''}(\alpha) = (p', \ldots)$
\end{quote}
It then follows that $p=p'$.
\end{proof}

\subsection{Not All Deterministic Languages Arise From Compositions of Small Automata}

We will now establish~\Cref{thm:GKATO-main} by proving the following theorem.
Recall that, in this section, by an \emph{automaton} we will always mean a deterministic KAT automaton.

\begin{theorem}\label{thm:main}
If an automaton $A$ is obtained through (repeated) composition from automata
each having strictly less than $k$ states, then $A$ does not recognize the
language $L_k$.
\end{theorem}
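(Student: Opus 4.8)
The plan is to exhibit the canonical deterministic KAT automaton for $L_k$, read off the structural features that \emph{any} automaton recognizing $L_k$ must inherit, and then show that these features are incompatible with a locale decomposition into pieces of fewer than $k$ states.

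First I would describe the minimal automaton $M_k$ for $L_k$. It has states $m_1, \dots, m_k$, where $m_c$ records that the index of the most recently read atom is $c$: on the singleton atom $\alpha_c$ the state $m_c$ accepts, on $\alpha_i$ with $i \neq c$ it emits the action $p_c$ and moves to $m_i$, and on every non-singleton atom it rejects; the initialization behaves like $m_1$ except that it rejects on $\alpha_1$ (enforcing $i_1 \neq 1$). Three features of $M_k$ are essential. (F1)~The states $m_1, \dots, m_k$ have pairwise distinct residual languages, so by Myhill--Nerode (cf.\ \Cref{prop:automata-complete}) every automaton recognizing $L_k$ has at least $k$ reachable live states; in particular a single automaton with fewer than $k$ states cannot recognize $L_k$, which settles the base case of the induction. (F2)~The action emitted from $m_c$ is \emph{always} $p_c$, independent of which non-accepting atom is read; hence on any single atom $\alpha_i$ the $k-1$ classes $\{m_c : c \neq i\}$ emit $k-1$ \emph{pairwise distinct} actions. (F3)~The classes are mutually recurrent: for all $c \neq i$ one can pass from $m_c$ to $m_i$, so $L_k$ contains arbitrarily long runs visiting all $k$ classes.

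The argument proper is an induction on the number of composition steps used to build $A$. Since $\compose_\ttau$ merely relabels the atoms of its operand and introduces no new states, it suffices to treat $A = \compose^\ssigma(B)$, where $B$ and every $\ssigma(p)$ are built from fewer compositions. Assuming towards a contradiction that $A$ recognizes $L_k$, I would partition the reachable live states of $A$ into the $k$ classes $[m_1], \dots, [m_k]$ according to their residual language; by (F1)--(F3) a state in $[m_c]$ emits $p_c$ on every atom $\alpha_i$ with $i \neq c$ and then lands in $[m_i]$. Now fix an atom $\alpha_i$ and apply \Cref{lem:non-local-transitions-determinacy}: within any one locale of $A$, all \emph{non-local} $\alpha_i$-transitions emit the same action. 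Since distinct classes must emit distinct actions on $\alpha_i$ by (F2), at most one class can leave a given locale non-locally on $\alpha_i$; every other class represented in that locale must move \emph{locally}, i.e.\ stay inside the same subautomaton $\ssigma(p)$ (a piece built from fewer compositions). Feeding the recurrence (F3) into this dichotomy, I would track the $k$ mutually recurrent classes through the locale hierarchy: the only ``fresh'' distinctions available inside a single leaf are bounded by its state count $k-1$, whereas each crossing of a locale boundary is, by the lemma, forced to collapse the emitted action to a value determined by the locale and the atom alone. A counting argument along these lines shows that the $k$ simultaneously recurrent, action-distinguished classes demanded by $L_k$ cannot be sustained, contradicting that $A$ recognizes $L_k$.

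The main obstacle is exactly this last step: turning the locale/recurrence dichotomy into a rigorous bound. The naive measure --- the total number of states, or the number of distinct actions emitted on a fixed atom --- is useless, because composition blows these up without limit, so a large composite easily realizes $k-1$ distinct actions on $\alpha_i$ \emph{somewhere}. The subtlety is that $L_k$ demands these $k-1$ actions not merely to exist but to occur on mutually reachable, recurrent classes, and the right invariant must therefore be a \emph{hierarchical} quantity measuring how much action-distinguishing memory can survive a boundary crossing. Getting this invariant to be (i)~preserved by $\compose^\ssigma$, (ii)~bounded at each level by the leaf size $k-1$ via \Cref{lem:non-local-transitions-determinacy}, and (iii)~violated by the recurrent $k$-clique structure of $M_k$, is where the real work lies; everything preceding it is essentially bookkeeping about residuals and the definition of $\compose^\ssigma$ from \Cref{def:compose-sigma}.
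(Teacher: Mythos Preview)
Your strategy is the paper's strategy, and you have correctly located the crux: the missing invariant that (i) is inherited by some component under $\compose^\ssigma$, (ii) forces at least $k$ states, and (iii) holds for any automaton recognizing $L_k$.  The paper's name for this invariant is \emph{$k$-density} (\Cref{def:kdense}), and it is essentially an abstraction of your (F1)--(F3): nonempty sets $S_1,\dots,S_k$ of states, atoms $\alpha_1,\dots,\alpha_k$, and actions $p_{qj}$ such that each $q\in S_i$ accepts $\alpha_i$, each $q\in S_i$ transitions into $S_j$ on $\alpha_j$ emitting $p_{qj}$, and on any fixed $\alpha_m$ distinct classes emit distinct actions.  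The crucial difference from your formulation is that the $S_i$ are \emph{not} required to be Myhill--Nerode classes.  This generality is what makes the invariant project: when you push a $k$-density witness from $\compose^\ssigma(B)$ down to $B$ or to some $\ssigma(p)$, the resulting sets are not residual-language classes of the smaller automaton, only arbitrary subsets satisfying the three conditions.  Tying yourself to Myhill--Nerode would block exactly this step.

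With $k$-density in hand, the paper avoids your counting argument entirely via an all-or-nothing dichotomy (\Cref{lem:all-or-nothing}): in $\compose^\ssigma(B)$, either no locale contains states from two different $S_i$, or some single locale contains representatives of \emph{all} $k$ classes.  The proof is a one-line application of \Cref{lem:non-local-transitions-determinacy}: if a locale hits classes $i\neq j$ but misses class $\ell$, then the $\alpha_\ell$-transitions out of both representatives are non-local (their targets lie in $S_\ell$, outside the locale), hence emit the same action, contradicting the third density condition.  In the first case $B$ is $k$-dense (take $S'_i$ to be the second coordinates of $S_i$); in the second case $\ssigma(p)$ is $k$-dense for the relevant $p$ (take $S'_i$ to be the first coordinates at the shared locale).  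This is cleaner than your atom-by-atom ``at most one class leaves non-locally on $\alpha_i$'' analysis, which is correct in spirit but does not by itself yield the dichotomy you need.
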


Before proving~\Cref{thm:main}, we will explain why it implies~\Cref{thm:GKATO-main}.

\begin{proof}[Proof of \Cref{thm:GKATO-main}]
Let $\mathcal{O}$ be a finite set of regular control flow operations.
Each operation in $\mathcal{O}$ is defined by a deterministic KAT expression, and thus corresponds to a deterministic KAT automaton, by \Cref{prop:deterministic-iff-language-deterministic,prop:automata-complete}.
Let $k$ be a natural number strictly greater than the maximum the number of states of these (finitely many) deterministic KAT automata.
By \Cref{thm:main}, the deterministic guarded language $L_k$ cannot be expressed by any term built up from test expressions and atomic actions using the operations in $\mathcal{O}$. In other words, $L_k$ is not generated by
$\mathcal{O}$.
\end{proof}

The general intuition behind the proof of \Cref{thm:main} is as follows:
for an automaton to accept $L_k$, it must remember the atom before the most recent action.
This requires $k$ states.
Determinism implies that these $k$ states form a ``cluster'', and these clusters are completely connected (i.e., clique-like), so that they cannot be decomposed.
We will formalize this using a property that can be projected along a composition --- i.e, if a composed automaton satisfies this property, then either the operator or one of the operands does, too.
This will help us pin down the patterns of behavior that cannot appear as a result of repeated composition, but come from the original set of operators.

\begin{definition}\label{def:kdense}
An automaton $(Q, \delta, \iota)$ over alphabets $(\Sigma,T)$ is \emph{$k$-dense} if there exist non-empty subsets $S_1, \ldots, S_k\subseteq Q$, distinct atoms $\alpha_1, \ldots, \alpha_k \in \At_T$ and, for all $1 \leq i, j \leq n$ and $q \in S_i$ an action $p_{qj} \in \Sigma$, such that all of the following hold for all $i,j,m\leq k$:
\begin{itemize}
    \item if $q \in S_i$, then $q$ accepts $\alpha_i$;
    \item if $i \neq j$ and $q \in S_i$, then there exists a $q' \in S_j$ with $q \transition{\alpha_j}{p_{qj}}_A q'$; and
    \item if $q \in S_i$ and $q' \in S_j$ such that $p_{qm} = p_{q'm}$, then $i = j$.
\end{itemize}
\end{definition}

The sets $S_i$ above match the clusters of states hinted at earlier.
The first rule ensures that all states in $S_i$ can accept with atom $\alpha_i$.
Because we limit ourselves to deterministic automata, and the second rule says that all states in $S_i$ transition on $\alpha_j$ with $j \neq i$, all clusters must be disjoint.
The second rule furthermore implies that \emph{all} transitions labeled by $\alpha_j$ starting in $q \in S_i$ must end in $S_j$, and have an action determined by $q$ and $j$.
The third rule says that if two states transition into the same cluster with the same action, then they must also be in the same cluster.

%

We can connect $k$-density to the language $L_k$ as follows.

\begin{lemma}\label{lem:Lk-dense}
Every automaton that recognizes the language $L_k$ must be $k$-dense.
\end{lemma}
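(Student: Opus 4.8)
The plan is to take an automaton $A = (Q,\delta,\iota)$ with $L(A) = L_k$ and exhibit the witnessing data $S_1,\dots,S_k$, $\alpha_1,\dots,\alpha_k$, and the actions $p_{qj}$ required by \Cref{def:kdense}. The natural choice for the atoms is the family already named in \Cref{def:Lk}: set $\alpha_i = \{t_i\}$, which are pairwise distinct. The key observation driving everything is that $L_k$ records, in its structure, exactly which atom preceded the most recent action: a guarded string of $L_k$ has the shape $\alpha_{i_1} p_1 \alpha_{i_2} p_{i_1} \cdots \alpha_{i_n} p_{i_{n-1}} \alpha_{i_n}$, so after reading a prefix ending in the atom $\alpha_j$, the automaton must ``know'' $j$, because the next action to be performed is forced to be $p_j$ (or the string may terminate if the atom repeats). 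This is what will let me read off the clusters.

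First I would define, for each $i$, the set $S_i$ to be the set of states $q \in Q$ reachable from the initial dynamics by a prefix of some word in $L_k$ that ends immediately after the atom $\alpha_i$ (equivalently, states $q$ such that $\delta(q,-)$ is the residual language $(L_k)_w$ for some prefix $w$ ending in $\alpha_i$). Determinism of $A$ (\Cref{prop:automata-complete}) guarantees that reading a given prefix lands in a unique such state, so these are well-defined nonempty sets. The \emph{first} clause of $k$-density (each $q \in S_i$ accepts $\alpha_i$) follows because a word of $L_k$ can legally stop right after $\alpha_i$ when the atom repeats — i.e. $\ldots \alpha_i \in L_k$ — so the corresponding residual language contains $\epsilon$; but I must argue this holds for \emph{every} $q \in S_i$, which I expect to follow from analyzing which residuals occur. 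For the \emph{second} clause, if $q \in S_i$ and $j \neq i$, then the residual language at $q$ contains a continuation beginning $\alpha_j p_j \cdots$, and by determinism $\delta(q,\alpha_j)$ must be a transition $(p_{qj}, q')$ with $q'$ the state reached after reading $\alpha_j$, hence $q' \in S_j$; the action $p_{qj}$ is forced to be $p_i$ by the structure of $L_k$ (the action performed after an atom following the $i$-th block is $p_i$), which is uniform across $S_i$. I would set $p_{qj} := p_i$ accordingly.

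The \emph{third} clause — that $p_{qm} = p_{q'm}$ forces $i = j$ when $q \in S_i$, $q' \in S_j$ — is where the ``clique-like'' rigidity is extracted, and I expect this to be the main obstacle. Since $p_{qm}$ will turn out to be determined entirely by $i$ (namely $p_{qm} = p_i$ for $m \neq i$, reflecting that the action label depends only on the index of the preceding block), the equality $p_{qm} = p_{q'm}$ reads as $p_i = p_j$, and because the actions $p_1,\dots,p_k$ are pairwise distinct in $\Sigma$ (they index distinct atoms and are syntactically distinct symbols in \Cref{def:Lk}), this immediately gives $i = j$. The care required is to verify that the actions are genuinely forced to be the index-determined symbols and that no ``coincidental'' transition on $\alpha_m$ from a state in $S_i$ can carry a different action — this is exactly the kind of determinacy that \Cref{lem:non-local-transitions-determinacy} is designed to support, though here it is cleaner because $A$ is a single deterministic automaton rather than a composite. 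The plan, then, is: fix the atoms, define the $S_i$ via residual (Myhill–Nerode) classes using \Cref{prop:automata-complete}, verify acceptance and transition clauses by matching prefixes of $L_k$ against the deterministic transition structure, and close with the distinctness of the action alphabet to obtain the third clause.
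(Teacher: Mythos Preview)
Your plan is correct and essentially matches the paper's proof: choose the atoms $\alpha_i$ from \Cref{def:Lk}, take $S_i$ to consist of states reached after processing a word-prefix whose last atom was $\alpha_i$, set $p_{qj} = p_i$ for $q \in S_i$, and verify the three clauses (the third following immediately from the pairwise distinctness of $p_1,\dots,p_k$). The paper phrases the definition of $S_i$ slightly differently---as the set of final states of accepting runs of words $w \in L_k$ (with at least one action) ending in $\alpha_i$---which makes the first clause hold \emph{by construction} and spares you the residual-analysis you flagged; but the two descriptions of $S_i$ coincide, since any state reached after a transition on $\alpha_i$ is the end of the accepting run for the word obtained by appending $\alpha_i$.

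Two small slips worth cleaning up: your parenthetical about residuals $(L_k)_w$ for $w$ ending in an atom is not quite well-typed (states of the Brzozowski automaton correspond to $w \in (\At \cdot \Sigma)^*$, not to atom-terminated prefixes), and the phrase ``continuation beginning $\alpha_j p_j$'' should read $\alpha_j p_i$---you correct yourself two lines later when you set $p_{qj} := p_i$, so this is evidently a typo rather than a conceptual issue.
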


\begin{proof}
Let $A=(Q,\delta,\iota)$ be the automaton in question, which, by~\Cref{def:Lk}, has a test alphabet consisting of $t_1, \dots, t_k$, and an action alphabet consisting of $p_1, \dots, p_k$.
We construct a witness to the $k$-density of $A$ as follows.
For each $i\leq k$, let $S_i$ be the set of states $q$ for which there exists a guarded string $w\in L_k$, containing at least one action, ending on atom $\alpha_i$ such
that the accepting run of $w$ ends in state $q$.
Furthermore, for $1 \leq i, j \leq k$ and $q \in S_i$, we choose $p_{qj} = p_i$.

To see that these choices witness $k$-density of $A$, first note that the first and third condition hold by construction.
For the second item, let $q\in S_i$ and let
\(w = x \alpha_{i_n} p_{i_{n-1}} \alpha_{i_n} \in L_k\)
have an accepting run ending in $q$, where $i_n = i$ (and $x$ denotes the relevant prefix).
Now let $j \neq i$ and
\( w' = x \alpha_{i_n} p_{i_{n-1}} \alpha_j p_{i_{n}} \alpha_j \).
Clearly $w'\in L_k$.
Therefore, there must be an accepting run.
Since $A$ is deterministic, this run must extend the accepting run for $w$ by one more transition.
The accepting run for $w'$ must furthermore end in a state $q'$ belonging to $S_j$.
This shows that $q \transition{\alpha_j}{p_{qj}} q'$.
\end{proof}

\begin{lemma}\label{lem:k-states}
If an automaton is $k$-dense, it has at least $k$ states.
\end{lemma}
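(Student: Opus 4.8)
The plan is to show that the witnessing subsets $S_1, \ldots, S_k$ from the definition of $k$-density are pairwise disjoint. Since each $S_i$ is non-empty and all are subsets of $Q$, pairwise disjointness immediately yields $|Q| \geq |\bigcup_{i=1}^k S_i| = \sum_{i=1}^k |S_i| \geq k$, which is the desired bound. So the entire argument reduces to establishing disjointness, and this is where determinism of the automaton does the work.

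To prove disjointness, I would suppose toward a contradiction that some state $q$ lies in $S_i \cap S_j$ with $i \neq j$. Viewing $q$ as an element of $S_i$, the first clause of $k$-density tells us that $q$ accepts $\alpha_i$; in the notation for deterministic automata this says $\delta(q, \alpha_i) = \textsf{accept}$. On the other hand, viewing $q$ instead as an element of $S_j$ and applying the second clause with target index $i$ (which is permitted precisely because $i \neq j$), we obtain a state $q' \in S_i$ with $q \transition{\alpha_i}{p_{qi}}_A q'$, i.e., $\delta(q, \alpha_i) = (p_{qi}, q')$. But $A$ is a deterministic KAT automaton, so $\delta(q, \alpha_i)$ is a single element of $\{\textsf{accept}, \textsf{reject}\} + \Sigma \times Q$ and cannot equal both $\textsf{accept}$ and a pair in $\Sigma \times Q$. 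This contradiction shows $S_i \cap S_j = \emptyset$ whenever $i \neq j$.

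With disjointness in hand, the counting step is immediate, as sketched above. It is worth noting what this proof does and does not use: the crux is the interplay between the first two clauses of $k$-density under determinism, where one clause forces an accepting move on $\alpha_i$ and the other forces an action transition on the same atom. Neither the distinctness of the atoms $\alpha_1, \ldots, \alpha_k$ nor the third clause of $k$-density is needed for this particular lemma; those ingredients are relevant elsewhere (e.g., in relating $k$-density to $L_k$ and in the projection argument to come). Consequently there is no serious obstacle here --- the only subtlety is selecting the correct pair of clauses to combine, after which determinism closes the argument.
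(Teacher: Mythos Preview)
Your proof is correct and follows essentially the same approach as the paper: both argue that the witnessing sets $S_1,\ldots,S_k$ are pairwise disjoint by deriving, from a hypothetical $q \in S_i \cap S_j$ with $i \neq j$, that $\delta(q,\alpha_i)$ would have to be both $\textsf{accept}$ and an action pair, contradicting determinism. Your additional commentary on which clauses are actually used is accurate and a nice touch.
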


\begin{proof}
Let $S_1, \ldots, S_k$ be witnessing subsets of the state space.
By definition, $S_1, \ldots, S_k$ are non-empty.
They are also disjoint: suppose $r\in S_i\cap S_j$ for some $i\neq j$.
Since $r\in S_i$, we have that $\delta(r, \alpha_i) = \mathsf{accept}$.
Since $r\in S_j$, we have that $r \transition{\alpha_i}{p_j} s$ for some $s\in S_i$. This contradicts
determinacy.
\end{proof}

\begin{lemma}\label{lem:dense-tau}
If $\compose_\ttau(A)$ is  $k$-dense, then so is $A$.
\end{lemma}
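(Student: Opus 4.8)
The plan is to reuse the $k$-density witness of $\compose_\ttau(A)$ almost verbatim, pulling the atoms back along $\ttau^{-1}$. Recall that by construction $\compose_\ttau(A)$ has the very same state space $Q$ as $A$, with $\delta'(q,\beta)=\delta(q,\ttau^{-1}(\beta))$ and $\iota'(\beta)=\iota(\ttau^{-1}(\beta))$. Consequently, every acceptance or transition of $\compose_\ttau(A)$ on an atom $\beta\in\At_{T_1}$ is \emph{literally} an acceptance or transition of $A$ on the atom $\ttau^{-1}(\beta)\in\At_{T_0}$.

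So suppose $\compose_\ttau(A)$ is $k$-dense, witnessed by non-empty subsets $S_1,\ldots,S_k\subseteq Q$, distinct atoms $\beta_1,\ldots,\beta_k\in\At_{T_1}$, and actions $p_{qj}\in\Sigma$. I would claim that the \emph{same} subsets $S_1,\ldots,S_k$, the \emph{same} actions $p_{qj}$, together with the pulled-back atoms $\alpha_i := \ttau^{-1}(\beta_i)$, witness the $k$-density of $A$. The three defining conditions of \Cref{def:kdense} then transfer immediately: since $\delta'(q,\beta_i)=\delta(q,\alpha_i)$, a state $q\in S_i$ accepts $\beta_i$ in $\compose_\ttau(A)$ precisely when it accepts $\alpha_i$ in $A$; similarly $\delta'(q,\beta_j)=\delta(q,\alpha_j)$ turns each witnessing transition $q\transition{\beta_j}{p_{qj}}q'$ of $\compose_\ttau(A)$ into exactly the transition $q\transition{\alpha_j}{p_{qj}}_A q'$, with the same $q'\in S_j$. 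The third condition mentions only the $S_i$ and the $p_{qj}$, both of which are unchanged, so it holds unaltered.

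The one genuinely non-routine point — and the main obstacle — is that \Cref{def:kdense} requires the atoms $\alpha_1,\ldots,\alpha_k$ to be \emph{distinct}, whereas $\ttau^{-1}$ need not be injective, so a priori two distinct $\beta_i,\beta_j$ could pull back to the same atom. I would dispose of this using determinism together with the non-emptiness of the $S_i$. Suppose toward a contradiction that $\alpha_i=\alpha_j$ for some $i\neq j$, and pick any $q\in S_i$ (possible as $S_i\neq\emptyset$). The first density condition gives $\delta(q,\alpha_i)=\textsf{accept}$, while the second gives $\delta(q,\alpha_j)=(p_{qj},q')$ for some $q'\in S_j$. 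But $\alpha_i=\alpha_j$ would force $\delta(q,\alpha_i)$ to equal both $\textsf{accept}$ and a pair $(p_{qj},q')$, which is impossible since $\delta$ is a function into $\{\textsf{accept},\textsf{reject}\}+\Sigma\times Q$. Hence the $\alpha_i$ are pairwise distinct, and the verification is complete. The argument is short, the only subtlety being this distinctness step, which is exactly where the determinism of the automaton is used.
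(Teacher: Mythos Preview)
Your proof is correct and follows essentially the same approach as the paper: pull the witnessing atoms back along $\ttau^{-1}$ while keeping the sets $S_i$ and actions $p_{qj}$ unchanged. Your explicit verification that the pulled-back atoms $\alpha_i=\ttau^{-1}(\beta_i)$ remain distinct (via determinism and non-emptiness of the $S_i$) is a genuine detail that the paper's terse proof glosses over.
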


\begin{proof}
By definition, there are
non-empty subsets $S_1, \ldots, S_k\subseteq Q$, distinct atoms $\alpha_1, \ldots, \alpha_k \in \At_{T_1}$ and, for all $1 \leq i, j \leq k$ and $q \in S_i$ an action $p_{qj} \in \Sigma$ satisfying the conditions of Definition~\ref{def:kdense}.
Let $\alpha'_i =\ttau^{-1}(\alpha_i)$.
It now suffices to use $\alpha'_i$ instead of $\alpha_i$ to see that $A$ is $k$-dense.
\end{proof}

It remains to prove a similar lemma for $\compose^\ssigma(A)$.
We first  prove an ``all-or-nothing'' lemma.

\begin{lemma}\label{lem:all-or-nothing}
Suppose that $\compose^\ssigma(A)$ is $k$-dense, and let
$S_1, \ldots, S_k$ be witnessing subsets of the state space of $\compose^\ssigma(A)$.
One of the following two conditions holds:
\begin{enumerate}
\item there do not exist elements $r\in S_i, r'\in S_j$ with $i\neq j$ that are in the same locale, or
\item there are $r_1\in S_1, \ldots, r_k\in S_k$ that are all in the same locale.
\end{enumerate}
\end{lemma}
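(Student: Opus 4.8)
The plan is to prove the dichotomy by assuming that alternative~(1) fails and deriving alternative~(2); since (1) either holds or fails, this establishes the disjunction. So suppose there are witnesses $r_i \in S_i$ and $r_j \in S_j$ with $i \neq j$ lying in the same locale, and write this common locale as the pair $(p,q)$, so that $r_i$ and $r_j$ are both of the form $(\cdot, q)$ whose first coordinate is a state of the subautomaton $\ssigma(p)$. My goal is to show that this one locale already contains a representative of \emph{every} cluster, i.e.\ that for each $m \leq k$ some state of $S_m$ lies in $(p,q)$; that is exactly condition~(2).

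The engine of the argument is a case distinction between local and non-local transitions, resting on two observations. First, a local transition out of a state $(q_p, q)$ stays in the same locale: by the first clause of $\delta'$ in \Cref{def:compose-sigma} it only moves within $\ssigma(p)$ and leaves the continuation component $q$ untouched. Second, \Cref{lem:non-local-transitions-determinacy} guarantees that two non-local transitions leaving the same locale on the same atom carry the same action. The interaction of these two facts with the third clause of $k$-density (equal actions into the same cluster force equal clusters) is what drives the contradiction.

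Concretely, I would fix an arbitrary $m \leq k$ and suppose, for contradiction, that no state of $S_m$ lies in the locale $(p,q)$; in particular $m \neq i$ and $m \neq j$. By the second clause of \Cref{def:kdense} there are transitions $r_i \transition{\alpha_m}{p_{r_i m}} s_i$ and $r_j \transition{\alpha_m}{p_{r_j m}} s_j$ with $s_i, s_j \in S_m$. Neither transition can be local: a local transition keeps its target in $(p,q)$, which would place a state of $S_m$ in that locale, contrary to assumption. Hence both are non-local, and since $r_i, r_j$ share the locale and read the same atom $\alpha_m$, \Cref{lem:non-local-transitions-determinacy} forces $p_{r_i m} = p_{r_j m}$. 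The third clause of \Cref{def:kdense} then yields $i = j$, contradicting $i \neq j$. So $S_m$ meets the locale after all, and as $m$ was arbitrary, condition~(2) holds.

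The step I expect to be the crux is the forced non-locality: I must argue that, whenever a cluster $S_m$ avoids the locale, the $k$-density transitions reaching $S_m$ from the two witnesses cannot be local, so that \Cref{lem:non-local-transitions-determinacy} becomes applicable and collides with the third density condition. The only real care needed is bookkeeping about which coordinate of a composed state $(q_p, q)$ a given transition alters --- the $\ssigma(p)$-component for local transitions versus the continuation component $q$ for non-local ones --- but once the local/non-local dichotomy is set up, no genuinely hard computation remains.
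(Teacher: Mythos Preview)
Your proposal is correct and follows essentially the same approach as the paper's proof: both assume the failure of (1) to obtain $r_i\in S_i$, $r_j\in S_j$ with $i\neq j$ in a common locale, pick an index $m$ whose cluster avoids that locale, use $k$-density to obtain $\alpha_m$-transitions into $S_m$, argue these must be non-local (since local transitions preserve the locale), apply \Cref{lem:non-local-transitions-determinacy} to equate the actions, and invoke the third density condition to force $i=j$. The only cosmetic difference is framing---you prove ``not (1) $\Rightarrow$ (2)'' directly, whereas the paper phrases it as ``not (1) and not (2) $\Rightarrow$ contradiction''---and your action labels $p_{r_i m}, p_{r_j m}$ are in fact more faithful to \Cref{def:kdense} than the paper's own shorthand.
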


\begin{proof}
Suppose, for the sake of a contradiction, that neither condition holds.
Then there are
(1)~$i, j \leq k$ with $i \neq j$, and $r \in S_i$, $r'\in S_j$ such that $r$ and $r'$ are in the same locale, and
(2)~$\ell\leq k$ such that there is no $r''\in S_\ell$ in the same locale as $r$ and $r'$.
The latter means that $i, j \neq \ell$.
Because $\compose^\ssigma(A)$ is $k$-dense, we can obtain $s,s'\in S_\ell$ such that  $r\transition{\alpha_\ell}{p_{i\ell}}s$ and $r'\transition{\alpha_\ell}{p_{j\ell}} s'$.
Because $r$ (resp.\ $r'$) is in a different locale than $s$ (resp.\ $s'$), these are non-local transitions; thus, by Lemma~\ref{lem:non-local-transitions-determinacy}, $p_{i\ell}=p_{j\ell}$.
By the third condition of $k$-density, we then find that $i = j$, contradicting that $i \neq j$.
\end{proof}

\begin{lemma}\label{lem:dense-sigma}
If $\compose^\ssigma(A)$ is triviality-free and  $k$-dense, then either $A$ is $k$-dense or $\ssigma(p)$ is $k$-dense for some $p$.
\end{lemma}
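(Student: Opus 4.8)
The plan is to apply the all-or-nothing dichotomy of \Cref{lem:all-or-nothing} to a witness $S_1, \dots, S_k$ for the $k$-density of $\compose^\ssigma(A)$, and to show that its first alternative yields $k$-density of $A$, while its second alternative yields $k$-density of one of the subautomata $\ssigma(p)$. Throughout I would use the triviality-free assumption, so that $\hat{\delta} = \delta$ and $\hat{\iota} = \iota$ and the ``exit'' clauses of \Cref{def:compose-sigma} can be read directly off the transition function of $A$. Since $\compose^\ssigma$ leaves the test alphabet unchanged, the atoms $\alpha_1, \dots, \alpha_k$ witnessing density of the composite can be reused verbatim for either $A$ or a subautomaton, and only the clusters and the action labels need to be rebuilt.

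I would treat the second alternative first, as it is the clean case. Here there are $r_1 \in S_1, \dots, r_k \in S_k$ all sharing a locale $q$; write $r_i = (s_i, q)$. Since each $r_i$ accepts $\alpha_i$, the accept clause of $\delta'$ forces $\delta(q, \alpha_i) = \mathsf{accept}$ for \emph{every} $i$. This is the crucial fact: an inter-cluster step on some $\alpha_j$ leaving locale $q$ cannot be non-local, because a non-local step would require $\hat\delta(q, \alpha_j) = \delta(q,\alpha_j)$ to be a pair rather than $\mathsf{accept}$. Hence every available inter-cluster transition from a state in locale $q$ is local, and so stays inside locale $q$ and inside a single subautomaton. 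Following the local transitions out of $r_1$ on $\alpha_2, \dots, \alpha_k$ therefore lands in states of $S_2, \dots, S_k$ that all lie in the same subautomaton $\ssigma(p)$ as $s_1$. I would then set $S'_i = \{ s \in Q_p : (s,q) \in S_i \}$, keep the atoms $\alpha_i$, and \emph{reuse the composite's action labels verbatim}, putting $p'_{sm} = p_{(s,q)m}$. The first condition is immediate from $\delta_p(s,\alpha_i) = \mathsf{accept}$; the second holds because the relevant transition is local, so $\delta_p(s,\alpha_j) = (p_{(s,q)j}, s')$ with $(s',q)\in S_j$, i.e.\ $s'\in S'_j$; and the third is inherited directly, since $p'_{sm} = p'_{s'm}$ literally says $p_{(s,q)m} = p_{(s',q)m}$, to which the third condition for $\compose^\ssigma(A)$ applies. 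No new choices are needed.

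In the first alternative the locales of distinct clusters are pairwise disjoint, so I would put $T_i = \{\, q : (s,q) \in S_i \text{ for some } s \,\}$; these are non-empty and disjoint. Now the situation is reversed: for $q \in T_i$ and $j \neq i$ the inter-cluster transition from $(s,q)$ on $\alpha_j$ cannot be local (a local step would keep locale $q$ and place a state of $S_j$ in a locale of $S_i$), so it is non-local and, by the exit clause of $\delta'$, exposes a genuine transition $q \transition{\alpha_j}{p''} q'$ of $A$ with $q' \in T_j$. Taking $p'_{qj} = p''$ gives the first two density conditions for $A$. For the third condition the key observation is that the label executed by $\compose^\ssigma(A)$ on such a step is the first action of $\ssigma(p'')$ on $\alpha_j$, a value determined by $p''$ and $\alpha_j$; hence if two $A$-labels $p'_{qm}, p'_{\tilde q m}$ coincide for $m \notin \{i,j\}$, then the corresponding composite labels coincide too, and the third condition for $\compose^\ssigma(A)$ forces the clusters to agree.

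The step I expect to be the main obstacle is exactly this third condition in the first alternative, together with the bookkeeping of labels across the two alphabets. Two difficulties compound. First, the composite's labels live in $\Sigma_1$ while $A$'s must live in $\Sigma_0$, so the reduction to the composite's third condition only goes through after noting that equal $\Sigma_0$-labels yield equal $\Sigma_1$-labels through $\ssigma$. Second, a state $q \in T_i$ \emph{accepts} $\alpha_i$ rather than transitioning on it, so there is no transition from which to read the ``self-index'' label $p'_{qi}$; this value must be supplied by hand and kept distinct from the genuine incoming $\alpha_i$-labels of the other clusters. I would argue that the binding content of the third condition is the comparison of genuine transition labels (the case $m \notin \{i,j\}$, which reduces cleanly to the composite), and that the self-index labels can always be chosen to avoid the finitely many incoming labels, the availability of which is itself constrained by the composite's third condition. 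Verifying this last point carefully is the delicate part of the argument; once it is in place, the two alternatives together establish the lemma.
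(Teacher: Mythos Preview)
Your plan is the paper's plan: invoke \Cref{lem:all-or-nothing} on a fixed witness, and in the two alternatives project the clusters either into the common subautomaton $\ssigma(p)$ (same-locale case) or into $A$ (disjoint-locales case), reading the new action labels from the local, respectively non-local, clause of $\delta'$ in \Cref{def:compose-sigma}. Your treatment of the same-locale case is exactly the paper's; in particular, the observation that $\delta(q,\alpha_i)=\mathsf{accept}$ for \emph{every} $i$ and the consequent locality of all inter-cluster steps is precisely how the paper argues.

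The one place you diverge is the ``self-index'' issue in the disjoint-locales case, and here you are making your own life harder than the paper does. The paper does not define $p'_{si}$ for $s\in S'_i$ at all and only verifies the third density condition for $m\notin\{i,j\}$: it sets $p'_{sj}$ to be the $\Sigma_0$-action appearing in $\delta(s,\alpha_j)$, which exists precisely when $j\neq i$, and checks that equal $\Sigma_0$-labels force equal $\Sigma_1$-labels via $\iota_{\ssigma(\cdot)}(\alpha_m)$, exactly as you sketch. This is enough for the whole chain of lemmas, because the only consumer of the third condition is \Cref{lem:all-or-nothing}, which applies it with $m=\ell$ distinct from both cluster indices, while \Cref{lem:Lk-dense} already supplies a witness satisfying the full condition at the start of the chain. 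Your plan to hand-craft the self-index labels and argue that $\Sigma_0$ is large enough to dodge all incoming $\alpha_i$-labels is therefore unnecessary --- and would in fact fail when $\Sigma_0$ is very small (take $|\Sigma_0|=1$). Drop that step; the rest of your argument coincides with the paper's proof.
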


\begin{proof}
Let $A = (Q, \delta, \iota)$ be a deterministic KAT automaton over $(\Sigma_0,T)$, and for $p \in \Sigma_0$, let $\ssigma(p) = (Q_p,\delta_p,\iota_p)$ be a deterministic KAT automaton over $(\Sigma_1, T)$.
We write $A' = (Q', \delta', \iota')$ for $\compose^\ssigma(A)$.
Because $A'$ is $k$-dense, we have $S_1, \dots, S_k \subseteq Q'$ and for all $i, j \leq k$ and $q \in S_i$ we have $p_{qj} \in \Sigma_1$, satisfying the conditions of $k$-density.
By \Cref{lem:all-or-nothing}, we can distinguish two cases.

Case (1): no $q \in S_i $ and $q' \in S_j$ for $i \neq j$ are in the same locale.
Now all transitions between states in $S_i$ and $S_j$ are non-local.
We claim that $A$ is $k$-dense.
For $i\leq k$, let $S'_i = \{ s \mid (r, s) \in S_i \}$, and for $1 \leq i, j \leq k$ and $s \in S_i'$, choose $p_{sj}'$ such that $\delta(s, \alpha_j) = (p_{sj}', -)$.
These witness the $k$-density of $A$:
\begin{enumerate}
    \item
    If $s \in S'_i$, then $(r, s) \in S_i$ for some $r$.
    Because $A'$ is $k$-dense, we then know that $(r, s)$ accepts $\alpha_i$ in $\compose^\ssigma(A)$; by definition of composition, $s$ must also accept $\alpha_i$ in $A$.
    \item
    If $s \in S'_i$, then $(r, s) \in S_i$ for some $r$, as before.
    Because $A'$ is $k$-dense, we then know that $(r, s) \transition{\alpha_j}{p_{(r,s)j}}_{A'} (r', s')$ for some $(r', s') \in S_j$.
    Because this is a non-local transition, by definition of (triviality-free) composition it must be the case that $s \transition{\alpha_j}{p_{sj}'}_A s'$.
    \item
    Finally, let $s \in S'_i$, $s' \in S'_j$ and $m \leq k$ with $i, j \neq m$ and $p_{sm}' = p_{s'm}'$.
    There exist $r$ and $r'$ such that $(r, s) \in S_i$ and $(r', s') \in S_j$.
    By $k$-density of $A'$, there also exist $q, q' \in S_m$ such that $(r, s) \transition{\alpha_m}{p_{(r, s)m}}_{A'} q$ and $(r', s') \transition{\alpha_m}{p_{(r', s')m}}_{A'} q'$.
    But because these are non-local, and the actions on the $\alpha_m$-transitions exiting $s$ and $s'$ in $A$ are $p_{sm}' = p_{s'm}'$, by definition of composition $p_{(r, s)m}$ and $p_{(r',s')m}$ are determined by the initialization of $\ssigma(p_{sm}') = \ssigma(p_{s'm}')$.
    Thus, $p_{(r,s)m} = p_{(r',s')m}$, and by $k$-density of $A'$, it follows that $i = j$ as desired.
\end{enumerate}

Case (2): there are $q_1 \in S_1, \ldots, q_k \in S_k$ that are all in the same locale.
This means that there exists a $p \in \Sigma_0$ and an $s \in Q$, as well as $r_1, \dots, r_k \in Q_p$ such that $q_i = (r_i, s)$ for all $i \leq k$.
Our claim is that $\ssigma(p)$ is $k$-dense, although building a witness for this will require some care.

We start by making two observations.
First, note that for $i \leq k$, we have that the $s$ above accepts $\alpha_i$ in $A$.
This follows by definition of composition, and the fact $(r_i, s) \in S_i$, so the latter accepts $\alpha_i$ in $A'$.
Second, if $i, j \leq k$ with $i \neq j$, and $(r, s) \in S_i$ as well as $q \in S_j$ such that $(r, s) \transition{\alpha_j}{p_{(r, s)j}}_{A'} q$, then the latter is a local transition --- in particular, $q$ is of the form $(r', s)$ with $r' \in Q_p$.
This follows by the fact that $s$ accepts $\alpha_j$ (our first observation), and the definition of composition.

With these in mind, we can now choose our witness for $k$-density of $\ssigma(p)$ as follows.
For $i \leq k$, we set $S_i' = \{ r \in Q_p \mid (r, s) \in S_i \}$, where $s$ is the fixed state obtained above.
Furthermore, for $i, j \leq k$ with $i \neq j$ and $r \in S_i'$, we choose $p_{rj}' = p_{(r, s)j}$.
We can now check the conditions.
\begin{enumerate}
    \item
    If $r \in S_i'$, then $(r, s) \in S_i$, and hence $r$ accepts $\alpha_i$ in $\ssigma(p)$, because $(r, s)$ accepts $\alpha_i$ in $A'$.
    \item
    If $r \in S_i'$ and $i \neq j$, then by the second observation we know that there exists an $r' \in Q_p$ such that $(r', s) \in S_j$, and $(r, s) \transition{\alpha_j}{p_{(r, s)j}}_{A'} (r', s)$ is a local transition.
    We then find that $r' \in S_j'$ by construction, and $r \transition{\alpha_j}{p_{rj}'}_{\ssigma(p)} r'$, because the former transition is local and $p_{rj}' = p_{(r, s)j}$.
    \item
    Finally, let $r \in S_i'$ and $r' \in S_j'$, with $m \leq k$ such that $p_{rm}' = p_{r'm}'$.
    Then $p_{(r, s)m} = p_{(r', s)m}$, and because $A$ is $k$-dense with $(r, s) \in S_i$ and $(r', s) \in S_j$, it follows that $i = j$ as desired.
    \qedhere
\end{enumerate}
\end{proof}

Since we defined $\compose^\ssigma_\ttau(A)$ as $\compose^\ssigma(\compose_\ttau(A))$,
the above two lemmas imply:

\begin{lemma}\label{lem:dense}
If a triviality-free composition $\compose^\ssigma_\ttau(A)$ is  $k$-dense, then either $A$ is $k$-dense, or $\ssigma(p)$ is $k$-dense for some $p$.
\end{lemma}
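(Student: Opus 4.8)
The plan is to decompose the statement directly along the definition $\compose^\ssigma_\ttau = \compose^\ssigma \circ \compose_\ttau$, so that the two preceding lemmas (\Cref{lem:dense-tau,lem:dense-sigma}) can be chained. Write $B = \compose_\ttau(A)$, so that the hypothesis reads: $\compose^\ssigma(B)$ is a triviality-free, $k$-dense automaton. The goal is then to trace the $k$-density ``downward'' through the two composition layers, one at a time.

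First I would apply \Cref{lem:dense-sigma} to $\compose^\ssigma(B)$. Its hypotheses are met verbatim: by assumption the outer composition $\compose^\ssigma(\compose_\ttau(A))$ is triviality-free and $k$-dense. The lemma therefore yields a dichotomy: either $B = \compose_\ttau(A)$ is $k$-dense, or $\ssigma(p)$ is $k$-dense for some $p \in \Sigma_0$. In the latter case there is nothing left to prove, as $\ssigma(p)$ being $k$-dense is already one of the two conclusions we seek.

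In the former case, I would then apply \Cref{lem:dense-tau}, whose statement is exactly ``if $\compose_\ttau(A)$ is $k$-dense then $A$ is $k$-dense.'' Invoking it on $B = \compose_\ttau(A)$ gives that $A$ is $k$-dense, which is the other conclusion. Combining the two cases, we obtain that either $A$ is $k$-dense or some $\ssigma(p)$ is $k$-dense, as required.

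I do not anticipate a genuine obstacle here, since all of the combinatorial work has been discharged in the constituent lemmas. The only point requiring minor care is bookkeeping the triviality-free hypothesis: because triviality-freeness is a property of the outer $\compose^\ssigma$ stage (\Cref{rem:trivfree-wlog}), the assumption that $\compose^\ssigma_\ttau(A)$ is triviality-free is precisely the assumption that $\compose^\ssigma(\compose_\ttau(A))$ is triviality-free, which is exactly what \Cref{lem:dense-sigma} consumes. No analogous condition is needed for the $\compose_\ttau$ stage, since \Cref{lem:dense-tau} requires no triviality hypothesis (the $\ttau$-substitution merely relabels atoms via $\ttau^{-1}$ and preserves the transition structure). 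Thus the two lemmas compose without friction.
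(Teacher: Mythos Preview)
Your proposal is correct and matches the paper's own argument exactly: the paper simply observes that since $\compose^\ssigma_\ttau(A) = \compose^\ssigma(\compose_\ttau(A))$, the result follows immediately by chaining \Cref{lem:dense-sigma} and \Cref{lem:dense-tau}. Your remarks about where the triviality-free hypothesis is consumed are accurate and mirror the intended reading.
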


\Cref{thm:main} follows immediately from \Cref{lem:Lk-dense,lem:k-states,lem:dense}.
Note that, when an automaton is obtainable through (repeated) composition from automata having strictly less than $k$ states, then by \Cref{rem:trivfree-wlog} it is also obtainable in the same way using only triviality-free compositions.

\begin{remark}
Call a GKAT expression \emph{$k$-sparse} if its language is recognized by a
deterministic KAT automaton that is not $k$-dense. It is not hard to see that
$\textsf{assert $b$}$, $p;p'$, $\Ifthenelse{b}{p}{p'}$ and $\while{b}{p}$ (the basic operations
of GKAT) are 2-sparse. Hence, every GKAT expression is 2-sparse.
\end{remark}

\section{Further Results}%
\label{sec:further}

We now present two minor results that help put our main theorem in context.
First, we construct an infinite family of regular control flow operations that can express all regular control flow operations.
Next, we  show that the deterministic fragment of Kleene algebra \emph{without} tests \emph{is} finitely generated, and we offer some thoughts about other variants of Kleene algebra.

\subsection{A Positive Result}%
\label{sec:positive}

The set of all regular control flow operations is trivially expressively complete, i.e., it generates the entire deterministic fragment of KAT\@.
In this section, we improve on this slightly by exhibiting a more restricted infinite set of regular control flow operations that enjoys the same property.

\begin{definition}[Free Automaton]
The \emph{free automaton with $k$ states} is the deterministic KAT automaton $A_k=(Q,\delta,\iota)$ over actions $\Sigma_k=\{ p_1, \ldots, p_k \}$ and tests $T_k=\{t_{i,j}\mid 0 \leq i,j \leq k \}$, where
\begin{mathpar}
    Q=\{q_i\mid 1 \leq i \leq k \}
    \and
    \delta(q_i,\alpha) =
        \begin{cases}
        \textsf{accept}  & \text{if $\alpha \leq b_{i,0}$} \\
        (p_j,q_j) & \text{if $\alpha \leq b_{i,j}$} \\
        \textsf{reject}  & \text{otherwise}
        \end{cases}
    \and
    \iota(\alpha) =
        \begin{cases}
        \textsf{accept}  & \text{if $\alpha \leq b_{0,0}$} \\
        (p_i,q_i) & \text{if $\alpha \leq b_{0,j}$} \\
        \textsf{reject} & \text{otherwise}
        \end{cases}
\end{mathpar}
where, for $0 \leq i, j \leq k$, we define $b_{i,j} \in \BA(T_k)$ to be the test that holds when $t_{i,j}$ is true, and for all $m < i$ we have that $t_{m,j}$ is \emph{not} true.
Note that this makes the $b_{i,j}$ mutually exclusive for fixed $i$.
\end{definition}

One way to think of the control flow operation $O_k$ is as providing the means to define a collection of $k$ auxiliary functions $f_1, \ldots, f_k$ by mutual tail recursion, as in the following pseudocode:
\[
\parbox{50mm}{\tt\raggedright
function $f_1()$ $\{$ \\
\mbox{~~} if $b_{1,0}$ then $\{e_{1,0}\}$ \\
\mbox{~~} else if $b_{1,1}$ then $\{e_{1,1};f_1()\}$ \\
\mbox{~~} \ldots \\
\mbox{~~} else if $b_{1,k}$ then $\{e_{1, k}; f_k()\}$ \\ 
\mbox{~~} else assert false \\
$\}$
}
\hspace{10mm}\cdots\hspace{10mm}
\parbox{50mm}{\tt\raggedright
function $f_k()$ $\{$ \\ 
\mbox{~~} if $b_{k,0}$ then $\{e_{k,0}\}$ \\
\mbox{~~} else if $b_{k,1}$ then $\{e_{k,1};f_1()\}$ \\
\mbox{~~} \ldots \\
\mbox{~~} else if $b_{k,k}$ then $\{e_{k,k}; f_k()\}$ \\ 
\mbox{~~} else assert false \\
$\}$
}
\]
This makes it clear that the program $F$ from~\Cref{definition:repeat-while-changes-semantics},
can be expressed using $O_2$ as an operator.

A free automaton on $k$ states can thus be seen as an ``automaton template'', with transitions between any pair of states and the option to accept anywhere.
Recovering a given automaton from a free automaton (up to equivalence) using composition is a matter of choosing the right parameters.

\begin{proposition}%
\label{prop:free-automata}
For every deterministic KAT automaton $A$ over $(\Sigma,T)$, there exists a $k>0$ and maps $\ssigma, \ttau$, such that $L(A)=L(\compose^\ssigma_\ttau (A_k))$, and for all $p\in\Sigma_k$, $L(\ssigma(p)) = L(p')$ for some $p'\in\Sigma$.
\end{proposition}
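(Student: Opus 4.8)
The plan is to realize $A$ inside the free automaton $A_k$ by choosing the parameters $\ttau$ and $\ssigma$ so that the composed automaton is language-equivalent to $A$. The single genuine obstruction is that in $A_k$ the action labelling a transition is \emph{determined by its target state}: every edge into $q_j$ carries the action $p_j$, and since the side condition $L(\ssigma(p))=L(p')$ forces each abstract action to be mapped to exactly one concrete action, this uniformity survives composition. A general deterministic KAT automaton $A$, however, may enter one state via several different actions. First I would therefore refine $A$ so that this obstruction disappears.

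Concretely, I would build an automaton $A'$ whose states are the pairs $(q,p)$ for which $A$ has a transition (in $\delta$ or $\iota$) into $q$ labelled by the action $p$. Acceptance is inherited, $(q,p)$ accepting $\alpha$ iff $\delta(q,\alpha)=\textsf{accept}$, and each transition $\delta(q,\alpha)=(p',q')$ of $A$ is copied to $(q,p)\transition{\alpha}{p'}_{A'}(q',p')$, with $\iota$ treated analogously. By construction every state of $A'$ has a unique incoming action, the target $(q',p')$ of each copied edge is again a state, and the map forgetting the second coordinate sends runs of $A'$ bijectively onto runs of $A$ (the second coordinate is forced by the action taken), so $L(A')=L(A)$ and $A'$ is again deterministic.

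Next I would set $k=|Q_{A'}|$ (padding with an unreachable dummy state to guarantee $k>0$ in the degenerate case where $A'$ has no states) and fix a bijection between the states of $A'$ and the states $q_1,\dots,q_k$ of $A_k$, writing $\pi(j)\in\Sigma$ for the unique incoming action of the state corresponding to $q_j$, and treating the initial pseudostate as index $0$. I would then define $\ssigma(p_j)$ to be the two-state single-action automaton recognising $L(\pi(j))$, which immediately gives $L(\ssigma(p_j))=L(\pi(j))$ as required; note that this automaton never accepts at its pseudostate, so $\iota_{\ssigma(p_j)}(\alpha)\neq\textsf{accept}$ for every $\alpha$, whence the shortcutting of \Cref{def:delta-hat,def:iota-hat} is inert and the composition is automatically triviality-free. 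For the tests I would exploit determinism of $A'$: for each index $i$, the atoms on which state $i$ accepts, transitions to state $j$, or rejects partition $\At_T$ into pairwise disjoint sets $B_{i,0},B_{i,1},\dots,B_{i,k}$ together with a reject set, and I would choose $\ttau(t_{i,j})$ to be a Boolean expression over $T$ whose atoms are exactly $B_{i,j}$. Because the $B_{i,j}$ are already disjoint for fixed $i$, the priority encoding built into $b_{i,j}$ removes nothing, so $\ttau(b_{i,j})$ defines precisely $B_{i,j}$.

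Finally I would verify the equality. Using $\ttau^{-1}(\beta)\leq b \iff \beta\leq\ttau(b)$ (as $\ttau$ acts as a Boolean homomorphism on atoms), the construction makes $\compose_\ttau(A_k)$ literally $A'$ with each concrete action $\pi(j)$ replaced by the abstract action $p_j$; this is the content of \Cref{prop:apply-compose-1}. Applying $\compose^\ssigma$ with the single-action $\ssigma$ then renames each $p_j$ back to $\pi(j)$ --- at the language level, guarded composition with $L(\pi(j))=\{\gamma\,\pi(j)\,\gamma'\}$ forces $\gamma,\gamma'$ to match the surrounding atoms and so merely relabels the action --- giving $L(\compose^\ssigma_\ttau(A_k))=L(A')=L(A)$ by \Cref{prop:apply-compose}. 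The main difficulty is conceptual rather than computational: recognising that the free automaton's ``action follows the target state'' rigidity is exactly what the state-splitting refinement $A'$ is designed to absorb; once $A'$ is in hand, the choices of $\ssigma$ and $\ttau$ are essentially forced and the remaining verification is routine.
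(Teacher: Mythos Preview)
Your proposal is correct and follows essentially the same approach as the paper: both recognize that the only obstruction is the free automaton's ``action determined by target'' constraint, resolve it by splitting states so that each has a unique incoming action (you use pairs $(q,p)$ that actually occur, the paper simply duplicates every state $|\Sigma|$ times), and then read off $\ssigma$ and $\ttau$ in the forced way. Your write-up is in fact somewhat more careful than the paper's---you address the $k>0$ edge case, note triviality-freeness, and discuss why the priority encoding in $b_{i,j}$ is harmless---but the underlying argument is the same.
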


\begin{proof}
Let $A=(Q,\delta,\iota)$, let $Q=\{q_1, \ldots, q_k\}$ and let $\Sigma=\{p_1, \ldots, p_n\}$.
We may assume without loss of generality that, for each $q_i\in Q$, there is a unique $p_j\in\Sigma$ such that all transitions (including initial transitions) leading to $q_i$ involve $p_j$.
This can easily be ensured by duplicating each state $n$ times, replacing each transition $\delta(q_i,\alpha)=(p_j,q_{i'})$ by all transitions of the form $\delta(\widehat{q_i},\alpha)=(p_j,\widehat{q_{i'}})$ where $\widehat{q_i}$ is any copy of $q_i$ and $\widehat{q_{i'}}$ is the $j$-th copy of $q_{i'}$, and, similarly for initial transitions.
In this way, for every state, all transitions leading to it involve the same action.

We will now construct $\ssigma, \ttau$ such that $L(A)=L(\compose^\ssigma_\ttau (A_k))$.
Recall that $A_k$ is an automaton over action and test alphabet $\Sigma_k, T_k$. We define $\ssigma$ and $\ttau$
as follows, where $1\leq i,j\leq n$:
\begin{itemize}
\item $\ssigma(p_j)=A_p$, where $A_p$ is the single-state deterministic KAT automaton with language $L(p)$, and $p\in\Sigma$ is the unique action associated with the state $q_j$ of $A$.
\item $\ttau(t_{0,0})$ is the disjunction of all atoms $\alpha\in\At_T$ for which it holds that
   $\iota(\alpha)=\textsf{accept}$
\item $\ttau(t_{0,i})$ is the disjunction of all atoms $\alpha\in\At_T$ for which it holds that
   $\iota(\alpha)=(\ldots,q_i)$
\item $\ttau(t_{i,0})$ is the disjunction of all atoms $\alpha\in\At_T$ for which it holds that
   $\delta(q_i,\alpha)=\textsf{accept}$
\item $\ttau(t_{i,j})$ is the disjunction of all atoms $\alpha\in\At_T$ for which it holds that
   $\delta(q_i,\alpha)=(\ldots, q_j)$
\end{itemize}
Now $\compose^\ssigma_\ttau (A_k)$ is isomorphic to $A$, and thus it defines the same language.
\end{proof}

By \Cref{prop:automata-complete}, each $A_k$, being a deterministic KAT automaton, corresponds to a deterministic KAT expression $e_k$, and hence to a regular control flow operation $O_k$.
It follows from~\Cref{prop:free-automata} that the infinite set $\mathcal{O}=\{O_k\mid k>0\}$ generates the entire deterministic fragment of KAT\@.

\begin{remark}
According to our result in~\Cref{sec:main}, every set of regular control flow operations that generates the entire deterministic fragment of KAT must include operations whose corresponding automaton is $k$-dense for arbitrarily large $k>0$. The free automaton with $k$ states is in general not $k$-dense, but can easily be seen to be $\lfloor\sqrt{k}\rfloor$-dense. Therefore, the above set $\mathcal{O}=\{O_k\mid k>0\}$ does indeed include operations whose corresponding automaton is $k$-dense for arbitrarily large $k>0$.
\end{remark}

\subsection{Other Variants of Kleene Algebra}%
\label{sec:variations}

Consider the subset of regular control flow operations that do not have any tests as parameters.
These KAT expressions can equivalently be regarded as expressions in the language of plain Kleene algebra \emph{without tests} (KA), and so we refer to them as the \emph{deterministic fragment of KA}.
It turns out that this set of operations \emph{is} finitely generated, which we record below.

\begin{theorem}
The deterministic fragment of KA is generated by the sequential composition operator together
with the constant operations $\mathsf{true}$ and $\mathsf{false}$.
\end{theorem}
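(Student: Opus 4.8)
The plan is to exploit the fact that an empty test alphabet collapses the structure of guarded strings, making the deterministic languages trivial. First I would observe that when $T = \emptyset$ there is a single atom, namely $\emptyset$, so every guarded string over $(\Sigma, \emptyset)$ has the shape $\emptyset p_{1} \emptyset p_{2} \emptyset \cdots \emptyset p_{m} \emptyset$ and is thus determined by the word $p_1 \cdots p_m$. Under this correspondence, guarded composition $\diamond$ becomes ordinary concatenation, and $L(\mathsf{true}) = \{\emptyset\}$, $L(\mathsf{false}) = \emptyset$, and $L(p) = \{\emptyset p \emptyset\}$ correspond to the empty word, the empty language, and the single letter $p$, respectively.

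The key step is to characterize the deterministic guarded languages in this setting: I claim that a guarded language $L$ over $(\Sigma, \emptyset)$ is deterministic, in the sense of \Cref{def:deterministic}, if and only if $|L| \leq 1$. Indeed, take distinct $w, w' \in L$. Since every atom position carries the same symbol $\emptyset$, the first position at which $w$ and $w'$ disagree can never be an atom; hence either one of them is a proper prefix of the other (violating condition (i)), or the first disagreement falls on an action position (violating condition (ii)). Either way $L$ fails to be deterministic, so a deterministic $L$ has at most one element; the converse is immediate.

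Combining this characterization with \Cref{prop:deterministic-iff-language-deterministic}, a KA expression $e$ is deterministic precisely when $L(e)$ is empty or a singleton. Up to language-equivalence, and hence up to equality of the induced operation $O_e$, this leaves exactly three shapes: $\mathsf{false}$ (with $L(e) = \emptyset$), $\mathsf{true}$ (with $L(e) = \{\emptyset\}$), and a product $p_{i_1} \cdot p_{i_2} \cdots p_{i_m}$ of action variables with $m \geq 1$ (with $L(e) = \{\emptyset p_{i_1} \emptyset \cdots \emptyset p_{i_m} \emptyset\}$). The first two are exactly the constant operations in the proposed generating set. For the third, I would use associativity of $\diamond$ to write the singleton as $L(p_{i_1}) \diamond \cdots \diamond L(p_{i_m})$, which is obtained from the atoms $L(p_{i_j})$ by $m-1$ nested applications of the binary sequential composition operator, identifying argument slots whenever an action variable is repeated. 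Thus every deterministic KA operation is a substitution instance of sequential composition together with $\mathsf{true}$ and $\mathsf{false}$, which is the claim.

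The only genuinely nontrivial point is the characterization in the second paragraph; everything else is bookkeeping. I would additionally remark that all three generators are necessary: every guarded language built from the atoms $L(p)$ using sequential composition alone is a singleton consisting of a single word of length at least one, so neither $\emptyset$ nor $\{\emptyset\}$ can be produced without the constants $\mathsf{false}$ and $\mathsf{true}$, respectively.
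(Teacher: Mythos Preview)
Your proof is correct and follows essentially the same route as the paper's: both observe that with an empty test alphabet there is only one atom, so distinct guarded strings can never first differ at an atom position, forcing any deterministic guarded language to have cardinality at most one; from there both proofs conclude that a deterministic KA expression is language-equivalent to $\mathsf{false}$, $\mathsf{true}$, or a product of action letters. The paper phrases the key step by passing to the underlying regular language $L_R(e)$ over $\Sigma$, whereas you work directly with guarded strings and invoke \Cref{def:deterministic} and \Cref{prop:deterministic-iff-language-deterministic}, but this is purely cosmetic.
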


\begin{proof}
Clearly, every term built from atomic programs, $\mathsf{true}$ and $\mathsf{false}$, using
sequential composition, is deterministic. It therefore only remains to prove the converse, i.e.,
that every deterministic KA-term can be equivalently written in this form.

Let $e$ be any deterministic KA-term. Now, $e$ can be viewed as a regular expression over $\Sigma$.\footnote{Note that every variable-free test expression $b$ is equivalent to $\mathsf{true}$ or to $\mathsf{false}$ and hence can be treated as 0 or as 1 respectively,  where 0 is the regular expression that defines the empty language and 1 is the regular expression that defines the language consisting of the empty word.}
If the regular language of $e$, denoted $L_R(e)$, contains two distinct strings $s, s'$, then it is easy to see that $e$ is non-deterministic (it suffices to turn $s$ and $s'$ into guarded strings using an arbitrary atom).
Thus, either $L_R(e) = \emptyset$ or $L_R(e) = \{ s \}$ for some string $s$.
But then either $L(e)=\emptyset$, or the guarded strings in $L(e)$ are precisely all the guarded strings of the form $\alpha_1 p_1 \alpha_2 \cdots \alpha_n p_n \alpha_{n+1}$, where $s = p_1p_2 \cdots p_n$.
In the former case, $e$ is equivalent to the KA-term $\mathsf{false}$.
In the latter case, $e$ is equivalent to a KA-term $p_1 \cdot \ldots \cdot p_n$ if $n > 0$, and $e$ is equivalent to $\mathsf{true}$ otherwise. 
\end{proof}

In other words, the presence of tests in the language plays a crucial role in our main
result for KAT\@.
We conjecture that our results regarding the deterministic fragment of KAT can be adapted to the deterministic fragment of \emph{Kleene Algebra with Domain (KAD)}~\cite{KAD} and of \emph{Propositional Dynamic Logic} (or, equivalently, Kleene Algebra with Antidomain~\cite{sedlar-2023}).

\section{Discussion}%
\label{sec:discussion}
Our main result can be summarized informally, as follows: \emph{compositional control flow languages consisting of a finite set of operators, such as GKAT, are inherently expressively incomplete.}
Nevertheless, we believe that this result generalizes and adds more depth to earlier insights about control flow.
It does not invalidate GKAT as a topic worth studying.
On the contrary, we think there are still interesting matters to be settled.
If anything, we take this result as encouragement to study the expressiveness of different kinds of control flow primitives, such as exceptions or other limited forms of non-local control flow.
We now briefly go over a number of relevant works in this area.

\emph{Program schematology}, i.e., the study of control flow, traces its origins to the aforementioned result by \citet{BohmJacopini1966} about realizing all control flow using composition and iteration.
The transformation from op.\ cit.\ uses a variable to keep track of the current decision point within the program --- not unlike one of the ``workarounds'' we considered to implement the program $F$.
The algorithm proposed by \citet{erosa-hendren-1994}, which gradually eliminates non-local control flow from a program, is perhaps more practical; see also~\cite{casse-etal-2002} for an implementation.


\citet{knuth-floyd-1971} first connected control flow to regular languages, and relay a proof by Hopcroft that can be viewed as a first inexpressivity result.
\citet{ashcroft.manna:translation} studied automatic goto-elimination, and showed that variables are necessary to achieve all control flow; in fact, the program $F$ discussed earlier was partly inspired by them.
\citet{peterson-kasami-tokura-1973} considered the expressivity of multi-level exits from loops, showing that single-level loop exits (like \emph{break}) do not suffice to express all control flow.
Taking a more systematic approach, \citet{Kosaraju} showed that there exists a hierarchy of expressivity of multi-level break statements: the ability to break $n+1$-deeply nested loops affords more expressivity than being able to break $n$-deeply nested loops.

Our result, like Kosaraju's, implies a hierarchy to deterministic control flow constructs.
However, we focused on program \emph{composition}; because non-local control flow (e.g., \emph{break}) makes sense only in the context of a larger program, it is outside of the scope of our techniques.
We are excited about KAT as a \emph{lingua franca} for rigorously studying expressivity of control flow constructs --- though we are hardly the first to do so~\cite{angus-kozen-2001,KozenTseng,grathwohl-etal-2014,kozen-2008}.
The interested reader is directed to~\cite{zhang-etal-2024} for further research on how to realize non-local control flow within deterministic KAT, and efficiently decide equivalence\@.

\citet{milner-1984} investigated regular expressions \emph{up to bisimilarity}.
A full characterization of the expressivity of this semantics was left open until recently~\cite{grabmayer-2022,grabmayer-fokkink-2020}.
It turns out that the expressivity of a certain fragment of GKAT is intricately related to this work~\cite{kappe-schmid-silva-2023}.
Underlying the efforts in this area is a direct connection of expressivity to completeness of equational reasoning systems for programs; put briefly, if the automata expressed by a certain fragment can be precisely characterized, then a completeness theorem is usually not too far away.
In future work, we hope to consider the expressivity problem of the full calculus of GKAT, and by extension its open completeness problem~\cite{GKAT,GKATBisim,kappe-schmid-silva-2023}, using the techniques developed in this article.

\citet{nivat-1972,nivat-1974} and subsequent authors studied \emph{recursive program schemes (RPSs)} as a formalism for reasoning about the control flow of (mutually) recursive programs.
Viewed through this lens, deterministic KAT automata are \emph{finite}, \emph{first-order} and \emph{linear tail-recursive} (uninterpreted) RPSs. See also \citet{Garland1973} for relationships between RPSs and automata.
Our main theorem then says that there is no finite set of such RPSs into which all other such RPSs can be hierarchically decomposed.
It would  be interesting to study inductive characterizations of the operators induced by deterministic KAT automata, as was done for RPSs by \citet{debakker-meertens-1972}.

Finally, from the perspective of the relational semantics, \emph{intersection} and  \emph{subtraction} are natural functionality-preserving operations on binary relations.
They are, however, not regular (cf.\ \Cref{lem:intersection-nonregular}).
Consequently, our results do cover extensions of GKAT with these operations.
We wonder if some version still holds for KAT extended with these operators.

\begin{acks}
B.~ten~Cate was supported by \grantsponsor{baldermc}{the European Union’s Horizon 2020 research and innovation programme}{https://www.eeas.europa.eu/eeas/horizon-2020_en} under grant no.\ \grantnum{baldermc}{101031081} (LLAMA).
T.~Kappé was partially supported by \grantsponsor{tobiasmc}{the European Union’s Horizon 2020 research and innovation programme}{https://www.eeas.europa.eu/eeas/horizon-2020_en} under grant no.\ \grantnum{tobiasmc}{101027412} (VERLAN), and partially by \grantsponsor{tobiasveni}{the Dutch research council (NWO)}{https://www.nwo.nl/} under grant no.\ \grantnum{tobiasveni}{VI.Veni.232.286} (ChEOpS).

We thank Larry Moss for a helpful discussion during the revision stage.
\end{acks}

\printbibliography%

@article{Garland1973,
  author       = {Stephen J. Garland and David C. Luckham},
  title        = {Program Schemes, Recursion Schemes, and Formal Languages},
  journal      = {J. Comput. Syst. Sci.},
  volume       = {7},
  number       = {2},
  pages        = {119--160},
  year         = {1973},
  doi          = {10.1016/S0022-0000(73)80040-6},
}

@article{KAD,
  author       = {Jules Desharnais and Bernhard M{\"{o}}ller and Georg Struth},
  title        = {Kleene algebra with domain},
  journal      = {{ACM} Trans. Comput. Log.},
  volume       = {7},
  number       = {4},
  pages        = {798--833},
  year         = {2006},
  doi          = {10.1145/1183278.1183285},
}

@article{FisherLadner79,
  author       = {Michael J. Fischer and Richard E. Ladner},
  title        = {Propositional Dynamic Logic of Regular Programs},
  journal      = {J. Comput. Syst. Sci.},
  volume       = {18},
  number       = {2},
  pages        = {194--211},
  year         = {1979},
  doi          = {10.1016/0022-0000(79)90046-1},
}

@misc{bogaerts2024preservation,
  author       = {Bart Bogaerts and Balder ten Cate and Brett McLean and Jan Van den Bussche},
  title        = {Preservation theorems for Tarski's relation algebra},
  year         = {2023},
  doi          = {10.48550/ARXIV.2305.04656},
}

@inproceedings{GKAT,
  author       = {Steffen Smolka and Nate Foster and Justin Hsu and Tobias Kapp{\'{e}} and Dexter Kozen and Alexandra Silva},
  title        = {Guarded Kleene algebra with tests: verification of uninterpreted programs in nearly linear time},
  booktitle    = {POPL},
  pages        = {61:1--61:28},
  year         = {2020},
  doi          = {10.1145/3371129},
}

@article{BohmJacopini1966,
  author       = {Corrado B{\"{o}}hm and Giuseppe Jacopini},
  title        = {Flow diagrams, turing machines and languages with only two formation rules},
  journal      = {Commun. {ACM}},
  volume       = {9},
  number       = {5},
  pages        = {366--371},
  year         = {1966},
  doi          = {10.1145/355592.365646},
}

@inproceedings{KozenTseng,
  author       = {Dexter Kozen and Wei{-}Lung Dustin Tseng},
  title        = {The {B\"{o}hm}-{Jacopini} Theorem Is False, Propositionally},
  booktitle    = {MPC},
  pages        = {177--192},
  year         = {2008},
  doi          = {10.1007/978-3-540-70594-9_11},
}

@article{KAT,
  author       = {Dexter Kozen},
  title        = {Kleene Algebra with Tests},
  journal      = {{ACM} Trans. Program. Lang. Syst.},
  volume       = {19},
  number       = {3},
  pages        = {427--443},
  year         = {1997},
  doi          = {10.1145/256167.256195},
}

@inproceedings{GKATBisim,
  author       = {Todd Schmid and Tobias Kapp{\'{e}} and Dexter Kozen and Alexandra Silva},
  title        = {Guarded Kleene Algebra with Tests: Coequations, Coinduction, and Completeness},
  booktitle    = {ICALP},
  volume       = {198},
  pages        = {142:1--142:14},
  year         = {2021},
  doi          = {10.4230/LIPICS.ICALP.2021.142},
}

@article{Kosaraju,
  author       = {S. Rao Kosaraju},
  title        = {Analysis of Structured Programs},
  journal      = {J. Comput. Syst. Sci.},
  volume       = {9},
  number       = {3},
  pages        = {232--255},
  year         = {1974},
  doi          = {10.1016/S0022-0000(74)80043-7},
}

@article{peterson-kasami-tokura-1973,
  author       = {W. Wesley Peterson and Tadao Kasami and Nobuki Tokura},
  title        = {On the Capabilities of While, Repeat, and Exit Statements},
  journal      = {Commun. {ACM}},
  volume       = {16},
  number       = {8},
  pages        = {503--512},
  year         = {1973},
  doi          = {10.1145/355609.362337},
}

@inproceedings{erosa-hendren-1994,
  author       = {Ana M. Erosa and Laurie J. Hendren},
  title        = {Taming Control Flow: {A} Structured Approach to Eliminating Goto Statements},
  booktitle    = {ICCL},
  pages        = {229--240},
  publisher    = {{IEEE} Computer Society},
  year         = {1994},
  doi          = {10.1109/ICCL.1994.288377},
}

@inproceedings{grathwohl-etal-2014,
  author       = {Niels Bj{\o}rn Bugge Grathwohl and Dexter Kozen and Konstantinos Mamouras},
  title        = {{KAT} + B!},
  booktitle    = {CSL-LICS},
  pages        = {44:1--44:10},
  year         = {2014},
  doi          = {10.1145/2603088.2603095},
}

@inproceedings{kozen-2008,
  author       = {Dexter Kozen},
  title        = {Nonlocal Flow of Control and Kleene Algebra with Tests},
  booktitle    = {LICS},
  pages        = {105--117},
  year         = {2008},
  doi          = {10.1109/LICS.2008.32},
}

@article{kozen-2003,
  title     = {Automata on Guarded Strings and Applications},
  author    = {Dexter Kozen},
  journal   = {Matematica Contemporanea},
  volume    = {24},
  pages     = {117--139},
  year      = {2003}
}

@article{chen-pucella-2004,
  author    = {Hubie Chen and Riccardo Pucella},
  title     = {A coalgebraic approach to {K}leene algebra with tests},
  journal   = {Theor. Comput. Sci.},
  volume    = {327},
  number    = {1-2},
  pages     = {23--44},
  year      = {2004},
  doi       = {10.1016/j.tcs.2004.07.020},
}

@incollection{kozen-2017,
  title     = {On the Coalgebraic Theory of {K}leene Algebra with Tests},
  author    = {Dexter Kozen},
  booktitle = {Rohit Parikh on Logic, Language and Society},
  pages     = {279--298},
  year      = {2017},
  publisher = {Springer}
}

@article{kleene-1956,
  author    = {Stephen C. Kleene},
  title     = {Representation of Events in Nerve Nets and Finite Automata},
  journal   = {Automata Studies},
  year      = {1956},
  pages     = {3--41},
}

@article{antimirov-1996,
  author    = {Valentin M. Antimirov},
  title     = {Partial Derivatives of Regular Expressions and Finite Automaton Constructions},
  journal   = {Theor. Comput. Sci.},
  volume    = {155},
  number    = {2},
  pages     = {291--319},
  year      = {1996},
  doi       = {10.1016/0304-3975(95)00182-4},
}

@inproceedings{kozen-smith-1996,
  author    = {Dexter Kozen and Frederick Smith},
  title     = {Kleene Algebra with Tests: Completeness and Decidability},
  booktitle = {Proc. Computer Science Logic ({CSL})},
  pages     = {244--259},
  year      = {1996},
  doi       = {10.1007/3-540-63172-0_43},
}

@techreport{cohen-kozen-smith-1996,
  author    = {Ernie Cohen and Dexter Kozen and Frederick Smith},
  title     = {The Complexity of {K}leene Algebra with Tests},
  number    = {TR96-1598},
  year      = {1996},
  month     = {July},
  publisher = {Cornell University},
}

@article{brzozowski-1964,
  author    = {Janusz A. Brzozowski},
  title     = {Derivatives of Regular Expressions},
  journal   = {J. {ACM}},
  volume    = {11},
  number    = {4},
  pages     = {481--494},
  year      = {1964},
  doi       = {10.1145/321239.321249},
}

@inproceedings{ashcroft.manna:translation,
  author =        {Edward A. Ashcroft and Zohar Manna},
  booktitle =     {IFIP},
  pages =         {250--255},
  title =         {The translation of GOTO programs into WHILE programs},
  volume =        {1},
  year =          {1972},
}

@article{knuth-floyd-1971,
  author       = {Donald E. Knuth and Robert W. Floyd},
  title        = {Notes on Avoiding "go to" Statements},
  journal      = {Inf. Process. Lett.},
  volume       = {1},
  number       = {1},
  pages        = {23--31},
  year         = {1971},
  doi          = {10.1016/0020-0190(71)90018-4},
}

@techreport{angus-kozen-2001,
  author =        {Allegra Angus and Dexter Kozen},
  institution =   {Cornell University},
  month =         {July},
  number =        {TR2001-1844},
  title =         {Kleene Algebra with Tests and Program Schematology},
  year =          {2001},
}

@article{casse-etal-2002,
  author       = {Hugues Cass{\'{e}} and Louis F{\'{e}}raud and Christine Rochange and Pascal Sainrat},
  title        = {Une approche pour r{\'{e}}duire la complexit{\'{e}} du flot de contr{\^{o}}le dans les programmes {C}},
  journal      = {Tech. Sci. Informatiques},
  volume       = {21},
  number       = {7},
  pages        = {1009--1032},
  year         = {2002},
}

@article{milner-1984,
  author       = {Robin Milner},
  title        = {A Complete Inference System for a Class of Regular Behaviours},
  journal      = {J. Comput. Syst. Sci.},
  volume       = {28},
  number       = {3},
  pages        = {439--466},
  year         = {1984},
  doi          = {10.1016/0022-0000(84)90023-0},
}

@inproceedings{grabmayer-fokkink-2020,
  author       = {Clemens Grabmayer and Wan J. Fokkink},
  title        = {A Complete Proof System for 1-Free Regular Expressions Modulo Bisimilarity},
  booktitle    = {LICS},
  pages        = {465--478},
  publisher    = {{ACM}},
  year         = {2020},
  doi          = {10.1145/3373718.3394744},
}

@inproceedings{grabmayer-2022,
  author       = {Clemens Grabmayer},
  title        = {Milner's Proof System for Regular Expressions Modulo Bisimilarity is Complete},
  booktitle    = {LICS},
  pages        = {34:1--34:13},
  publisher    = {{ACM}},
  year         = {2022},
  doi          = {10.1145/3531130.3532430},
}

@inproceedings{kappe-schmid-silva-2023,
  author       = {Tobias Kapp{\'{e}} and Todd Schmid and Alexandra Silva},
  title        = {A Complete Inference System for Skip-free Guarded Kleene Algebra with Tests},
  booktitle    = {ESOP},
  pages        = {309--336},
  year         = {2023},
  doi          = {10.1007/978-3-031-30044-8_12},
}

@misc{sedlar-2023,
  author       = {Igor Sedl{\'{a}}r},
  title        = {Kleene Algebra with Dynamic Tests: Completeness and Complexity},
  year         = {2023},
  doi          = {10.48550/ARXIV.2311.06937},
}

@inproceedings{pous-2015,
  author       = {Damien Pous},
  title        = {Symbolic Algorithms for Language Equivalence and Kleene Algebra with Tests},
  booktitle    = {POPL},
  pages        = {357--368},
  year         = {2015},
  doi          = {10.1145/2676726.2677007},
}

@inproceedings{kozen-patron-2000,
   author    = {Dexter Kozen and Maria-Christina Patron},
   title     = {Certification of Compiler Optimizations Using Kleene
Algebra with Tests},
   booktitle = {CL},
   pages     = {568--582},
   year      = {2000},
   doi       = {10.1007/3-540-44957-4_38},
}

@inproceedings{anderson-foster-guha-etal-2014,
   author    = {Carolyn Jane Anderson and Nate Foster and Arjun Guha and
Jean-Baptiste Jeannin and Dexter Kozen and Cole Schlesinger and David
Walker},
   title     = {NetKAT: semantic foundations for networks},
   booktitle = {POPL},
   pages     = {113--126},
   year      = {2014},
   doi       = {10.1145/2535838.2535862},
}

@inproceedings{nivat-1972,
  author       = {Maurice Nivat},
  title        = {Langages alg{\'{e}}briques sur le magma libre et s{\'{e}}mantique des sch{\'{e}}mas de programme},
  booktitle    = {ICALP},
  pages        = {293--308},
  year         = {1972},
}

@techreport{nivat-1974,
  title      = {On the interpretation of recursive program schemes},
  author     = {Maurice Nivat},
  year       = {1974},
  doi        = {10.22028/D291-26058},
}

@techreport{debakker-meertens-1972,
 title       = {Simple recursive program schemes and inductive assertions},
 author      = {Jaco {de Bakker} and Lambert Meertens},
 number      = {MR 142/72},
 institute   = {Mathematisch Centrum},
 year        = {1972},
 url         = {https://ir.cwi.nl/pub/9134}
}

@inproceedings{zhang-etal-2024,
   title      = {CF-GKAT: Efficient Validation of Control-Flow Transformations},
   author     = {Cheng Zhang and Tobias Kappé and David E. Narváez and Nico Naus},
   booktitle = {POPL},
   year      = {2024},
   doi       = {10.1145/3704857},
   pages     = {21:1--21:27}
}

\end{document}
\endinput